%% file: main_inverseSFs.tex
\documentclass[a4paper,USenglish,cleveref, autoref, thm-restate,anonymous,envcountsame]{llncs}
\input{config}

\newif\ifpaper
\papertrue 
\paperfalse 

\usepackage{currfile}
\usepackage{lineno}
\usepackage{filehook}
\usepackage{xstring}
\usetikzlibrary{positioning, calc}

\usepackage{graphicx}

\title{(Sets of) Complement Scattered Factors} 

\titlerunning{Complement Scattered Factors} 

 \author{Duncan Adamson\inst{1} \and 
 Pamela Fleischmann\inst{2} \and
 Annika Huch\inst{2}}

 \institute{University of St Andrews, UK\\
 	\email{duncan.adamson@st-andrews.ac.uk}\and Kiel University, Germany, \email{$\{$fpa,ahu$\}$@informatik.uni-kiel.de}}

\authorrunning{D. Adamson \and P. Fleischmann, \and A. Huch}

\newtheorem{observation}[theorem]{Observation}

\begin{document}
%
	%
	
\maketitle

\input{abstract}


\section{Introduction}
\input{intro}

\section{Preliminaries}\label{sec:prelims}
\input{prelims}

\section{Computing $C(w,u)$}\label{algo}
\input{computingC}

\section{Complement Scattered Factors}\label{sec:csf}
\input{results}
\input{singletonEandCsets}
\input{selfshuffle}

\section{Conclusion}
\input{conclusion}




\bibliographystyle{plainurl}
\bibliography{refs}

\newpage

\appendix


\ifpaper
\section{Proofs}\label{appendix:proofs}
\input{appendixproofs}
\else
\fi


\end{document}

%% file: config.tex
\usepackage[T1]{fontenc}
\usepackage{graphicx}
\usepackage{amssymb}
\usepackage{amsmath}
\usepackage{mathtools}
\usepackage{todonotes}
\usepackage{shuffle}
\usepackage{diagbox}
\usepackage{tabularx,booktabs}
\newcolumntype{Y}{>{\centering\arraybackslash}X}

\usepackage{algorithm}
\usepackage{listings}

\usepackage{graphicx}
\usepackage{algpseudocode}
\usepackage[capitalise]{cleveref}

\newcommand{\N}{\mathbb{N}}

\def\ta{\mathtt{a}}
\def\tb{\mathtt{b}}
\def\tc{\mathtt{c}}

\def\tn{\mathtt{n}}

\def\r{\operatorname{r}}

\def\m{\operatorname{m}}

\renewcommand{\epsilon}{\varepsilon}
\renewcommand{\phi}{\varphi}

\newcommand{\disjoincup}{\dot{\cup}}

\DeclareMathOperator{\ar}{ar}

\DeclareMathOperator{\al}{alph}
\DeclareMathOperator\ScatFact{ScatFact}

\DeclareMathOperator{\letters}{alph}
\DeclareMathOperator{\Fact}{Fact}

\DeclareMathOperator{\inner}{in}

\DeclareMathOperator{\cond}{cond}
\DeclareMathOperator{\SAS}{SAS}

\DeclareMathOperator{\sfe}{sf}

\DeclareMathOperator{\perfectshuffle}{\underline{\shuffle}}

\newcommand{\harpvecsign}{\scriptscriptstyle\leftharpoonup}
\newcommand{\harpoonvec}[2]{%
	\ifx\displaystyle#1\doalign{$\harpvecsign$}{#1#2}\fi
	\ifx\textstyle#1\doalign{$\harpvecsign$}{#1#2}\fi
	\ifx\scriptstyle#1\doalign{\scalebox{.6}[.9]{$\harpvecsign$}}{#1#2}\fi
	\ifx\scriptscriptstyle#1\doalign{\scalebox{.5}[.8]{$\harpvecsign$}}{#1#2}\fi
}
\newcommand{\doalign}[2]{%
	{\vbox{\offinterlineskip\ialign{\hfil##\hfil\cr#1\cr$#2$\cr}}}%
}

\def\nth#1{#1$^{\text{th}}$}

%% file: abstract.tex
\begin{abstract}
Starting in the 1970s with the fundamental work of Imre Simon, \emph{scattered factors} (also known as subsequences or scattered subwords) have remained a consistently and heavily studied object. 
The majority of work on scattered factors can be split into two broad classes of problems: given a word, what information, in the form of scattered factors, are contained, and which are not.
In this paper, we consider an intermediary problem, introducing the notion of \emph{complement scattered factors}. Given a word $w$ and a scattered factor $u$ of $w$, the complement scattered factors of $w$ with regards to $u$, $C(w, u)$, is the set of scattered factors in $w$ that can be formed by removing any embedding of $u$ from $w$. 
This is closely related to the \emph{shuffle} operation in which two words are intertwined, i.e., we extend previous work relating to the shuffle operator, using knowledge about scattered factors.
Alongside introducing these sets, we provide combinatorial results on the size of the set $C(w, u)$, an algorithm to compute the set $C(w, u)$ from $w$ and $u$ in $O(\vert w \vert \cdot \vert u \vert \binom{w}{u})$ time, where $\binom{w}{u}$ denotes the number of embeddings of $u$ into $w$, an algorithm to construct $u$ from $w$ and $C(w, u)$ in $O(\vert w \vert^2 \binom{\vert w \vert}{\vert w \vert - \vert u \vert})$ time, and an algorithm to construct $w$ from $u$ and $C(w, u)$ in $O(\vert u \vert \cdot \vert w \vert^{\vert u \vert + 1})$ time.
\end{abstract}

%% file: intro.tex
For a given word $w$, a \emph{scattered factor} (\emph{(scattered) subword or subsequence}) of $w$ 
is a word that is obtained by deleting letters from $w$ while preserving the order of the remaining letters, 
i.e., formally a word $u$ of length $n$ is a scattered factor of the word $w$ if there exist, possibly empty words, $v_1, \ldots, v_{n+1}$ such that $w = v_1 u[1] v_2 u[2] \cdots v_n u[n] v_{n+1}$ where $u[i]$ 
denotes the \nth{$i$} letter of $u$. For instance, $\mathtt{pile}, \mathtt{nap}$, and $\mathtt{pipe}$ are scattered factors of $\mathtt{pineapple}$ where as $\mathtt{peel}$ and $\mathtt{lane}$ are not as the letters do not 
occur in the correct order.  As scattered factors serve as a complexity measure for a word's absent or existing information, the relation between words and their scattered factors is studied in combinatorics on words, stringology, language and automata theory (see \cite[Chapter 6, Subwords]{lothaire} for a profound overview on the topic). The research on scattered factors
started with the PhD thesis of Imre Simon in the 1970s with the fundamental work about piecewise-testable languages 
\cite{DBLP:conf/automata/Simon75}. Simon introduced the relationship, nowadays known as Simon's congruence, in which two words are said to be congruent if they have the same scattered factors up to a given length $k$. 
As yet, the index of this congruence relation is unknown. An approach to get more insights about the index is the investigation of $m$-nearly $k$-universal words \cite{DBLP:conf/dlt/BarkerFHMN20,DBLP:journals/tcs/FleischmannHHHMN23,DBLP:conf/stacs/DayFKKMS21,fleischmann2021scattered,DBLP:conf/cwords/SchnoebelenV23}: a word is called \emph{$m$-nearly $k$-universal} if exactly $m$ words of length $k$ are not scattered factors of it. The notion of universality is tightly related to the notion of $k$-richness (cf. \cite{KarandikarKS15,CSLKarandikarS,journals/lmcs/KarandikarS19}).
Recently, Simon's congruence got attention in language theory \cite{DBLP:journals/tcs/KimHKS23,DBLP:conf/dlt/KimHKS23,DBLP:conf/isaac/AdamsonFHKMN23,DBLP:conf/isaac/FazekasKMMS24,DBLP:journals/corr/abs-2503-18611}
and pattern matching \cite{DBLP:conf/rp/FleischmannKKMNSW23,DBLP:journals/tcs/KimKH24}.
Moreover, scattered factors have been studied in combinatorics on words 
\cite{DBLP:journals/fuin/KoscheKMS22,DBLP:journals/jcss/MateescuSY04},
algorithms and stringology \cite{DBLP:journals/tcs/Baeza-Yates91,DBLP:journals/algorithmica/BannaiIKKP24,DBLP:conf/soda/BringmannK18,DBLP:journals/aam/ForresterM23,DBLP:journals/jacm/Maier78,DBLP:journals/jacm/WagnerF74},
logics \cite{DBLP:journals/lmcs/BaumannGTZ23,DBLP:conf/lics/HalfonSZ17,10.1007/978-3-030-50026-9_21,10.1007/978-3-030-17127-8_20},
as well as language and automata theory \cite{DBLP:conf/isaac/AdamsonFHKMN23,KarandikarKS15,DBLP:journals/lmcs/KarandikarS19,simon1972hierarchies,DBLP:conf/automata/Simon75,zetzsche:LIPIcs.ICALP.2016.123}. From a more applicative point of view, scattered factors are used in 
bioinformatics \cite{DBLP:journals/bib/DoLL21,DBLP:conf/aaai/WangCG20}, formal
software verification \cite{DBLP:conf/lics/HalfonSZ17,zetzsche:LIPIcs.ICALP.2016.123}
and database theory \cite{artikis2017complex,FrochauxK23,Kleest-Meissner22,Kleest-Meissner23,SchmidSIGMOD}.
Further, scattered factors are also used to model corrupted data in the context
of the theoretical problem of reconstruction
\cite{dress2005reconstructing,DBLP:journals/ijfcs/FleischmannLMNR21,DBLP:conf/dlt/Manuch99}.

As mentioned briefly regarding the notion of $m$-nearly $k$-universal words, one can not only investigate the 
scattered factors of a word but also characterise words by their absent scattered factors, e.g., \texttt{lane}
is an absent scattered factor of \texttt{pineapple} (cf. \cite{DBLP:journals/corr/abs-2407-18599,DBLP:journals/tcs/FleischmannHHHMN23,DBLP:conf/fct/FleischmannHHN23,DBLP:journals/fuin/KoscheKMS22,DBLP:conf/cocoa/Tronicek23}).
In bioinformatics not only absent scattered factors but also the remaining parts after deletion (or insertion, substitution) are considered that are created when DNA is passing a (deleting) DNA-storage channel \cite{sabary2024reconstruction,haghighat2025half}.
After a thorough search of the literature, it seems like nobody investigated the remaining words, i.e., the {\em complement set of scattered factors} for two given words $u$ and $w$
from a combinatorial perspective so far, i.e., the set of words which we obtain if we delete $u$ from $w$. For instance, \texttt{ala} has three
occurrences (embeddings) in \texttt{alfalfa} and if we delete them separately, we obtain $C(w,u)=\{\mathtt{flfa},\mathtt{falf},\mathtt{lfaf}\}$.  In this case, the cardinality of the complement set equates the number of embeddings of the deleted word, but  if we consider $\mathtt{peelwheel}$ and $\mathtt{peel}$ we get $|C(w,u)|=4$ even though there are
seven embeddings. This notion is tightly related to the notion of the \emph{shuffle operator}, but from the opposite perspective.
Given two words $u$ and $v$, the \emph{shuffle} $u\shuffle v$ is the set containing 
all words $w$ such that deleting $u$ from $w$ results in $v$ and deleting $v$ from $w$ results in 
$u$, i.e., $u\shuffle v=\{w\in\Sigma^{\ast}|\,\exists k,\ell\in\N\,\exists u_1,\ldots,u_k,v_1,\ldots ,v_{\ell}\in\Sigma^{\ast}:\, u=u_1\cdots u_k, v=v_1\cdots v_k, w=u_1v_1\cdots u_nv_n\}$.  For instance, given $\mathtt{lop}$ and $\mathtt{apt}$, we have 
$\mathtt{laptop}\in\mathtt{lop}\shuffle\mathtt{apt}$: deleting the first letter and the two last letters 
results in $\mathtt{apt}$ while deleting the second to the fourth letters results in $\mathtt{lop}$.
The shuffle has applications in programme verification, e.g. as a tool for modelling process algebras
\cite{baeten_weijland_1990}. The definition can be generalised to languages and the shuffle on regular languages
is investigated thoroughly \cite{sik1998ModelingLM,DBLP:journals/iandc/BerstelBCPR10}. Further results regarding the shuffle and formal languages can be found in \cite{DBLP:journals/ijfcs/CampeanuSV02,DBLP:journals/corr/abs-1208-2747}.
Within the research of the shuffle, two important subfields emerged: the perfect shuffle and the self shuffle (or shuffle square). For the \emph{perfect shuffle} the letters have to be taken alternating from each word, (cf. \cite{DIACONIS1983175,Ellis2002TheCO}).
Notice that in contrast to the general problem where the shuffle of two words is a set, the 
perfect shuffle of two words is just one word, e.g., the perfect shuffle of $\mathtt{bnn}$  and 
 $\mathtt{aaa}$ gives $\mathtt{banana}$. For the \emph{self shuffle} only the  shuffle of a word with itself, is investigated. Thus, given $w=\ta\tb\tc$, we get $w\shuffle 
w=\{\ta\tb\tc\ta\tb\tc,\ta\tb\ta\tc\tb\tc,\ta\tb\ta\tb\tc\tc,\ta\ta\tb\tc\tb\tc,\ta\ta\tb\tb\tc\tc\}
$. 
The notion of the self shuffle has been investigated lately in several 
papers (see, e.g. \cite{DBLP:conf/cwords/Harju13,DBLP:journals/tcs/HarjuM15a,Currie2014squarefree}). Here, the shuffle of 
square free words, i.e., words that have no factor of the form $vv$ for some non-empty word $v$, 
was studied.
The decision problem whether a word is a self shuffle has been independently shown to be 
$\mathtt{NP}$-complete by \cite{Buss2014UnshufflingAS} and \cite{10.1007/978-3-642-38536-0_21}, 
although the alphabets were restricted. Recently, \cite{Bulteau2020RecognizingBS} showed the result 
for binary alphabets in general. For a collection of the recent developments and open problems regarding self shuffle, we refer to \cite{10.1007/978-3-319-15579-1_5}. Thus, coming back to the complement scattered factors, we are interested in the set $C(w,u)=\{v\in\Sigma^{\ast}|\,w\in u\shuffle v\}$ and its properties.

\textbf{Our Contribution.}
In this work, we introduce the notion of complement scattered factors and start their investigation from both  algorithmic and combinatorial perspectives. First, we compute the set $C(w,u)$ for given $w$ and $u$ based on a dynamic programming approach. 
Further, we give algorithms on two inverse problems: (1) Given a word $u$ and a set $S$, decide whether there exists a word $w$ such that $S = C(w,u)$ and if so, compute $w$, and (2) Given a word $w$ and a set $S$, decide whether there exists a scattered factor $u$ of $w$ such that $S = C(w,u)$ and if so, compute $u$. An overview of these three related problems in visualised in \Cref{fig:triangle} where the runtime refers to the algorithm where the two values in the resp.~adjacent edge are the input and the value on the opposite side the output. Moreover, we investigate complement scattered factors from a combinatorics of words perspective: First, we present special scattered factors that have exactly one or multiple complement scattered factors by using tools as the arch or the $\alpha$-$\beta$-factorisation. In a next step, we investigate the set of words for which $C(w,u) =1$ iff $E(w,u) = 1$ holds where $E(w,u)$ denotes the number of embeddings of $u$ in $w$. As a concluding step, we relate the notion of complement scattered factors with the (self) shuffle operation and give a characterisation of complement scattered factors within words that are obtained by the perfect shuffle operation.

\textbf{Structure of the Work.}
In \Cref{sec:prelims} we present all the necessary definitions. Afterwards in \Cref{algo} we present the algorithmical investigation of computing $C(w,u)$. Afterwards, in \Cref{sec:csf} we investigate the object of complement scattered factors from a combinatorics on words perspective. 
\ifpaper
Due to space restrictions, all proofs and extended examples can be found in the appendix.
\else
\fi

\begin{figure}
    \centering
    \begin{tikzpicture}[scale=0.75]

    \coordinate (A) at (0,0);
    \coordinate (B) at (4,0);
    \coordinate (C) at (2,3.464);

    \node (S) at (2,2.7) {$S$};
    \node (eq) at (2,2.7-0.4) {$=$};
    \node (Cwu) at (2,2.7-0.4-0.3) {$C(w,u)$};
    \node (length) at (2,2.7-0.4-0.3-0.4) {$\subseteq \Sigma^\ell$};

    \node (w) at (0.7,0.4) {$w$};
    \node (u) at (4-0.7,0.4) {$u$};

    \coordinate[label=below:{$O(|w||u|\binom{w}{u})$}](c) at ($ (A)!.5!(B) $);
    \node (alg1) at ($(A)!.5!(B) -(0,0.9)$) {\Cref{prop:runtime_complement_set}};

    \coordinate (b) at ($ (A)!.5!(C) $);
    \node[rotate=60] (alg2) at ($(A)!.5!(C) -(0.5,0)$) {$O(|w|^2\binom{|w|}{\ell})$};
    \node[rotate=60] (alg2) at ($(A)!.5!(C) -(0.5,0)-(0.6,-0.2)$) {\Cref{Swuproblem}};

    \coordinate(a) at ($ (B)!.5!(C) $);
    \node[rotate=-60] (alg3) at ($(B)!.5!(C) +(0.3,0.3)$) {$O(|S| (\ell-|u|)^{|S| + 1})$};
    \node[rotate=-60] (alg3) at ($(B)!.5!(C) +(0.3,0.3)+(0.4,0.4)$) {\Cref{thm:compliment_to_w}};

    \draw [line width=1.5pt] (A) -- (B) -- (C) -- cycle;
    \end{tikzpicture}
    \caption{A summary of the algorithmic results of \Cref{prop:runtime_complement_set,Swuproblem,thm:compliment_to_w}.}
    \label{fig:triangle}
\end{figure}

%% file: prelims.tex
Let $\mathbb{N}$ be the set of natural numbers, $\mathbb{N}_0 = \mathbb{N} \cup \{0\}$, $[n] = \{1,\ldots,n\}$, $[n]_0 := [n] \cup \{0\}$. 
Let $M,N$ be sets and $(f_i)_{i\in\N}$ a family of functions from $M$ to $N$. Set $f_i(M)=\{f_i(m)|\,m\in M\}$ for all $i\in\N$.
For $i,j\in\N$ distinct, $f_i$ and $f_j$ are called {\em disjoint} if $f_i(M)\cap f_j(M)=\emptyset$ and {\em exhaustive} if $f_i(M)\cup f_j(M)=N$. If two functions $f,g: M\rightarrow N$ are disjoint and exhaustive, we denote that by $f\disjoincup g$.

An \emph{alphabet} $\Sigma$ is a non-empty, finite set whose elements are called \emph{letters} or {\em symbols}.
For a given $\sigma\in\N$, we fix  $\Sigma=\{\ta_1,\ldots,\ta_{\sigma}\}$.  A \emph{word} is a finite sequence of letters from $\Sigma$. Let $\Sigma^*$ be 
the set of all finite words over $\Sigma$ with concatenation and the empty word $\epsilon$ as 
neutral element.
Set $\Sigma^+ := \Sigma^* \setminus \{\epsilon\}$. Let $w\in\Sigma^{\ast}$. 
For all $n\in\N_0$ define inductively, $w^0=\varepsilon$ and $w^n=ww^{n-1}$. 
The {\em length} of $w$ is the number of $w$'s letters; thus $|\epsilon| = 0$. 
For all $k \in \mathbb{N}_0$ set $\Sigma^k := \{w \in \Sigma^* \mid |w| = k\}$ (define $\Sigma^{\leq k}$, $\Sigma^{\geq k}$ analogously). We denote $w$'s \nth{$i$} letter by $w[i]$ 
and set $w[i..j]=w[i]\cdots w[j]$ if $i < j$,  and $\epsilon$ if $ i > j$ for all $i,j \in [\vert w \vert]$. 
Set $\letters(w) = \{\ta \in \Sigma \mid \exists i \in [|w|]: w[i] = \ta \}$. The word $u \in 
\Sigma^*$ is a \emph{factor} of $w$ if there exist $x,y \in \Sigma^*$ such that $w = xuy$.
In the case $x=\epsilon$, we call $u$ a \emph{prefix} of $w$ and \emph{suffix} if $y = \epsilon$.
We call a factor $\ta\ta$, for some $\ta\in\Sigma$, of a word $w\in\Sigma^{\ast}$ a {\em letter square of $w$} and a factor $yy \in \Fact(w)$ for some $y \in \Sigma^*$ a \emph{square} in $w$.
Define the \emph{reverse} of $w$ by $w^R = w[|w|]\cdots w[1]$.
Let $<_{\Sigma}$ be a total order on $\Sigma$ and denote the \emph{lexicographical 
	order} on $\Sigma^*$ by $<$.
For further definitions see \cite{lothaire}. After these basic notations, we introduce the scattered factors.

\begin{definition}
	Let $w \in \Sigma^*$ and $n \in \mathbb{N}_0$. A word $u \in \Sigma^n$ is called a \emph{scattered factor} of w - denoted as $u \in \ScatFact(w)$ - if there exist $v_1,\dots, v_{n+1} \in \Sigma^*$ such that $w = v_1 u[1] v_2 u[2] \cdots v_n u[n] v_{n+1}$.
	Set $\ScatFact_k(w)=\{u\in\ScatFact(w)|\,|u|=k\}$. 
\end{definition}

Now we define the set of positions a scattered factor occurs in, given by mappings that assign the indices of a scattered factor $u$ to the resp.~ones in $w$.

\begin{definition}
	An \emph{embedding of $u \in \Sigma^*$ in $w\in \Sigma^*$} is defined as the mapping $e: \{1, \ldots, |u|\} \to \{1, \ldots, |w|\}$ with $e(1) < \cdots < e(|u|)$ s.t. $u[i] = w[e(i)]$ for all $i \in [|u|]$.
	With $\binom{w}{u}$ we denote the number of embeddings of $u$ in $w$, i.e., $\binom{w}{u} = |\{e : \{1, \ldots, |u|\} \to \{1, \ldots, |w|\} \mid e \text{ is an embedding of $u$ in $w$}\}|$.
	For a given embedding $e$, let $\sfe(e)$ be the scattered factor defined by $e$. The \emph{set of all embeddings of $u$ in $w$} will be denoted by $E(w,u)$. For $e\in E(w,u)$ let $\overline{e}$
	denote the embedding of a word $v$ such that $e\dot{\cup}\overline{e}$.
	Denote by $E_=(w,u)$ the embeddings of $u$ in $w$ for which we have $\sfe(e_1)=\sfe(e_2)$ for all $e_1,e_2\in E_=(w,u)$. 
\end{definition}

The words $\mathtt{cau}, \mathtt{cafe}$, $\mathtt{life}$ and $\mathtt{ufo}$ are all scattered factors of $\mathtt{cauliflower}$ but neither are $\mathtt{flour}$ nor $\mathtt{row}$. Based on the tuple notation for permutations, we use a $|u|$-tuple for denoting an embedding of a scattered factor $u$ in a word $w$ such that the \nth{$i$} entry of the tuple contains the index in $w$ where the \nth{$i$} letter of $u$ is embedded. For example, the only embedding of $\mathtt{cafe}$ in $\mathtt{cauliflower}$ is given by $(1,2,6,10)$. Notice that scattered factors might have multiple embeddings like $\mathtt{low}$ given by $(4,8,9)$ and $(7,8,9)$. In \texttt{bananaban} there are multiple occurrences of $\mathtt{ban}$ some disjoint like $(1,2,3)$, $(7,8,9)$ and some not like $(1,2,9)$, $(1,4,5)$. The words $\mathtt{banana}$ and $\mathtt{ban}$ have a disjoint and exhaustive embedding in $\mathtt{bananaban}$. We continue with the concept of (nearly) universality introduced in \cite{DBLP:conf/dlt/BarkerFHMN20,DBLP:journals/tcs/FleischmannHHHMN23}.

\begin{definition}
	A word $w \in \Sigma^*$ is called \emph{$m$-nearly $k$-universal} if $\vert \ScatFact_k(w) \vert $$= 
	\sigma^k-m$. 
	If $m=0$, the words are called $k$-universal.
	The largest $k\in\N_0$, such that a word is $k$-universal, is called the {\em universality index} $\iota(w)$.
\end{definition}

In the unary alphabet all words are of the form $\ta^{\ell}$ for $\ell\in\N$ and the scattered factors can be determined easily.  Therefore, we only consider alphabets of size two or greater. Moreover, we assume $\Sigma=\letters(w)$ for a given $w$, if not stated otherwise.
One of the main tools for the investigation of a words' scattered factors is the \emph{arch factorisation} which was introduced by Hébrard \cite{DBLP:journals/tcs/Hebrard91}. 

\begin{definition}
	For a word $w \in \Sigma^*$ the \emph{arch factorisation} is given by $w = \ar_1(w) \cdots \ar_k(w) \r(w)$ for $k \in \mathbb{N}_0$ with\\
	(1) $\iota(\ar_i(w))=1$ for all $i \in [k]$, \\
	(2) $\ar_i(w)[\vert \ar_i(w) \vert] \notin \letters(\ar_i(w)[1 .. \vert \ar_i(w) \vert - 1 ])$ for all $i \in [k]$, and\\
	(3) $\letters(\r(w)) \subsetneq \Sigma$.\\
	The words $\ar_i(w)$ are called \emph{arches} of $w$ and $\r(w)$ is the \emph{rest} of $w$.
	The {\em modus} $\m(w)$ is given by  $\ar_1(w)[\vert\ar_1(w)\vert] \cdots\ar_k(w)[\vert\ar_k(w)\vert]$, i.e., it is the word containing the unique last letters of each arch.
	The \emph{inner of the $i^{th}$ arch of $w$} is defined as the prefix of 
	$\ar_i(w)$ such that $\ar_i(w) = \inner_i(w) 
	\m(w)[i]$ holds.
\end{definition}

Tightly related to the notion of scattered factors is the {\em shuffle} (cf. \cite{JSTOR:AoM/1969820}).

\begin{definition}\label{def:shuffle_occs}
  For $u,v\in\Sigma^{\ast}$ define the {\em shuffle} of $u$ and $v$ by
  $u\shuffle v=\{w\in\Sigma^{|u|+|v|}|\,\exists e_1\in E(w,u)\,\exists e_2\in E(w,v):\, e_1\disjoincup e_2\}$.
  For $u,v\in\Sigma^{m}$ for $m \in \N$ define the {\em perfect shuffle of $u$ and $v$} as $u\perfectshuffle v=u[1]v[1]u[2]v[2]\cdots u[|u|]v[|u|]$. The set $u \shuffle u$ is called the \emph{selfshuffle} of $u$.
\end{definition}

For instance, let $u=\mathtt{ban}$ and $v=\mathtt{ana}$. Thus, the shuffle contains words of length $6$ and we have $|u\shuffle v|=11$. In general, we have $|u\shuffle v|\leq\binom{|uv|}{|u|}$.

\begin{definition}
	For $w \in \Sigma^*$ and $u\in \ScatFact(w)$ we define the complement scattered factor set of $u$ in $w$ by $C(w,u) = \{v\in\Sigma^{|w|-|u|}|\,w\in u\shuffle v\}$. Each $v\in C(w,u)$ is called a {\em complement scattered factor} of $u$ in $w$.
\end{definition}

Observe that, by definition, if $v \in C(w, u)$ then $u \in C(w, v)$.
We finish this section with a lemma on the selfshuffle and its embeddings (cf. \cite[Mathematical Preliminaries]{Buss2014UnshufflingAS}).

\begin{lemma}\label{lem:fst_snd_occurence}
  For $v\in\Sigma^{\ast}$ and \(w\in v\shuffle v\) there exist $e_1,e_2\in E(w,v)$ with $e_1\dot{\cup}e_2$  such that $e_1(k) < e_2(k)$ for all 
\(k\in [|v|]\).
  We call \(e_1\) the \emph{first} and \(e_2\) the \emph{second} occurrence of \(v\) in \(w\) if for all $e_3\in E(w,v)$
  there exist $j\in[|v|]$ s.t.~$e_1(j)\leq e_3(j)$.
\end{lemma}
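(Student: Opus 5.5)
Since $w\in v\shuffle v$, \Cref{def:shuffle_occs} gives two embeddings $f,g\in E(w,v)$ with $f\disjoincup g$. These need not satisfy $f(k)<g(k)$ for every $k$, so the plan is to repair them coordinatewise. Define $e_1(k)=\min\{f(k),g(k)\}$ and $e_2(k)=\max\{f(k),g(k)\}$ for all $k\in[|v|]$. Then $e_1$ and $e_2$ are again embeddings of $v$ in $w$ with $e_1\disjoincup e_2$, and $e_1(k)<e_2(k)$ for every $k$, which is the statement.

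\textbf{Step 1: letters are preserved.} For each $k$, both $w[f(k)]$ and $w[g(k)]$ equal $v[k]$. Hence $w[e_1(k)]=w[e_2(k)]=v[k]$.

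\textbf{Step 2: monotonicity.} For $k<k'$, $\min$ and $\max$ are monotone in each argument. From $f(k)<f(k')$ and $g(k)<g(k')$ we get $\min\{f(k),g(k)\}<\min\{f(k'),g(k')\}$: indeed $\min\{f(k),g(k)\}\le f(k)<f(k')$ and likewise it is $<g(k')$. The same argument works for $\max$. So $e_1$ and $e_2$ are strictly increasing.

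\textbf{Step 3: disjoint and exhaustive.} For each $k$, the pair $\{e_1(k),e_2(k)\}$ equals the pair $\{f(k),g(k)\}$. Therefore
\[
e_1([|v|])\cup e_2([|v|])=f([|v|])\cup g([|v|])=[|w|].
\]
Since the ranges of $f$ and $g$ are disjoint, $f(k)\ne g(k)$, so $e_1(k)<e_2(k)$ strictly. Disjointness of the new ranges then follows by counting: the two sets of size $|v|$ each cover $|w|=2|v|$ positions, so they cannot overlap.

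The only mildly delicate point is Step 2, which the $\min$/$\max$ monotonicity settles. The naming of a \emph{first} occurrence is a definition and needs no proof; taking the lexicographically smallest such $e_1$ would realise it if required.
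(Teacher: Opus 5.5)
Your proof is correct. The paper does not prove this lemma itself; it only points to the preliminaries of Buss and Soltys. So your construction $e_1(k)=\min\{f(k),g(k)\}$, $e_2(k)=\max\{f(k),g(k)\}$ is a self-contained argument where the paper has none.

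All three steps check out:
\begin{itemize}
\item Letters are preserved because $w[f(k)]=w[g(k)]=v[k]$.
\item Strict monotonicity follows from your two-sided bounds.
\item Each unordered pair $\{f(k),g(k)\}$ is unchanged, so the union of the ranges is still $[|w|]$. Since $f(k)\neq g(k)$, you get $e_1(k)<e_2(k)$. Disjointness then follows from $|w|=2|v|$, which is built into the definition of the shuffle. It also follows directly, because the $|v|$ pairs are pairwise disjoint.
\end{itemize}

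The more common route is an exchange argument. At the first crossing index $k$ with $f(k)>g(k)$, swap the tails $f|_{\ge k}$ and $g|_{\ge k}$. Both new maps stay increasing at the seam, because $f(k-1)<g(k-1)<g(k)<f(k)$. Then iterate. Your $\min$/$\max$ definition performs all these swaps at once and avoids the induction.

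One remark on your last sentence: no lexicographic choice is needed for the naming clause. Position $1$ must be covered. It cannot be $e_2(k)$ for any $k$, since $e_1(k)<e_2(k)$, and it cannot be $e_1(k)$ for $k>1$. Hence $e_1(1)=1\le e_3(1)$ for every $e_3\in E(w,v)$ when $v\neq\varepsilon$. So the stated condition holds with $j=1$ for every such pair.

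Lexicographic minimality would also not help under a stronger ``for all $j$'' reading of that clause. Take $v=\mathtt{aba}$ and $w=\mathtt{ababaa}$. The only valid ordered pair is $e_1=(1,2,5)$, $e_2=(3,4,6)$. Yet $(1,2,3)\in E(w,v)$ has a smaller third coordinate than $e_1$.
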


%
%
%
%

For the algorithmic results we use the standard 
computational model RAM with logarithmic word-size (see, e.g., \cite{KarkkainenSB06}), i.e., we 
follow a standard assumption from stringology, that if $w$ is the 
input word for our algorithms, then we assume $\Sigma=\letters(w)=\{ 1,2, \ldots, \sigma\}$.

%% file: computingC.tex
We first present \Cref{alg:computing_C} for computing the complement scattered factor set from a given pair of words $w \in \Sigma^*$ and $u \in \ScatFact(w)$ via a dynamic programming approach.
We maintain a dynamic programming matrix $P$ which contains in $P[i,j]$ the complement scattered factor set
of $u[1..i-1]$ in $w[1..j]$.

\begin{algorithm}
	\textbf{Input:} $w \in \Sigma^*$ and $u \in \ScatFact(w)$\\
	\textbf{Output:} The set $C(w,u)$.
	\begin{lstlisting}[mathescape=true]
Create an $|u|+1 \times |w|$-table $P$ initially filled with $\emptyset$
// initialisation of the first column
$P[1,1] = \{w[1]\}$
if $w[1] = u[1]$ then
  $P[2,1] = \{\varepsilon\}$
for $j \in [2,|w|]$:
  for $i \in [2,|u|+1]$:
    $P[1,j] = \{w[1..j]\}$
    if $i -1> j$ then 
      // here the prefix of $u$ is longer than the prefix of $w$
      $P[i,j] = \emptyset$
    else
      if $w[j] = u[i-1]$ then
        $P[i,j] = P[i-1,j-1] \cup P[i,j-1] \cdot w[j]$ 
        // the second term of the union is empty if $P[i,j-1] = \emptyset$
      else 
        $P[i,j] = P[i,j-1] \cdot w[j]$ 
return $P[|u|+1,|w|]$
	\end{lstlisting}
	\caption{Compute $C(w,u)$}\label{alg:computing_C}
\end{algorithm}

\ifpaper
\input{correctness_of_P}
\else
\input{correctness_of_P}
\input{proof_correctness_of_P}
\fi

From \Cref{prop:correctness_of_P}, we immediately get the following two results on the length of every  entry $P[i,j]$ and the cases where only the empty set is stored (i.e., the prefix of $u$ is not a scattered factor of the prefix of $w$).

\begin{corollary}
	For $w \in \Sigma^*$, $u \in \ScatFact(w)$, $i \in [|u|+1]$ and $j \in [|w|]$, we have $|v| = |w[1..j]|-|u[1..i-1]|$ for all $v \in P[i,j]$.
\end{corollary}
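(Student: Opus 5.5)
The corollary is a direct consequence of \Cref{prop:correctness_of_P}. That proposition says that $P[i,j]$ is exactly the complement scattered factor set $C(w[1..j],u[1..i-1])$ whenever $u[1..i-1]\in\ScatFact(w[1..j])$, and $\emptyset$ otherwise. So the plan is to fix $v\in P[i,j]$ and reduce the length claim to the definition of $C(\cdot,\cdot)$.

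First, since $v\in P[i,j]$, the entry is non-empty. By \Cref{prop:correctness_of_P} this gives $P[i,j]=C(w[1..j],u[1..i-1])$, so $v\in C(w[1..j],u[1..i-1])$. The definition of the complement scattered factor set requires $v\in\Sigma^{|w[1..j]|-|u[1..i-1]|}$, which is exactly the claimed equality $|v|=|w[1..j]|-|u[1..i-1]|=j-(i-1)$. One can equivalently note that $w[1..j]\in u[1..i-1]\shuffle v$, and every word in a shuffle has length $|u[1..i-1]|+|v|$.

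As a sanity check independent of the proposition, one can argue by induction on $j$ following the recurrence of \Cref{alg:computing_C}. The base cases are $P[1,1]=\{w[1]\}$ (length $1=1-0$), $P[2,1]=\{\varepsilon\}$ (length $0=1-1$), and $P[1,j]=\{w[1..j]\}$ (length $j$). For the step, elements of $P[i-1,j-1]$ have length $(j-1)-(i-2)=j-i+1$. Elements of $P[i,j-1]\cdot w[j]$ have length $(j-1)-(i-1)+1=j-i+1$. Entries set to $\emptyset$ satisfy the claim vacuously.

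The only potential obstacle is the boundary indexing: the convention that $u[1..0]=\varepsilon$, and the case $i-1>j$, where the set is empty. Both are handled vacuously or by the base case. Since the statement is immediate from \Cref{prop:correctness_of_P}, I would present the short definitional argument and mention the induction only as a remark.
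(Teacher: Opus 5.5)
Your proposal is correct and follows the paper's route: the paper also obtains the corollary directly from \Cref{prop:correctness_of_P}, because every element of $C(w[1..j],u[1..i-1])$ lies in $\Sigma^{|w[1..j]|-|u[1..i-1]|}$ by definition. Your induction on the recurrence of \Cref{alg:computing_C} is also valid and does not rely on the proposition, but it is not needed.
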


\begin{corollary}
	For $w \in \Sigma^*$, $u \in \ScatFact(w)$, $i-1 > j$ we have $P[i,j] = \emptyset$.
\end{corollary}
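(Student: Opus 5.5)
The claim is that $P[i,j]=\emptyset$ whenever $i-1>j$. I will derive it from \Cref{prop:correctness_of_P}, which says that $P[i,j]$ equals the complement scattered factor set of $u[1..i-1]$ in $w[1..j]$, i.e.\ $P[i,j]=C(w[1..j],u[1..i-1])$. The statement then reduces to a counting fact about embeddings.

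Fix $i\in[|u|+1]$ and $j\in[|w|]$ with $i-1>j$. By definition, every element $v\in C(w[1..j],u[1..i-1])$ satisfies $w[1..j]\in u[1..i-1]\shuffle v$. This requires an embedding $e$ of $u[1..i-1]$ into $w[1..j]$, that is, a strictly increasing map $e:[i-1]\to[j]$. No injective map from a set of size $i-1$ into a set of size $j<i-1$ exists, by the pigeonhole principle. Equivalently, the length constraint $|v|=j-(i-1)<0$ from the definition, or from the preceding corollary, cannot be met by any word. Either way, $u[1..i-1]\notin\ScatFact(w[1..j])$, so the complement set is empty and $P[i,j]=\emptyset$.

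As a sanity check, I will also compare this with the algorithm directly. For $j\ge2$, the branch \texttt{if $i-1>j$} in \Cref{alg:computing_C} explicitly assigns $\emptyset$. For $j=1$, only $P[1,1]$ and possibly $P[2,1]$ are initialised, and neither has $i-1>1$. All remaining entries keep their initial value $\emptyset$.

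The only delicate point is to make sure that \Cref{prop:correctness_of_P} covers every index pair, including those with $i-1>j$, and is not restricted to $i-1\le j$. If it is restricted, I will fall back on the direct inspection of the algorithm just described.
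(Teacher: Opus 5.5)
Your proposal is correct and matches the paper, which derives this corollary immediately from \Cref{prop:correctness_of_P}: $P[i,j]=C(w[1..j],u[1..i-1])$ is empty because $u[1..i-1]$ is longer than $w[1..j]$, so it cannot be a scattered factor of it. Your worry about the index range is unfounded, since the proposition is stated for all $i\in[|u|+1]$ and $j\in[|w|]$ (its proof even treats the case $i-1>j+1$ explicitly), so the fallback to inspecting the algorithm is not needed, though that check is also correct.
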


Thus, we get the complement scattered factor set of two given words $w \in \Sigma^*$ and $u \in \ScatFact(w)$ by computing the table $P$.

\ifpaper
\input{runtime_complement_set}
\else
\input{runtime_complement_set}
\input{proof_runtime_complement_set}
\fi

\ifpaper
\Cref{example:algoComplementSet} in \Cref{appendix:algorithms} provides an example of \Cref{alg:computing_C} on the computation of $C(\mathtt{ananas}, \mathtt{as})$.
\else
\Cref{example:algoComplementSet} provides an example of \Cref{alg:computing_C} on the input $C(\mathtt{ananas}, \mathtt{as})$.
\begin{table}
	\resizebox{\textwidth}{!}{
		\begin{tabular}{c | c c c  c c c c c c c c}
			& $w[1..6]$ && $w[2..6] $ && $w[3..6] $ && $w[4..6] $ && $w[5..6] $ && $w[6]$\\
			& $\mathtt{\underline{a}}$ & & $\mathtt{a\underline{n}}$ & & $\mathtt{an\underline{a}}$ & & $\mathtt{ana\underline{n}}$ & & $\mathtt{anan\underline{a}}$ && $\mathtt{anana\underline{s}}$\\
			\hline
			$\varepsilon$ & $\{\mathtt{a}\}$ & & $\{\mathtt{an}\}$ & & $\{\mathtt{ana}\}$  & & $\{\mathtt{anan}\}$ & & $\{\mathtt{anana}\}$ && $\{\mathtt{ananas}\}$\\
			& & & &$\searrow$& & & &$\searrow$& &&\\
			$u[1] = \mathtt{\underline{a}}$ & $\{\varepsilon\}$ & $\overset{\cdot \tn}{\rightarrow}$ & $\{\mathtt{n}\}$ & $\overset{\cdot \ta}{\rightarrow}$ & $\{\mathtt{an}, \mathtt{na}\}$ & $\overset{\cdot \tn}{\rightarrow}$ & $\{\mathtt{ann}, \mathtt{nan}\}$ & $\overset{\cdot \ta}{\rightarrow}$ & $\{\mathtt{anna}, \mathtt{nana}$ &$\overset{\cdot \mathtt{s}}{\rightarrow}$& $\{\mathtt{annas}, \mathtt{nanas},$\\
			&  && && && && $\mathtt{anan}\}$ && $\mathtt{anans}\}$\\
			& && && && && &$\searrow$& \\
			$u[1..2] = \mathtt{a\underline{s}}$ & $\emptyset$ &$\overset{\cdot\ta}{\rightarrow}$& $\emptyset$ &$\overset{\cdot\tn}{\rightarrow}$& $\emptyset$ & $\overset{\cdot\ta}{\rightarrow}$ & $\emptyset$ & $\overset{\cdot\tn}{\rightarrow}$ & $\emptyset$ &$\overset{\cdot\mathtt{s}}{\rightarrow}$ & $\{\mathtt{anna}, \mathtt{nana}$\\
			& && && && && && $\mathtt{anan}\}$\\
	\end{tabular}}
	\caption{Computation of $C(\mathtt{ananas}, \mathtt{as})$.}\label{example:algoComplementSet}
\end{table}
\fi
\ifpaper
\Cref{alg:computing_C} can be extended s.t.~the multiplicities of every complement scattered factor within $C(w,u)$ are counted. (cf. \Cref{appendix:algorithms}).
\else 
Further, \Cref{alg:computing_C+counting} is an extended version of \Cref{alg:computing_C} for not only computing the set of complement scattered factors but also their multiplicities (this algorithm does not work via matching the prefixes but the suffixes of $w$ and $u$).

\begin{algorithm}
\textbf{Input:} $w \in \Sigma^*$ and $u \in \ScatFact(w)$\\
\textbf{Output:} The set $C(w,u)$ equipped with the multiplicity of occurrences of every element.
\begin{lstlisting}[mathescape=true]
Create an $|u|+1 \times |w|$-table $P$ initially filled with $\emptyset$
// initialisation of the last column
 $P[1,|w|] = \{(w[|w|],1)\}$
if $w[|w|] = u[|u|]$ then
  $P[2,|w|] = \{(\varepsilon,1)\}$
for $j \in [|w|-1,1]$:
  $P[1,j] = \{(w[j..|w|],1)\}$
  for $i \in [2,|u|+1]$:
    if $i -1> |w|-j+1$ then 
      // here the suffix of $u$ is to longer than the suffix of $w$ 
      // (which has length $|w|-j+1$)
      $P[i,j] = \emptyset$
    else
      if $w[j] = u[i-1]$ then
        $P[i,j] = $unionWithMultiplicity$(P[i-1,j+1],P[i,j+1],w[j])$ 
        // the second term of the union is empty if $P[i,j-1] = \emptyset$
      else 
        $P[i,j] = \{(w,n) \mid w \in P[i,j+1].\mathtt{first}() \cdot w[j] \land P[i,j+1].\mathtt{snd}()\}$ 
return $P[|u|+1,|w|]$

// type $\mathcal P(\Sigma^* \times \N) \times \mathcal P(\Sigma^* \times \N) \times \Sigma \to \mathcal P(\Sigma^* \times \N)$
def unionWithMultiplicity(compSet1, compSet2,letter):
// using a union find datastructure
    compSet$ = \emptyset$
    for all $w \in$letter$\cdot$compSet1.first()$\cup$compSet2.first():
      if $w \in$letter$\cdot$compSet1.first()$\cap$compSet2.first():
    	compSet$=$compSet$\cup (w,$compSet1.snd()+compSet2.snd()$)$
      else if $w \in$letter$\cdot$compSet1.first():
        compSet$=$compSet$\cup (w,$compSet1.snd()$)$
      else:
        compSet$=$compSet$\cup (w,$compSet2.snd()$)$
    return compSet
\end{lstlisting}
\caption{Compute $C(w,u)$ via suffix matching and counting}\label{alg:computing_C+counting}
\end{algorithm}

\begin{example}
	We successively construct the complement scattered factors for $w = \mathtt{ababa}$ and $u = \mathtt{aba}$ (the description of the table is just added for better readability, there the underlined lettered are those of interest to the algorithm). First, the last column (first three lines of the code):
	
	\begin{tabular}{c | c c c  c c c c c c}
		& $w[1..5]$ && $w[2..5] $ && $w[3..5] $ && $w[4..5] $ && $w[5] $\\
		& $\mathtt{\underline{a}baba}$ & & $\mathtt{\underline{b}aba}$ & & $\mathtt{\underline{a}ba}$ & & $\mathtt{\underline{b}a}$ & & $\mathtt{\underline{a}}$\\
		\hline
		$\varepsilon$ & $\emptyset$ & & $\emptyset$ & & $\emptyset$ & & $\emptyset$ & & $\{(\mathtt{a},1)\}$ \\
		$u[3] = \mathtt{\underline{a}}$ &  $\emptyset$ & & $\emptyset$ & & $\emptyset$ & & $\emptyset$ & & $\{(\varepsilon,1)\}$\\
		$u[2..3] = \mathtt{\underline{b}a}$ & $\emptyset$ & & $\emptyset$ & & $\emptyset$ & & $\emptyset$ & & $\emptyset$\\
		$u[1..3] = \mathtt{\underline{a}ba}$ & $\emptyset$ & & $\emptyset$ & & $\emptyset$ & & $\emptyset$ & & $\emptyset$ 
	\end{tabular}
	
	Now, we iterate through $w$ and then $u$, i.e., full the table column-wise from right to left.
	
	\begin{tabular}{c | c c c  c c c c c c}
		& $w[1..5]$ && $w[2..5] $ && $w[3..5] $ && $w[4..5] $ && $w[5] $\\
		& $\mathtt{\underline{a}baba}$ & & $\mathtt{\underline{b}aba}$ & & $\mathtt{\underline{a}ba}$ & & $\mathtt{\underline{b}a}$ & & $\mathtt{\underline{a}}$\\
		\hline
		$\varepsilon$ & $\emptyset$ & & $\emptyset$ & & $\emptyset$ & & $\{(\mathtt{ba},1)\}$ & & $\{(\mathtt{a},1)\}$ \\
		$u[3] = \mathtt{\underline{a}}$ &  $\emptyset$ & & $\emptyset$ & & $\emptyset$ & & $\{(\tb,1)\}$ & $\overset{\tb \cdot}{\leftarrow}$ & $\{(\varepsilon,1)\}$\\
		& && && && &$\swarrow$& \\
		$u[2..3] = \mathtt{\underline{b}a}$ & $\emptyset$ & & $\emptyset$ & & $\emptyset$ & & $\{\varepsilon,1\}$ & $\overset{\tb \cdot}{\leftarrow}$ & $\emptyset$\\
		$u[1..3] = \mathtt{\underline{a}ba}$ & $\emptyset$ & & $\emptyset$ & & $\emptyset$ & & $\emptyset$ & $\overset{\tb \cdot}{\leftarrow}$ & $\emptyset$ 
	\end{tabular}
	
	Recall that we only use diagonal values if the resp. letters of $u$ and $w$ match.
	
	\begin{tabular}{c | c c c  c c c c c c}
		& $w[1..5]$ && $w[2..5] $ && $w[3..5] $ && $w[4..5] $ && $w[5] $\\
		& $\mathtt{\underline{a}baba}$ & & $\mathtt{\underline{b}aba}$ & & $\mathtt{\underline{a}ba}$ & & $\mathtt{\underline{b}a}$ & & $\mathtt{\underline{a}}$\\
		\hline
		$\varepsilon$ & $\emptyset$ & & $\emptyset$ & & $\{(\mathtt{aba},1)\}$  & & $\{(\mathtt{ba},1)\}$ & & $\{(\mathtt{a},1)\}$ \\
		& && && &$\swarrow$& && \\
		$u[3] = \mathtt{\underline{a}}$ &  $\emptyset$ & & $\emptyset$ & & $\{(\mathtt{ba},1),$ & $\overset{\ta \cdot}{\leftarrow}$ & $\{(\tb,1)\}$ & $\overset{\tb \cdot}{\leftarrow}$ & $\{(\varepsilon,1)\}$\\
		& && && $(\mathtt{ab},1)\}$ && &&\\
		& && && && &$\swarrow$& \\
		$u[2..3] = \mathtt{\underline{b}a}$ & $\emptyset$ & & $\emptyset$ & & $\{\ta,1\}$ & $\overset{\ta \cdot}{\leftarrow}$ & $\{\varepsilon,1\}$ & $\overset{\tb \cdot}{\leftarrow}$ & $\emptyset$\\
		& && && &$\swarrow$& && \\
		$u[1..3] = \mathtt{\underline{a}ba}$ & $\emptyset$ & & $\emptyset$ & & $\{\varepsilon,1\}$ &  $\overset{\ta \cdot}{\leftarrow}$ & $\emptyset$ & $\overset{\tb \cdot}{\leftarrow}$ & $\emptyset$ 
	\end{tabular}
	
	\medskip
	
	\begin{tabular}{c | c c c  c c c c c c}
		& $w[1..5]$ && $w[2..5] $ && $w[3..5] $ && $w[4..5] $ && $w[5] $\\
		& $\mathtt{\underline{a}baba}$ & & $\mathtt{\underline{b}aba}$ & & $\mathtt{\underline{a}ba}$ & & $\mathtt{\underline{b}a}$ & & $\mathtt{\underline{a}}$\\
		\hline
		$\varepsilon$ & $\emptyset$ & & $\{(\mathtt{baba},1)\}$ & & $\{(\mathtt{aba},1)\}$  & & $\{(\mathtt{ba},1)\}$ & & $\{(\mathtt{a},1)\}$ \\
		& && && &$\swarrow$& && \\
		$u[3] = \mathtt{\underline{a}}$ &  $\emptyset$ & & $\{(\mathtt{bba},1),$ & $\overset{\tb \cdot}{\leftarrow}$ & $\{(\mathtt{ba},1),$ & $\overset{\ta \cdot}{\leftarrow}$ & $\{(\tb,1)\}$ & $\overset{\tb \cdot}{\leftarrow}$ & $\{(\varepsilon,1)\}$\\
		& && $(\mathtt{bab},1)\}$ && $(\mathtt{ab},1)\}$ && &&\\
		& && &$\swarrow$& && &$\swarrow$& \\
		$u[2..3] = \mathtt{\underline{b}a}$ & $\emptyset$ & & $\{(\mathtt{ba},2),$ &$\overset{\tb \cdot}{\leftarrow}$& $\{(\ta,1)\}$ & $\overset{\ta \cdot}{\leftarrow}$ & $\{(\varepsilon,1)\}$ & $\overset{\tb \cdot}{\leftarrow}$ & $\emptyset$\\
		& && $(\mathtt{ab},1)\}$ && && &&\\
		& && && &$\swarrow$& && \\
		$u[1..3] = \mathtt{\underline{a}ba}$ & $\emptyset$ & & $\{(\mathtt{b},1)\}$ &$\overset{\tb \cdot}{\leftarrow}$& $\{(\varepsilon,1)\}$ &  $\overset{\ta \cdot}{\leftarrow}$ & $\emptyset$ & $\overset{\tb \cdot}{\leftarrow}$ & $\emptyset$ 
	\end{tabular}
	
	\medskip
	
	\begin{tabular}{c | c c c  c c c c c c}
		& $w[1..5]$ && $w[2..5] $ && $w[3..5] $ && $w[4..5] $ && $w[5] $\\
		& $\mathtt{\underline{a}baba}$ & & $\mathtt{\underline{b}aba}$ & & $\mathtt{\underline{a}ba}$ & & $\mathtt{\underline{b}a}$ & & $\mathtt{\underline{a}}$\\
		\hline
		$\varepsilon$ & $\{(\mathtt{ababa},1)\}$ & & $\{(\mathtt{baba},1)\}$ & & $\{(\mathtt{aba},1)\}$  & & $\{(\mathtt{ba},1)\}$ & & $\{(\mathtt{a},1)\}$ \\
		& &$\swarrow$& && &$\swarrow$& && \\
		$u[3] = \mathtt{\underline{a}}$ & $\{(\mathtt{baba},1),$ &$\overset{\ta \cdot}{\leftarrow}$& $\{(\mathtt{bba},1),$ & $\overset{\tb \cdot}{\leftarrow}$ & $\{(\mathtt{ba},1),$ & $\overset{\ta \cdot}{\leftarrow}$ & $\{(\tb,1)\}$ & $\overset{\tb \cdot}{\leftarrow}$ & $\{(\varepsilon,1)\}$\\
		& $(\mathtt{abba},1),$ && $(\mathtt{bab},1)\}$ && $(\mathtt{ab},1)\}$ && &&\\
		& $(\mathtt{abab},1)\}$ && && && &&\\
		& && &$\swarrow$& && &$\swarrow$& \\
		$u[2..3] = \mathtt{\underline{b}a}$ & $\{(\mathtt{aba},2),$ &$\overset{\ta \cdot}{\leftarrow}$& $\{(\mathtt{ba},2),$ &$\overset{\tb \cdot}{\leftarrow}$& $\{(\ta,1)\}$ & $\overset{\ta \cdot}{\leftarrow}$ & $\{(\varepsilon,1)\}$ & $\overset{\tb \cdot}{\leftarrow}$ & $\emptyset$\\
		& $(\mathtt{aab},1)\}$ && $(\mathtt{ab},1)\}$ && && &&\\
		& &$\swarrow$& && &$\swarrow$& && \\
		$u[1..3] = \mathtt{\underline{a}ba}$ & $\{(\mathtt{ba},2),$ &$\overset{\ta \cdot}{\leftarrow}$& $\{(\mathtt{b},1)\}$ &$\overset{\tb \cdot}{\leftarrow}$& $\{(\varepsilon,1)\}$ &  $\overset{\ta \cdot}{\leftarrow}$ & $\emptyset$ & $\overset{\tb \cdot}{\leftarrow}$ & $\emptyset$\\
		& $(\mathtt{ab},2)\}$ && && && &&\\
	\end{tabular}
	
	Now, the complement scattered factor set is $C(\mathtt{ababa},\mathtt{aba}) = \{\mathtt{ab},\mathtt{ba}\}$ where both of the complement scattered factors occur with a multiplicity of two.
\end{example}
\fi

We consider two related problems of computing $C(w,u)$ for given $w$ and $u$:
\begin{description}
\item[Problem 1.] Given $w\in\Sigma^{n}$ and $S\subseteq \ScatFact_{m}(w)$ for some $n,m \in \N$ determine whether there exists $u\in\ScatFact_{n-m}(w)$ such that $C(w,u)=S$.\label{problem1}
\item[Problem 2.] Given $u\in\Sigma^{\ast}$ and $S\subseteq\Sigma^{\ell}$ for some fixed $\ell\in\N$ determine whether there exists $w\in\Sigma^{|u|+\ell}$ such that $C(w,u)=S$.\label{problem2}
\end{description}

Let us now assume that we are given some word $w \in \Sigma^n$, and the set $S$ of scattered factors of $w$, all of length $m \in \N$,  after some unknown word $u$ has been removed (Problem 1). We show that $u$ can be determined in $O(n^2\binom{n}{m})$. Notice that by the definition of complement scattered factors, all $s\in S$ have to have the same length.
Our algorithm works by the following lemma.

\ifpaper
\input{intersection}
\else
\input{intersection}
\input{proofintersection}
\fi

From Lemma \ref{lem:u_from_w_and_s_u_in_intersection}, we can determine the value of $u$ from $\bigcap_{v \in S} C(w, v)$. We do so in a two step manner. First, we determine the candidate set $\mathcal{U} = \bigcap_{v \in S} C(w, v)$ of potential values of $u$. Next, we check each word in $u \in \mathcal{U}$, determining if $C(w, u) = S$. If any $u$ for which $C(w, u) = S$ is found, then the algorithm halts returning this word as the solution. Otherwise, the algorithm continues until the set has been exhausted. This gives the following theorem.

\ifpaper
\input{Swuproblem}
\else
\input{Swuproblem}
\input{proofSwuproblem}
\fi


We now move on to Problem 2 of determining whether there exists some $w$ such that $C(w, u) = S$ for some given word $u$ and set of $\ell$-length words $S$. Our focus is on the more general problem of asking if, for a set of tuples $Z = \{(v_1, u_1), (v_2, u_2), \dots, (v_r, u_r)\}$ where $\vert v_i u_i\vert = \vert v_j u_j \vert$, $\forall i, j \in [r]$, $r\in\N$, there exists some word $w$ of length $\vert v_i u_i\vert$ such that there exists for each tuple $(v_i, u_i)$ a disjoint pair of embeddings of $v_i$ and $u_i$ into $w$.  We call the problem of finding such a word the \emph{pairwise disjoint embedding problem}. Note that this problem becomes equivalent to the original problem when the input set of tuples is $\{(v_1, u), (v_2, u), \dots, (v_r, u)\}$. Our primary result for this section is an $O(r n^r)$ time algorithm for solving the pairwise disjoint embedding problem for an input set $Z$ of $r$ tuples each with length $n$, i.e., for the tuple $(v_i, u_i) \in Z$, $\vert v_i u_i \vert = n$.

We solve the pairwise disjoint embedding problem via a dynamic programming approach, constructing a table $V$ (for \textbf{V}alid) of size $O(n^{\vert Z \vert + 1})$ which we populate in $O(\vert Z \vert)$ time per cell. Formally, each entry in table $V$ is indexed by a set of $\vert Z \vert$ tuples $\{(z_1, y_1), (z_2, y_2), \dots, (z_{r}, y_{r})\}$ where $z_i$ is a suffix of $v_i$, and $y_i$ is a suffix of $u_i$. We further assume that, for each such tuple, $\vert z_i y_i \vert = \vert z_j y_j \vert$, for all $i, j \in [r]$. Note that this limitation does not apply between sets, thus we may have a pair of such sets $(z_1, y_1), (z_2, y_2), \dots, (z_{r}, y_{r})$ and $(z'_1, y'_1), (z'_2, y'_2), \dots, (z'_{r}, y'_{r})$ where $\vert z_i y_i \vert \neq \vert z_i' y_i' \vert$, while still having an entry for each. Note that this gives at most $\ell^{\vert Z \vert}$ sets where $\vert z_i y_i \vert = \ell$ and thus, letting $n = \vert v_i u_i\vert$ for some tuple $(v_i, u_i)$ in the input set $Z$, we have $O(n^{\vert Z \vert + 1})$ entries in $V$.
We define $V[\{(z_1, y_1), (z_2, y_2), \dots, (z_r, y_r)\}]$ as a Boolean value, being true iff there there is a valid solution to the pairwise disjoint embedding problem for the set $\{(z_1, y_1), (z_2, y_2), \dots,$ $(z_r, y_r)\}$, and false otherwise. We have, as a base case, that $V[\{(\varepsilon, \varepsilon), (\varepsilon, \varepsilon), \dots, (\varepsilon, \varepsilon)\}]$ is true. Before we provide the recurrence relation used to define our dynamic programming algorithm, we first provide a key observation.

\ifpaper
\input{first_letter_is_necessary}
\else
\input{first_letter_is_necessary}
\input{proof_first_letter_is_nessesary}
\fi

Following \Cref{obs:first_letter_is_nessesary}, let $X(\{(z_1, y_1), \dots, (z_r, y_r)\}) = \bigcap_{i \in [r]} \{z_i[1], u_i[1]\}$. Assuming $X(\{(z_1, y_1), (z_2, y_2), \dots, (z_r, y_r)\})$ is non-empty and contains the letter $x_1$, we define the set $T_1 = (z_1', y_1'), (z_2', y_2'), \dots, (z_r', y_r')$ where $(z_i', y_i')$ is either $(z_i[2, \vert z_i \vert], y_i)$, if $z_i[1] = x_1$, or $(z_i, y_i[2, \vert y_i \vert])$ otherwise. Further, if we have $X(\{(z_1, y_1), (z_2, y_2), \dots, (z_r, y_r)\}) = \{x_1, x_2 \}$ we construct the set $T_2 = (\bar{z}_1', \bar{y}_1'), (\bar{z}_2', \bar{y}_2'), \dots, (\bar{z}_r', \bar{y}_r')$ where $(\bar{z}_i', \bar{y}_i')$ is either $(z_i[2, \vert z_i \vert], y_i)$, if $z_i[1] = x_2$, or $(z_i, y_i[2, \vert y_i \vert])$ otherwise. Note that the difference between $T_1$ and $T_2$ is in the first letter of $z_i$ and cardinality of $X(\{(z_1, y_1), (z_2, y_2), \dots,$ $(z_r, y_r)\})$ is at most two by construction. With this notation, we now define our recurrence relation, using $Z$ to denote the set $\{(z_1, u_1), (z_2, u_2), \dots, (z_r, u_r)\}$:

\begin{equation}
    \label{eq:recurence_relation}
    V[Z] = \begin{cases}
        \text{True,} & \text{if }Z = \{(\varepsilon, \varepsilon)\}, (\varepsilon, \varepsilon), \dots, (\varepsilon, \varepsilon),\\
        \text{False,} & \text{if }X(Z) = \emptyset,\\
        V[T_1] & \text{if }X(Z) = \{x\},\\
        V[T_1] \land V[T_2] & \text{if }X(Z) = \{x_1, x_2\}.
    \end{cases}
\end{equation}

\ifpaper
\input{recurrence_relation}
\else
\input{recurrence_relation}
\input{proof_recurrence_relation}
\fi

Postponing the runtime analysis for a moment, we get the following result for our third main problem, that is to decide whether there exists (and determine a) $w\in\Sigma^{\ell+m}$ for given $S\subseteq\Sigma^{\ell}$ and $u\in\Sigma^m$.

\begin{corollary}
    \label{col:V_table_for_completeness}
    There exists $w \in \Sigma^*$ such that $C(w, u) = \{(v_1, v_2, \dots, v_r)\}$ iff $V[\{(v_1, u), (v_2, u), \dots,$ $(v_r, u)\}]$ is true.
\end{corollary}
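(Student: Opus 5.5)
The plan is to prove both directions by reducing to the preceding lemma on the recurrence \eqref{eq:recurence_relation}. I read that lemma as saying that $V[\{(z_1,y_1),\dots,(z_r,y_r)\}]$ is true iff there is a word $x$ of length $|z_iy_i|$ with $x\in z_i\shuffle y_i$ for every $i\in[r]$. I would then fix the tuple set $Z_u=\{(v_1,u),\dots,(v_r,u)\}$, so that all tuples share the same second component $u$. Throughout, I use that $v\in C(w,u)$ iff $w\in u\shuffle v$, which is the definition of $C(w,u)$.

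\emph{Forward direction.} Assume $C(w,u)=\{v_1,\dots,v_r\}$. Then $v_i\in C(w,u)$ for every $i$, so by \Cref{def:shuffle_occs} there are disjoint, exhaustive embeddings of $u$ and $v_i$ into $w$. Hence $w$ witnesses a solution of the pairwise disjoint embedding problem for $Z_u$, and the recurrence lemma gives that $V[Z_u]$ is true. This direction is routine.

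\emph{Backward direction.} Assume $V[Z_u]$ is true. The recurrence lemma gives a word $w$ with $\{v_1,\dots,v_r\}\subseteq C(w,u)$, so the remaining work is the reverse inclusion $C(w,u)\subseteq\{v_1,\dots,v_r\}$. I would take $w$ to be exactly the word read off along the recursion: at each state the next letter is the unique element of $X(\cdot)$, or, in the two-letter case, either of $x_1,x_2$, for which both $V[T_1]$ and $V[T_2]$ hold. I would then argue by induction on $|z_iy_i|$, following \Cref{obs:first_letter_is_necessary}. Let $e$ be any embedding of $u$ in $w$. Its first step either consumes $w[1]$ into $u$ or leaves $w[1]$ in the complement. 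The claim to establish is that in either case the resulting pair of remaining suffixes is again one of the tuple states to which the recursion applies, and that this state is true in $V$. The point of the conjunction $V[T_1]\land V[T_2]$ is precisely that both possible ways of assigning the first letter must remain valid. Unwinding the induction, the complement $\overline{e}$ agrees letter by letter with some $v_i$, which yields $\overline{e}\in\{v_1,\dots,v_r\}$.

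The main obstacle is this last inductive step: showing that an arbitrary embedding of $u$ cannot make a choice outside the branches tracked by $V$, and so cannot produce a complement outside $S$. It is not clear that the recurrence controls this. If the induction cannot be closed, I would add an explicit final check to the construction instead. That check computes $C(w,u)$ for the witness $w$ using \Cref{alg:computing_C} (\Cref{prop:runtime_complement_set}) and compares it with $S$, so that the ``if'' direction is certified by a direct equality test rather than by the recurrence alone.
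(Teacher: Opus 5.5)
\textbf{The ``if'' direction cannot be proved, because it is false.}

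Your forward direction is fine, and it is all the paper actually supports. The paper gives no argument for this corollary beyond \Cref{lem:recurrence_relation}. That lemma, read as you read it, characterises exactly the existence of some $w$ with $\{v_1,\dots,v_r\}\subseteq C(w,u)$. No cleverer induction can get past the obstacle you flag. Take $u=\ta$, $r=1$, $v_1=\ta\tb\ta$. The only words with $\ta\tb\ta\in C(w,\ta)$ are the elements of $\ta\shuffle\ta\tb\ta=\{\ta\ta\tb\ta,\ta\tb\ta\ta\}$. Their complement sets are $C(\ta\ta\tb\ta,\ta)=\{\ta\tb\ta,\ta\ta\tb\}$ and $C(\ta\tb\ta\ta,\ta)=\{\ta\tb\ta,\tb\ta\ta\}$. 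So no $w$ has $C(w,\ta)=\{\ta\tb\ta\}$. Yet $V[\{(\ta\tb\ta,\ta)\}]$ is true: $\ta\ta\tb\ta$ is a witness, and unwinding the recurrence also gives True.

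The structural reason your induction cannot close is that $V$ has no state recording complements outside $S$. The two branches $T_1,T_2$ of the conjunction correspond to two different first letters of $w$, i.e.\ two different candidate words. They do not correspond to the two ways of routing a fixed letter $w[1]$ into $u$ or into the complement. Inside $T_1$ the routing of every tuple is forced. Hence $V$ never examines the other embeddings of $u$ in the word it builds.

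\textbf{Your fallback does not repair this.} It changes the procedure instead of proving the stated equivalence. It is also not a correct decision procedure: it tests $C(w,u)=S$ with \Cref{alg:computing_C} only for the single witness produced by the reconstruction in \Cref{lem:contructing_w}. For $u=\ta$ and $S=\{\ta\tb\}$ the reconstruction may output $\ta\tb\ta$, for which $C(\ta\tb\ta,\ta)=\{\ta\tb,\tb\ta\}\neq S$, although $C(\ta\ta\tb,\ta)=\{\ta\tb\}$.

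What can honestly be proved is the inclusion version: $V[\{(v_1,u),\dots,(v_r,u)\}]$ is true iff $\{v_1,\dots,v_r\}\subseteq C(w,u)$ for some $w$. Deciding equality needs genuinely more. One option is to test every $w\in\bigcap_{i}(u\shuffle v_i)$ with \Cref{alg:computing_C}. Another is a table that additionally tracks embeddings of $u$ whose complements leave $S$.

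Even your forward direction leans on your reading of \Cref{lem:recurrence_relation}, and the recurrence as printed does not compute that existence predicate. It uses $\land$ where the lemma's proof argues ``either \dots\ or''. It also always strips from $z_i$ when $z_i[1]=y_i[1]$. For example, for $\{(\ta\tb,\ta\tc),(\tc\ta,\ta\tb)\}$ the recurrence returns False, although $\ta\tc\ta\tb$ is a (unique) solution.
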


Before we can prove the runtime of our construction, we need the following.

\ifpaper
\input{computation_cell_time}
\else
\input{computation_cell_time}
\input{proof_computation_cell_time}
\fi


\ifpaper
\input{constructing_w}
\else
\input{constructing_w}
\input{proof_contructing_w}
\fi

This leads us to the final theorem of this section, where we presented algorithms for all three combinations of determining information on one of $u$, $w$, and $C(w,u)$ from the other two.

\ifpaper
\input{compliment_to_w}
\else
\input{compliment_to_w}
\input{proof_compliment_to_w}
\fi

\begin{corollary}
    \label{col:general_problem}
    Given a set of tuples $Z =  \{(z_1, y_1), (z_2, y_2), \dots, (z_r, y_r)\}$ such that $\vert z_i y_i \vert = \vert z_j y_j\vert$, for all $i, j \in [r]$ we can determine, in $O(r n^{r + 1})$ if there exists a word $w$ of length $n$ allowing for a disjoint embedding of each tuple in $Z$. Further, if such a word exists, we can determine it in $O(r n^{r + 1})$ time.
\end{corollary}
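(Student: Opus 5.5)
We prove \Cref{col:general_problem} by running the dynamic programme for the table $V$ from the previous subsection on an arbitrary tuple set $Z$, and by noting that nothing in its correctness or runtime analysis uses the fact that all second components are the same word $u$. The recurrence relation \eqref{eq:recurence_relation} and its correctness lemma are stated for general tuples $(z_i,y_i)$, where $z_i$ is a suffix of the $i$-th first component and $y_i$ a suffix of the $i$-th second component. Only the specialisation in \Cref{col:V_table_for_completeness} sets $y_i=u$. Our plan is therefore to re-read those proofs with $u_i$ replaced by $y_i$, and to check that \Cref{obs:first_letter_is_nessesary} is also purely local: the first letter of any valid $w$ must be the first letter of $z_i$ or of $y_i$, for every $i$.

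The steps, in order, are as follows.

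\textbf{Table size.} Index the cells by tuples of suffix pairs with equal combined length $\ell\in[n]_0$. For a fixed $\ell$, each $i$ has at most $\ell+1$ choices of how to split the remaining $\ell$ letters between $z_i$ and $y_i$. This gives at most $(\ell+1)^r$ cells per level and $O(n^{r+1})$ cells in total.

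\textbf{Evaluation order.} Fill the table by increasing $\ell$. Every cell at level $\ell$ refers only to $T_1$ and $T_2$, which lie at level $\ell-1$, so the recursion is well founded.

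\textbf{Cost per cell.} By the earlier lemma on cell computation time, $X(Z)$ and the keys of $T_1$ and $T_2$ can be computed in $O(r)$ time. For this we encode each cell by the vector of suffix lengths $(|z_1|,\dots,|z_r|)$ together with $\ell$, and use direct addressing.

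\textbf{Total.} This gives $O(r n^{r+1})$ time for the decision problem.

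For the construction part we follow the earlier lemma on constructing $w$. Store with each true cell the letter $x\in X(Z)$ whose successor cell made it true. Starting from the root cell $Z$, walk down along these stored choices, appending $x$ at each step, until the all-empty cell is reached. The walk has $n$ steps, each costing $O(r)$ for updating the key. Correctness follows from the correctness lemma: by induction on $\ell$, the stored letter extends a valid word for the successor cell to a valid word for the current cell, where every tuple $(z_i,y_i)$ embeds disjointly and exhaustively.

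The main obstacle is the claimed case $V[T_1]\land V[T_2]$ when $|X(Z)|=2$. A priori a disjunction seems natural here, since $w$ need only start with one of the two letters. However, if $|X(Z)|=2$, then some $i$ has $z_i[1]\neq y_i[1]$, with both letters lying in $X(Z)$. I would re-check that the statement's correctness rests only on the earlier lemma as given and is not affected by this. For the corollary I would treat the combination as the lemma prescribes, namely use whichever branch is valid, and store the branch actually used for reconstruction. The other potential subtlety is the hashing/indexing cost. I would commit to the explicit length-vector encoding, so that lookups take $O(r)$ time deterministically.
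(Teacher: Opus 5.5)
You follow the same route as the paper, which gives no separate argument for this corollary: it is just \Cref{lem:recurrence_relation,lem:computation_cell_time,lem:contructing_w} read for arbitrary second components. You are right that the case $|X(Z)|=2$ must be a disjunction, as the proof of \Cref{lem:recurrence_relation} itself says.

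There is, however, a genuine gap that your plan inherits from that lemma. For a letter $x\in X(Z)$ the successor $T_1$ is deterministic: it removes $x$ from $z_i$ whenever $z_i[1]=x$, even when also $y_i[1]=x$. The completeness direction asserted in the proof of \Cref{lem:recurrence_relation} then fails. That direction says every word valid for $Z$ yields one valid for $T_1$, but a valid $w$ may be forced to give its first letter to $y_i$.

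For example, $Z=\{(\ta\tb,\ta\tc),(\ta\tc\ta,\tb)\}$ is solved by $w=\ta\tc\ta\tb$: embed $\ta\tc$ at positions $1,2$ and $\ta\tb$ at $3,4$; embed $\ta\tc\ta$ at $1,2,3$ and $\tb$ at $4$. Yet the recurrence proceeds as follows:
\begin{itemize}
\item $X(Z)=\{\ta\}$ and $T_1=\{(\tb,\ta\tc),(\tc\ta,\tb)\}$;
\item then $X=\{\tb\}$, with successor $\{(\varepsilon,\ta\tc),(\tc\ta,\varepsilon)\}$;
\item then $X=\{\ta\}\cap\{\tc\}=\emptyset$, so $V[Z]$ is false.
\end{itemize}
Taking a disjunction over $x\in X(Z)$ does not help, since $|X|=1$ throughout. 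Your induction for the reconstruction only proves soundness (a true cell yields a valid word), so it does not detect the problem. Re-reading the paper's proofs with $u_i$ replaced by $y_i$ therefore cannot close this gap.

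To repair the recurrence you must branch over both ways of consuming $x$, for every $i$ with $z_i[1]=y_i[1]=x$. A cell can then have up to $2^r$ successors. Your $O(r)$ cost per cell no longer holds, and neither does storing just the letter $x$ as the back-pointer; the naive bound becomes $O(r\,2^r n^{r+1})$.

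To actually obtain $O(rn^{r+1})$ you need an extra idea. For instance, let the current letter be consumed by the tuples one at a time. Add intermediate states in which tuples $1,\dots,k$ have already advanced to level $\ell+1$ and tuples $k+1,\dots,r$ are still at level $\ell$. Each such state also carries one bit recording which component of tuple $1$ consumed the letter, so the letter itself is recoverable. There are only $O(r)$ times as many such states as $V$ has cells, and each has at most two successors. A reachability computation over this layered graph, with back-pointers through the $rn$ intermediate steps for reconstruction, then stays within the claimed bound.
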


%% file: correctness_of_P.tex
\begin{proposition}\label{prop:correctness_of_P}
	For $w \in \Sigma^*$, $u \in \ScatFact(w)$, $i \in [|u|+1]$ and $j \in [|w|]$, we have $P[i,j] = C(w[1..j],u[1..i-1])$.
\end{proposition}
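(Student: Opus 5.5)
The proof is by induction on $j$, the column index, with an inner case distinction on $i$. Throughout we use the convention that $C(x,y)=\emptyset$ whenever $y\notin\ScatFact(x)$, i.e., no embedding exists. We also use the direct characterisation from the definition: $v\in C(x,y)$ iff there are disjoint and exhaustive embeddings $e\disjoincup\overline{e}$ of $y$ and $v$ into $x$. The central tool is a decomposition of such a pair according to which of the two embeddings covers the last position $|x|$ of $x$.

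\textbf{Key recurrence.} We first establish, for $j\ge 2$ and $i\ge 2$,
\[
C(w[1..j],u[1..i-1]) = \begin{cases} C(w[1..j-1],u[1..i-2]) \cup C(w[1..j-1],u[1..i-1])\cdot w[j] & \text{if } w[j]=u[i-1],\\ C(w[1..j-1],u[1..i-1])\cdot w[j] & \text{otherwise.}\end{cases}
\]
For the inclusion ``$\subseteq$'', take $v$ with disjoint exhaustive embeddings $e$ of $u[1..i-1]$ and $\overline{e}$ of $v$. Position $j$ is the image of exactly one of them, and in either case it is the image of the last index.
\begin{itemize}
\item If $e(i-1)=j$, then $w[j]=u[i-1]$. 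Restricting $e$ to $[i-2]$ and keeping $\overline{e}$ gives a disjoint exhaustive pair in $w[1..j-1]$, so $v\in C(w[1..j-1],u[1..i-2])$.
\item If $\overline{e}(|v|)=j$, then $v[|v|]=w[j]$. Restricting $\overline{e}$ gives $v[1..|v|-1]\in C(w[1..j-1],u[1..i-1])$.
\end{itemize}
For ``$\supseteq$'', we simply extend the corresponding embedding by mapping its new last index to $j$. Disjointness and exhaustiveness are preserved, and monotonicity holds because all other images lie in $[j-1]$.

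\textbf{Base cases.} Row $i=1$ holds because $C(x,\varepsilon)=\{x\}$, and the algorithm sets $P[1,j]=\{w[1..j]\}$. Column $j=1$ is checked directly: $C(w[1],u[1])$ is $\{\varepsilon\}$ exactly when $w[1]=u[1]$ and empty otherwise, and for $i\ge 3$ it is empty because $i-1>1$.

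\textbf{Inductive step.} For $i-1>j$ the prefix of $u$ is longer than $w[1..j]$, so no embedding exists and $C=\emptyset=P[i,j]$. Otherwise the algorithm's assignment matches the key recurrence exactly once we apply the induction hypothesis to column $j-1$, using the convention $\emptyset\cdot w[j]=\emptyset$.

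The main obstacle is bookkeeping rather than mathematics. First, the empty-set convention must be stated explicitly and shown compatible with the union and concatenation in the algorithm. Second, we must check that boundary entries read from column $j-1$, such as $P[i,j-1]$ with $i-1>j-1$, are themselves correctly $\emptyset$ by the induction hypothesis. Third, the pseudocode places the assignment $P[1,j]=\{w[1..j]\}$ inside the inner loop. We need to observe that row $1$ of column $j$ is never read while column $j$ is being filled; only column $j-1$ is read. Hence this placement does not affect correctness.
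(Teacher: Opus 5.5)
Your proposal is correct and follows essentially the same route as the paper: induction on $j$ via the recurrence $P[i,j]=P[i-1,j-1]\cup P[i,j-1]\cdot w[j]$ (first term only if $w[j]=u[i-1]$). Your split on which of the two embeddings covers position $j$ proves the two inclusions more cleanly than the paper does, and you make explicit the empty-set convention and the row-$1$ entries that the paper leaves implicit. One small correction to your final remark: putting $P[1,j]=\{w[1..j]\}$ inside the inner loop is harmless only when $|u|\geq 1$, because for $u=\varepsilon$ the loop over $i\in[2,|u|+1]$ is empty, so row $1$ is never filled beyond column $1$ and that degenerate case needs the assignment moved to the outer loop.
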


%% file: proof_correctness_of_P.tex
\begin{proof}
	We prove this statement inductively on $j$. For the base case consider $j=1$, i.e., we want to show that $P[i,1] = C(w[1],u[1..i-1])$. We have
	\begin{align*}
		P[i,1] = 
		\begin{cases}
			\{w[1]\}, & \text{if } i=1\\
			\{\varepsilon\}, & \text{if } w[1] = u[1]\\
			\emptyset, & \text{otherwise.}
		\end{cases} 
		= C(w[1],u[1..i-1])
	\end{align*}
	For the inductive step we assume that $P[i,j] = C(w[1..j],u[1..i-1])$. We consider three cases as per Algorithm~\ref{alg:computing_C}.
	\begin{description}
		\item[Case ($i-1 >j+1$).] By Algorithm~\ref{alg:computing_C} we have $P[i,j+1] = \emptyset = C(w[1..j+1],u[1..i-1])$ since $|w[1..j+1]| = j+1 < i-1 = |u[1..i-1]|$.
		\item[Case ($i-1 \leq j+1$).] We subdivide this case based on the value of $w[j + 1]$ relative to $u[i - 1]$.
		\begin{description}
			\item[Subcase ($w{[j+1]}= u{[i-1]}$).] Here we have 
			\begin{align*}
				P[i,j+1] &= P[i-1,j] \cup P[i,j] \cdot w[j+1]\\
				&= C(w[1..j],u[1..i-2]) \cup C(w[1..j],u[1..i-1]) \cdot w[j+1]\\
				&\overset{(*)}{=} C(w[1..j+1],u[1..i-1]).
			\end{align*}
			Where $(*)$ can be shown by two inclusions. First, let $v \in  C(w[1..j],u[1..i-2]) \cup C(w[1..j],u[1..i-1])\cdot w[j+1]$ If $v \in  C(w[1..j],u[1..i-2])$ then, as the \nth{$(j+1)$} letter of $w$ and the \nth{$(i-1)$} letter of $u$ match, $v \in C(w[1..j],u[1..i-1])$ . Otherwise, if $v \in C(w[1..j],u[1..i-1])\cdot w[j+1]$ then we obtain $v \in C(w[1..j],u[1..i-1])$ follows since $v$ is the complement scattered factor that contains the letter $u[i-1]$.
			
			Second, let $v \in C(w[1..j],u[1..i-1])$. Then, $v$ can either be formed using the matching letter, i.e., $v \in C(w[1..j],u[1..i-1])\cdot w[j+1]$ or skipping this letter, i.e., $v \in  C(w[1..j],u[1..i-2])$.
			
			\item[Subcase ($w{[j+1]}\neq u{[i-1]}$).] This case follows the same argument as the second part of the union for the previous case as we can only skip the non matching letter in $u$, i.e., use it if $v$.
		\end{description}
	\end{description}
	This concludes the proof.\qed
\end{proof}

%% file: runtime_complement_set.tex
\begin{theorem}\label{prop:runtime_complement_set}
	Given $w \in \Sigma^*$ and $u \in \ScatFact(w)$ we can compute the set $C(w,u)$ in time $O(|w||u|\binom{w}{u})$ and space $O(|u|\binom{w}{u})$.
\end{theorem}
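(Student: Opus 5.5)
Correctness is already settled by \Cref{prop:correctness_of_P}: \Cref{alg:computing_C} returns $P[|u|+1,|w|]=C(w,u)$. What remains is to bound its time and space. I will do this by bounding the size of each cell of $P$ in terms of embedding counts, and then bounding the cost of filling each cell.

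\textbf{Cell sizes.} The key observation is that every word in $P[i,j]=C(w[1..j],u[1..i-1])$ arises from at least one embedding of $u[1..i-1]$ in $w[1..j]$. Hence
\[|P[i,j]|\le\binom{w[1..j]}{u[1..i-1]}.\]
Next I will relate this prefix count to $\binom{w}{u}$. I fix one embedding $f$ of $u[i..|u|]$ in $w[j+1..|w|]$, which exists whenever the cell is relevant, i.e.\ whenever $u[1..i-1]$ lies in a prefix that can still be extended to an embedding of $u$. Every embedding of $u[1..i-1]$ in $w[1..j]$ then extends by $f$ to a distinct embedding of $u$ in $w$. So every relevant cell has size at most $\binom{w}{u}$.

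Cells that cannot be extended to a full embedding are a technical nuisance, because their sizes may exceed $\binom{w}{u}$. I expect this to be the main obstacle. My first attempt is to prune them: precompute, via the greedy rightmost embedding of each suffix of $u$, the threshold $\mathrm{last}(i)$ such that $u[i..|u|]$ embeds in $w[j+1..|w|]$ iff $j<\mathrm{last}(i)$. I then leave every cell beyond this threshold empty. This does not change the final entry, since such cells never contribute to an embedding of all of $u$.

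\textbf{Time.} A cell is computed by a union of at most two sets of at most $\binom{w}{u}$ words, each of length at most $|w|$, after appending one letter. Using a trie, or sorting with radix sort, the union costs $O(|w|\binom{w}{u})$. This gives $O(|u||w|\cdot|w|\binom{w}{u})$ in total, which is one factor of $|w|$ too many. To remove it, I will represent the sets implicitly. In each row, words are stored as nodes of a trie/DAG in which appending $w[j]$ creates a new node pointing to its parent. Each new cell then costs $O(\binom{w}{u})$ pointer operations, and deduplication is handled by hashing node identifiers together with their last letter, since equal words built this way share canonical nodes. This yields $O(|w||u|\binom{w}{u})$ over the $|w|(|u|+1)$ cells.

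\textbf{Space.} The column-wise recurrence only reads column $j-1$, so it suffices to keep two columns of $|u|+1$ cells each, giving $O(|u|\binom{w}{u})$ cell entries. The words themselves are shared through the node structure, with a constant number of new nodes per stored entry per step. Garbage-collecting nodes that are no longer referenced keeps this within the same bound.
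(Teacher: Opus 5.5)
Your time analysis is sound, and it is more careful than the paper's proof. The paper simply asserts that each of the $|w||u|$ cells of $P$ can be filled in $O(\binom{w}{u})$ time, and that assertion overlooks exactly the two issues you raise. First, cells that cannot be extended to a full embedding can be far larger than $\binom{w}{u}$: for $w=\ta\tb\tc(\ta\tb)^k$ and $u=\ta\tb\tc$ we have $\binom{w}{u}=1$, while $P[3,|w|]=C(w,\ta\tb)$ contains the $k$ distinct words $\tb\tc(\ta\tb)^{t-1}\ta(\ta\tb)^{k-t}$, $t\in[k]$. Second, copying or comparing a word explicitly costs up to $|w|$. Your pruning via the rightmost embeddings of the suffixes of $u$, the extension-of-embeddings bound for live cells, and the hash-consed parent-pointer trie repair both points. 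Because of the hashing, the time bound you obtain is an expected one rather than worst-case.

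The genuine gap is the last sentence of your space paragraph. Garbage collection frees only unreferenced nodes, but in a parent-pointer trie every prefix of a stored word stays referenced. So the live structure is the whole trie of all words in the two current columns. Your ``constant number of new nodes per stored entry per step'' bounds how many nodes are created per step, not how many accumulate over $|w|$ steps.

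Take $w=(\ta\tb)^n$ and $u=\ta$, so $|u|\binom{w}{u}=n$. The final cell is $C(w,u)=\{(\ta\tb)^k\tb(\ta\tb)^{n-k-1}\mid 0\le k<n\}$. The prefixes of length greater than $2k$ of the $k$-th word are prefixes of no other word, so the live trie has at least $\sum_{k=0}^{n-1}(2n-1-2k)=n^2$ nodes. What your construction actually guarantees is space proportional to the number of nodes ever created, i.e.\ $O(|w||u|\binom{w}{u})$.

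In this example, merely writing $C(w,u)$ down takes $n(2n-1)$ letters. So the bound $O(|u|\binom{w}{u})$ is only meaningful if each stored word counts as a single unit, which is what the paper tacitly assumes; its proof does not discuss space at all. Under that convention your first sentence is the entire argument: two columns of $|u|+1$ cells with at most $\binom{w}{u}$ entries each. But that is not the cost model in which you proved the time bound. You need to say explicitly which model each bound refers to, and drop the incorrect claim about the size of the node structure.
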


%% file: proof_runtime_complement_set.tex
\begin{proof}
	The correctness of the computation follows by \Cref{prop:correctness_of_P} since we have $P[|w|,|u|+1] = C(w,u)$. To derive the runtime, observe that the value of $P[i, j]$ can be computed from the values of $P[i - 1, j - 1]$ and $P[i, j - 1]$ in $O(\binom{w}{u})$ time. As there are $\vert w \vert \vert u \vert$ entries in the table, we obtain the stated runtime.\qed
\end{proof}

%% file: intersection.tex
\begin{lemma}\label{lem:u_from_w_and_s_u_in_intersection}
    Let $n\in\N$, $w\in\Sigma^n$, and $S \subseteq \Sigma^m$ for some $m\leq n$. For all $u \in \Sigma^{n-m}$ we have $S \subseteq C(w, u)$ iff $u \in \bigcap_{v \in S} C(w, v)$.
\end{lemma}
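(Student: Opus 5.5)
This follows from the symmetry of complement scattered factors observed right after the definition of $C(w,u)$. By that definition, $v\in C(w,u)$ means $|v|=|w|-|u|$ and $w\in u\shuffle v$. By \Cref{def:shuffle_occs}, $w\in u\shuffle v$ holds iff $|w|=|u|+|v|$ and there are $e_1\in E(w,u)$, $e_2\in E(w,v)$ with $e_1\disjoincup e_2$. This condition is symmetric in $u$ and $v$, so $u\shuffle v=v\shuffle u$.

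First I will record this equivalence explicitly. For $u\in\Sigma^{n-m}$ and $v\in\Sigma^m$ with $|w|=n$, the length constraints $|v|=|w|-|u|$ and $|u|=|w|-|v|$ both reduce to $n=m+(n-m)$. Hence
\[ v\in C(w,u) \iff w\in u\shuffle v \iff w\in v\shuffle u \iff u\in C(w,v). \]
One subtlety must be handled. The definition of $C(w,u)$ is stated for $u\in\ScatFact(w)$, but the lemma quantifies over all $u\in\Sigma^{n-m}$. I will read $C(w,x)$ as the set $\{y\in\Sigma^{|w|-|x|}\mid w\in x\shuffle y\}$ for arbitrary $x$. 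This set is automatically empty when $x\notin\ScatFact(w)$, since any $y$ in it forces an embedding of $x$ into $w$. With this reading the equivalence above holds for every $u$ and $v$ of the right lengths.

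Then I will chain the equivalences:
\[ S\subseteq C(w,u) \iff \forall v\in S:\ v\in C(w,u) \iff \forall v\in S:\ u\in C(w,v) \iff u\in \bigcap_{v\in S} C(w,v). \]
The middle step uses the symmetry, which applies because every $v\in S$ has length $m$ and $u$ has length $n-m$.

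The main obstacle is the case $S=\emptyset$. There the left side is trivially true. The right side is an empty intersection, which must be read as the ambient set $\Sigma^{n-m}$. With that convention both sides hold for every $u$. I will state this convention explicitly and note that it is consistent with the algorithm, which only uses the intersection as a candidate set. Otherwise the argument is just unfolding definitions, and no earlier result beyond the shuffle definition is needed.
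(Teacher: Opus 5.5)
Your proposal is correct and follows the paper's proof: both rest on the symmetry $v\in C(w,u)\iff u\in C(w,v)$ and then unfold the inclusion and the intersection elementwise. Your explicit handling of $u\notin\ScatFact(w)$ and of $S=\emptyset$ (reading the empty intersection as $\Sigma^{n-m}$) is extra care that the paper's proof omits, but it does not change the argument.
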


%% file: PROOFintersection.tex
\begin{proof}
	Observe that, by definition of the complement set, if $v \in C(w, u)$ then $u \in C(w, v)$. We are going to show both directions separately.
	Therefore, let $S \subseteq C(w,u)$. By definition, this implies that for all $v \in S$ we have $v \in C(w,u)$. By the previously stated observation, for all $v \in S$, $u \in C(w,v)$ which implies that $u \in \bigcap_{v \in S} C(w,v)$.
	
	For the second direction, let $u \in \bigcap_{v \in S} C(w,u)$ which implies that for all $v \in S$, $u \in C(w,v)$. By the dame argument as before, we get that for all $v \in S$, $v \in C(w,u)$, i.e., $S \subseteq C(w,u)$.\qed
\end{proof}

%% file: Swuproblem.tex
\begin{theorem}\label{Swuproblem}
	Given a set $S \subseteq \Sigma^m$ and $w\in\Sigma^n$, we can determine if there exists a word $u$ such that $C(w, u) = S$ in $O(n^2\binom{n}{m})$ time and $O(n \binom{n}{m})$ space.
\end{theorem}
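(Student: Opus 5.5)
We follow the two-step strategy described before the statement, based on \Cref{lem:u_from_w_and_s_u_in_intersection}. If $C(w,u)=S$, then in particular $S\subseteq C(w,u)$. By \Cref{lem:u_from_w_and_s_u_in_intersection} this means $u\in\mathcal{U}:=\bigcap_{v\in S}C(w,v)$. So any solution lies in $\mathcal{U}$, and the algorithm only has to (i) compute $\mathcal{U}$ and (ii) test each candidate $u\in\mathcal{U}$ for $C(w,u)=S$. It returns the first candidate that passes, or reports that no $u$ exists. Correctness is immediate. The test in (ii) is exact, and by the lemma no solution can lie outside $\mathcal{U}$. If $S=\emptyset$, no $u\in\ScatFact(w)$ has an empty complement set, so we answer negatively.

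\textbf{Computing $\mathcal{U}$.} Rather than calling \Cref{alg:computing_C} once for every $v\in S$, which could be expensive since $|S|$ may be as large as $\binom{n}{m}$, we enumerate directly. There are exactly $\binom{n}{m}$ index sets $I\subseteq[n]$ with $|I|=m$. For each such $I$, we read off $v_I=w[I]$ and its complement $u_I=w[[n]\setminus I]$ in $O(n)$ time. We store $S$ in a trie, so membership $v_I\in S$ costs $O(n)$. Every pair $(v,u)$ with $u\in C(w,v)$ arises from some such $I$. Hence $\mathcal{U}$ is the set of words $u$ such that, for every $v\in S$, some $I$ has $(v_I,u_I)=(v,u)$. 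We compute it as follows. We collect all pairs $(u_I,v_I)$ with $v_I\in S$ and insert them into a trie keyed by $u_I$. Under each $u$ we record the set of distinct $v$'s seen with it. Then $u\in\mathcal{U}$ exactly when this set has $|S|$ elements. This step costs $O(n\binom{n}{m})$ time and $O(n\binom{n}{m})$ space, since there are at most $\binom{n}{m}$ pairs, each of length at most $n$.

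\textbf{Testing candidates.} The same enumeration in fact gives us $C(w,u)$ itself: it is the set of $v_I$ over all $I$ with $u_I=u$. So, for each $u$, we also record the full set $\{v_I : u_I=u\}$, including the $v_I\notin S$ (or simply a flag that some $v_I\notin S$ occurred). Then $C(w,u)=S$ holds iff $u\in\mathcal{U}$ and no $v_I\notin S$ was recorded under $u$. This check fits in the same bounds. If we instead rerun \Cref{alg:computing_C} on each candidate, we need \Cref{prop:runtime_complement_set} and the bound $\binom{w}{u}\le\binom{n}{m}$.

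\textbf{Main obstacle.} The delicate point is the accounting that yields $O(n^2\binom{n}{m})$ rather than something like $|\mathcal{U}|\cdot n\,|u|\binom{w}{u}$. The key observation is that the total number of (candidate, embedding) pairs is bounded by the number of index sets, $\binom{n}{m}$. Each pair is handled in $O(n)$ time per word operation. An extra factor $n$ covers the trie and set bookkeeping, for example comparing or hashing the length-$m$ words and sorting or deduplicating the $v$'s under each $u$. This gives $O(n^2\binom{n}{m})$ time. Space is dominated by storing $O(\binom{n}{m})$ words of length at most $n$, giving $O(n\binom{n}{m})$.
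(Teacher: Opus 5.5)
Your proof is correct, but it reaches the bound by a different route than the paper.

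\textbf{The paper's route.} It calls \Cref{alg:computing_C} once per $v\in S$ to build $\mathcal{U}=\bigcap_{v\in S}C(w,v)$. It then calls it once per candidate $u\in\mathcal{U}$ and compares the sorted set $C(w,u)$ with $S$. In its accounting the factor $|S|$ (first phase) and the factor $|\mathcal{U}|$ (second phase) are simply dropped.

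\textbf{Your route.} You enumerate the $\binom{n}{m}$ position sets once. This yields every pair $(u_I,v_I)$, and hence $C(w,u)=\{v_I : u_I=u\}$ for all $u$ simultaneously. So \Cref{lem:u_from_w_and_s_u_in_intersection} is barely needed, and the cost of $O(n\binom{n}{m})$ plus bookkeeping is transparent.

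\textbf{The counting fact.} Your observation $\sum_{x\in\Sigma^{k}}\binom{w}{x}=\binom{n}{k}$ is exactly what would justify the paper's dropped factors, since $\sum_{v\in S}\binom{w}{v}\le\binom{n}{m}$ and $\sum_{u\in\mathcal{U}}\binom{w}{u}\le\binom{n}{m}$. Your fallback of rerunning \Cref{alg:computing_C} on each candidate needs this summed bound. The per-candidate bound $\binom{w}{u}\le\binom{n}{m}$ that you cite is not enough.

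\textbf{Independence from the DP bound.} Your main route does not depend on the per-call bound of \Cref{prop:runtime_complement_set}. That bound is delicate, because intermediate DP entries can exceed $\binom{w}{u}$. For $w=\ta\tc(\ta\tb)^k$ and $u=\ta\tc$ one has $\binom{w}{u}=1$, but $P[2,|w|]=C(w,\ta)$ has $k+1$ elements. Your fallback would inherit this dependence.

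\textbf{Trade-off.} The paper's route is modular: it reuses the DP as a black box and processes only the words of $S$ and the surviving candidates. Yours always enumerates all $\binom{n}{m}$ position sets. For the worst-case bound claimed, this difference is immaterial.

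\textbf{Minor point.} To keep storing $S$ within budget, reject at once if $|S|>\binom{n}{m}$. In Problem~1 this cannot happen anyway, since $S\subseteq\ScatFact_m(w)$.
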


%% file: PROOFSwuproblem.tex
\begin{proof}
	Observe that for each word $v \in S$ we can, by Proposition \ref{prop:runtime_complement_set}, determine the set $C(w, v)$ in $O(n \cdot m \binom{n}{m})$ time and $O(m \binom{n}{m})$ space. Therefore, determining the set $\mathcal{U}$ requires $O(\vert S \vert n \cdot m \binom{n}{m})$ time, without requiring any increase in space. Similarly, as we can determine the value of $C(w, u)$ in $O(n \cdot (n - m) \binom{n}{n - m}) = O(n \cdot (n - m) \binom{n}{m})$ time and $O((n - m) \binom{n}{n - m})$ space, and can check if $C(w, u) = S$ in at most $O(\vert S \vert m \log \vert S \vert)$ time, corresponding to checking each word in $S$ against the words in $C(w, u)$ after sorting each set, we get the total time complexity of $O(n \cdot (n - m) \binom{n}{m} + n \cdot m) \binom{n}{m}) = O(n^2\binom{n}{m})$, and space complexity of $O((n - m) \binom{n}{m} + m \binom{n}{m}) = O(n \binom{n}{m})$.\qed
\end{proof}

%% file: first_letter_is_necessary.tex
\begin{lemma}\label{obs:first_letter_is_nessesary}
    Given a set of tuples $Z = \{(z_1, y_1), (z_2, y_2), \dots, (z_r, y_r)\}$ with $\vert z_i y_i\vert = \vert z_j y_j \vert$, $\forall i,j \in [r]$, $r\in\N$, there exists a word $w$ of length $\vert z_i y_i \vert$ allowing a disjoint embedding of each tuple $(z_i, y_i) \in Z$ only if for all $i\in[r]$ we have either $z_i[1] = w[1]$ or $y_i[1] = w[1]$.
\end{lemma}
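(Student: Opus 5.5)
The plan is to argue directly from the definition of a disjoint (and exhaustive) pair of embeddings. Suppose $w$ has length $n = |z_iy_i|$ and, for each $i\in[r]$, there are embeddings $e_i\in E(w,z_i)$ and $\overline{e}_i\in E(w,y_i)$ with $e_i\disjoincup\overline{e}_i$. The only real content is to show that position $1$ of $w$ must be hit by one of them, and that it is then hit by the \emph{first} letter of the corresponding word.

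The first step is exhaustiveness. By the definition used for the shuffle, $e_i\disjoincup\overline{e}_i$ means the images $e_i([|z_i|])$ and $\overline{e}_i([|y_i|])$ are disjoint and their union is $[n]$, since $|z_i|+|y_i|=n$. In particular $1$ lies in exactly one of the two images. If $n=0$, every tuple is $(\varepsilon,\varepsilon)$ and the statement is vacuous, or is read as trivially satisfied, so I would assume $n\geq 1$.

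The second step is monotonicity. Say $1=e_i(k)$ for some $k$. Because $e_i(1)<\cdots<e_i(|z_i|)$ and all values are at least $1$, we must have $k=1$. The embedding property then gives $z_i[1]=w[e_i(1)]=w[1]$. Symmetrically, if $1=\overline{e}_i(k)$, then $k=1$ and $y_i[1]=w[1]$.

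Since this holds for every $i\in[r]$ independently, the conclusion follows, and the lemma is exactly this necessary condition. No step is a genuine obstacle. The only care needed is to make explicit that "disjoint embedding" in the problem statement includes exhaustiveness, which follows from the length condition $|z_iy_i|=n$ together with disjointness: $|z_i|+|y_i|=n$ distinct positions fill all of $[n]$. One should also allow one of $z_i,y_i$ to be empty, in which case the other word is $w$ itself and the claim is immediate.
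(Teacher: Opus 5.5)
Your proof is correct and follows the same argument as the paper: since $|w| = |z_iy_i|$ and the two embeddings are disjoint, together they cover every position of $w$, so position $1$ is hit by one of them, and it must be hit by the first letter of $z_i$ or of $y_i$. Compared with the paper, you additionally spell out the monotonicity step ($e_i(k)=1$ forces $k=1$) and the edge cases (an empty $z_i$ or $y_i$, and $n=0$), which the paper leaves implicit.
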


%% file: proof_first_letter_is_nessesary.tex
\begin{proof}
	Note that as $\vert w \vert = \vert z_i y_i \vert$, every symbol in $w$ must be in the embedding of either $z_i$ or $y_i$. Thus, $w[1]$ must be the first symbol of either $z_i$ or $y_i$.\qed
\end{proof}

%% file: recurrence_relation.tex
\begin{lemma}\label{lem:recurrence_relation}
    \Cref{eq:recurence_relation} correctly defines the table $V$.
\end{lemma}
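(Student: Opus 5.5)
The plan is to prove the statement by induction on the common length $\ell = \vert z_i y_i\vert$ of the tuples indexing a cell. The claim to establish is: for every admissible index set $\{(z_1,y_1),\dots,(z_r,y_r)\}$, the value produced by \Cref{eq:recurence_relation} is true exactly when some $w$ of length $\ell$ admits, for each $i$, disjoint and exhaustive embeddings of $z_i$ and $y_i$. Throughout I read $X(\cdot)$ as $\bigcap_{i\in[r]}\{z_i[1],y_i[1]\}$, the letters that can serve as the first letter of such a $w$. Since $\vert X\vert\le 2$ by construction, the four cases of the recurrence are exhaustive.

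The base case $\ell=0$ is immediate: all tuples are $(\varepsilon,\varepsilon)$ and $w=\varepsilon$ works trivially. For the case $X(Z)=\emptyset$, \Cref{obs:first_letter_is_nessesary} applies directly. Any valid $w$ has $w[1]\in\{z_i[1],y_i[1]\}$ for every $i$, so $w[1]\in X(Z)$. Hence an empty $X(Z)$ forces the value false.

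For the inductive step with $X(Z)\neq\emptyset$, I will prove a decomposition claim. A word $w=xw'$ is valid for $Z$ if and only if $x\in X(Z)$ and $w'$ is valid for the reduced set obtained by deleting, in each tuple, the first letter of whichever component is matched to $w[1]=x$; this reduced set is exactly $T_1$ (for $x=x_1$) or $T_2$ (for $x=x_2$). The ``if'' direction is the easy one: prepend $x$ to $w'$ and extend each pair of embeddings of the reduced tuple by mapping the removed first letter to position $1$ and shifting all other positions by one. Disjointness and exhaustiveness are preserved. For the ``only if'' direction, take disjoint exhaustive embeddings of $z_i$ and $y_i$ into $w$. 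Position $1$ belongs to exactly one of them, and it must be that component's first letter, because embeddings are increasing. Removing position $1$ and shifting yields disjoint exhaustive embeddings of the reduced tuple into $w'$. Since the reduced tuples all have common length $\ell-1$, the induction hypothesis applies to their cells.

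The main obstacle is the tie case $z_i[1]=y_i[1]=x$, where the recurrence commits to deleting from $z_i$. To handle it I will use an exchange argument. Suppose an embedding pair $(e,f)$ uses $f(1)=1$ for $y_i$. Then $e(1)>1$ because the embeddings are disjoint, and $w[1]=x=z_i[1]$. Swapping which component owns position $1$ gives new embeddings mapping $z_i[1]\mapsto 1$ and $y_i[1]\mapsto e(1)$. Both remain increasing, since $e(1)<e(2)$ and $1<f(2)$, and in the reduced instance only the tails matter. So without loss of generality the $z_i$-component owns position $1$, which justifies the fixed choice in $T_1$ and $T_2$.

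Finally I assemble the cases. When $X(Z)=\{x\}$, validity of $Z$ is equivalent to validity of $T_1$. When $X(Z)=\{x_1,x_2\}$, a valid $w$ begins with $x_1$ or with $x_2$, so I will show that the two-letter case combines the cell values of $T_1$ and $T_2$ through the branching over the first letter. This is the step where I will check most carefully that the connective written in \Cref{eq:recurence_relation} matches the disjunction over the two possible first letters that the decomposition claim produces.
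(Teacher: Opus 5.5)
Your base case, the $X(Z)=\emptyset$ case and the ``if'' half of your decomposition are fine and match the paper. The gap is in your exchange argument for the tie case $z_i[1]=y_i[1]=x$.

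\textbf{The swap does not preserve monotonicity.} After the swap, $y_i$ is sent to $(e(1),f(2),\dots,f(|y_i|))$. This is increasing only if $e(1)<f(2)$, and neither $e(1)<e(2)$ nor $1<f(2)$ gives that. Saying that ``only the tails matter'' does not help either: in the reduced tuple $(z_i[2..|z_i|],y_i)$ the whole of $y_i$, first letter included, must still be embedded.

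\textbf{Concrete failure.} Take the tuple $(\ta\tc,\ta\tb)$ and $w=\ta\tb\ta\tc$. The only embedding of $\ta\tb$ is $(1,2)$, and your swap sends it to $(3,2)$. Indeed $\tb\ta\tc\notin\tc\shuffle\ta\tb$, so your per-word decomposition claim is false as soon as ties occur.

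\textbf{The recurrence itself is wrong here.} No cleverer exchange can rescue the step, because the recurrence with the fixed tie-break is incorrect. Take $Z=\{(\ta\tc,\ta\tb),(\ta\tb\ta,\tc)\}$. The word $\ta\tb\ta\tc$ is a witness, so $V[Z]$ must be true. Yet $X(Z)=\{\ta\}$ and $T_1=\{(\tc,\ta\tb),(\tb\ta,\tc)\}$ has no witness, because $\tc\shuffle\ta\tb=\{\tc\ta\tb,\ta\tc\tb,\ta\tb\tc\}$ and $\tb\ta\shuffle\tc=\{\tc\tb\ta,\tb\tc\ta,\tb\ta\tc\}$ are disjoint.

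\textbf{What can be proved.} The paper's proof has the same hole: it simply asserts that any witness for $Z$ yields disjoint embeddings for $T_1$ and never mentions ties. You were right to single out the tie case as the crux. What can actually be proved is an amended recurrence: every tuple with $z_i[1]=y_i[1]=x$ is branched on both possible deletions, and $V[Z]$ is the disjunction over $x\in X(Z)$ and over all resulting successor sets. For that recurrence your decomposition argument goes through verbatim, with no exchange step. The price is up to $2^r$ successors per cell, so the $O(|Z|)$ per-cell cost in Lemma~\ref{lem:computation_cell_time} no longer follows directly.

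\textbf{The connective.} Do not leave this open. The two-letter case must read $V[T_1]\lor V[T_2]$, which is what your decomposition yields and what the paper's proof argues in words (``either \dots\ or \dots''). The $\land$ written in \Cref{eq:recurence_relation} is wrong. For $Z=\{(\ta\ta\tb,\tb),(\ta\tb\ta,\tb)\}$ we have $X(Z)=\{\ta,\tb\}$ and the unique witness is $\ta\tb\ta\tb$. The $\tb$-branch $\{(\ta\ta\tb,\varepsilon),(\ta\tb\ta,\varepsilon)\}$ is false, so the conjunction evaluates to false although $V[Z]$ is true.
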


%% file: proof_recurrence_relation.tex
\begin{proof}
	We consider each case in \Cref{eq:recurence_relation} separately. First, if we have $Z = \{(\varepsilon, \varepsilon), (\varepsilon, \varepsilon), \dots, (\varepsilon, \varepsilon)\}$, then there exists a disjoint embedding of each tuple into the empty word and thus $V[Z]$ is true. Second, following Observation \ref{obs:first_letter_is_nessesary}, if there is no symbol in the set $X(Z)$, then there can not be any word containing every tuple in $Z$ as a disjoint embedding.
	
	Third, if there is some such shared symbol then any word $w$ containing every tuple in $Z$ as a disjoint embedding must start with the symbol $x$. Further, $w$ must contain a pairwise disjoint embedding of every tuple in $T_1$. In the other direction, if there exists a word $w$ into which there exists a pair of disjoint embeddings of $v_i, u_i$, for each tuple in $T_1$, then the word $x w$ allows a pairwise disjoint embedding of every tuple in $Z$. Finally, if there are two symbols in $X$, $x_1$ and $x_2$, then, by the same argument as above, there is a word allowing for a pair of disjoint embeddings of every tuple in $Z$ iff there is either some word starting with $x_1$ and allowing for a pair of disjoint embeddings of every tuple in $T_1$, or some word starting with $x_2$ and allowing for a pair of disjoint embeddings of every tuple in $T_2$.\qed
\end{proof}

%% file: computation_cell_time.tex
\begin{lemma}
    \label{lem:computation_cell_time}
    The table $V$ for the set $Z = \{(v_1, u_1), (v_2, u_2), \dots, (v_r, u_r)\} $ can be computed in $O(r n^{r + 1})$ time, where $\vert Z \vert = r$ and $\vert v_1 u_1 \vert = n$.
\end{lemma}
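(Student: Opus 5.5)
The plan is to bound separately the number of entries of $V$ that are ever needed and the work spent on each entry, and then multiply. Every index of $V$ is a set of $r$ tuples $(z_i,y_i)$. Here $z_i$ is a suffix of $v_i$ and $y_i$ is a suffix of $u_i$, and all tuples satisfy the common constraint $|z_iy_i|=\ell$ for some $\ell\in[n]_0$.

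\textbf{Counting the entries.} For a fixed $\ell$, the pair $(z_i,y_i)$ is determined by $|z_i|\in[\min(\ell,|v_i|)]_0$, because then $|y_i|=\ell-|z_i|$. So tuple $i$ has at most $\ell+1\le n+1$ choices. Summing over the $n+1$ values of $\ell$ gives at most $(n+1)^{r+1}=O(n^{r+1})$ indices. I would store $V$ as an $(r+1)$-dimensional array indexed by $(\ell,|z_1|,\dots,|z_r|)$, so that each lookup takes $O(r)$ time to form the index. Alternatively, I would compute $V$ by memoised recursion from the root $\{(v_1,u_1),\dots,(v_r,u_r)\}$. Either way, only entries with equal $|z_iy_i|$ are ever touched, since every step in \Cref{eq:recurence_relation} removes exactly one letter from each tuple and so lowers $\ell$ by one uniformly.

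\textbf{Filling the table.} I would fill the entries in increasing order of $\ell$, starting from the base case $\ell=0$. By \Cref{lem:recurrence_relation} the recurrence is correct, and $T_1,T_2$ have length $\ell-1$, so they are already computed when entry $\ell$ is processed. For a single cell, three steps are needed.
\begin{itemize}
\item Compute $X(Z)=\bigcap_{i\in[r]}\{z_i[1],y_i[1]\}$ (reading the paper's $u_i[1]$ as $y_i[1]$). Since each set has size at most two, this is $r$ intersections of constant-size sets, which takes $O(r)$ time.
\item For each of the at most two letters in $X(Z)$, build the index of $T_1$ (resp.\ $T_2$) by deciding per tuple which component loses its first letter. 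This is $O(1)$ per tuple, so $O(r)$ in total, provided suffixes are represented by their lengths.
\item Look up $V[T_1]$ (and $V[T_2]$), which costs $O(r)$ for index computation.
\end{itemize}
Hence each cell costs $O(r)$, and the total is $O(r\cdot n^{r+1})$.

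\textbf{Main obstacle.} The main obstacle is to make the per-cell cost genuinely $O(r)$ rather than $O(rn)$ or worse, which rules out copying strings when forming $T_1$ and $T_2$. I would address this by representing every index purely as the integer vector of suffix lengths. Then $z_i[1]=v_i[|v_i|-|z_i|+1]$ is an $O(1)$ random access, and the table address is computed as a mixed-radix number in $O(r)$ arithmetic operations on the RAM with logarithmic word size. If such addresses exceed the word size, I would instead use hashing of the length vector, with expected $O(r)$ cost.

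A minor side point concerns the empty-component case when $z_i$ or $y_i$ is $\varepsilon$. There the corresponding first letter is undefined and must be treated as absent from the set. This is handled in $O(1)$ per tuple and does not affect the bound.
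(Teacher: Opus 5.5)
Your proposal is correct and follows the paper's proof. Both bound the number of cells by $O(n^{r+1})$, evaluate each cell in $O(r)$ time from the already-filled layer $\ell-1$, and fill in increasing order of $\ell$; your explicit $(n+1)^{r+1}$ count and length-vector indexing make precise what the paper leaves implicit, and they also give the right exponent where the paper's last line misprints $O(|Z|n^{|Z|})$. One caveat you share with the paper: the $O(r)$ bound is for \Cref{eq:recurence_relation} exactly as written, and you take \Cref{lem:recurrence_relation} on trust, but its rule of always stripping $z_i$ when $z_i[1]=x$ is unsound when also $y_i[1]=x$ (for $\{(\mathtt{ab},\mathtt{ac}),(\mathtt{cab},\mathtt{a})\}$, which is realised by $w=\mathtt{acab}$, it returns false), and a repaired recurrence letting each such tuple strip either component can have up to $2^r$ successors per cell, so the $O(r)$ per-cell cost would no longer follow as stated.
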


%% file: proof_computation_cell_time.tex
\begin{proof}
	Observe first that the value of $V[Z]$, where each tuple $(z_i, y_i)$ satisfies $\vert z_i y_i \vert = \ell$ can be computed in $O(\vert Z \vert)$ time assuming the values of $V[Z']$ have been computed for every $Z' = \{(z_1', y_1'), (z_2', y_2'), \dots, (z_r', y_r')\}$ where $\vert z_i' y_i' \vert = \ell - 1$. Therefore, by first determining the value of $V[Z]$ by increasing the total length of the words in the tuples, we can compute the table in $O(\vert Z \vert \cdot \vert V \vert)$ time, where $\vert V \vert$ is the total number of entries in $V$. As the number of such entries is bounded by $n^{\vert Z\vert + 1}$, we have the stated runtime of $O(\vert Z \vert n^{\vert Z \vert})$. \qed
\end{proof}

%% file: constructing_w.tex
\begin{lemma}
    \label{lem:contructing_w}
    Given a set of tuples $Z = \{(z_1, y_1), (z_2, y_2), \dots, (z_r, y_r)\}$ such that $V[Z]$ is true and $\vert z_i y_i \vert = \vert z_j y_j\vert$, for all $i, j \in [r]$, we can reconstruct some word $w$ of length $n = \vert z_i y_i \vert$ such that there exists a pair of disjoint embeddings of each tuple in $Z$ in $O(\vert Z \vert n)$ time.
\end{lemma}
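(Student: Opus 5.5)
The plan is a standard traceback through the filled table $V$, guided by \Cref{eq:recurence_relation} and justified by \Cref{lem:recurrence_relation}. Starting from the current tuple set $Z^{(0)} = Z$, we build $w$ one letter at a time from left to right. At step $k$ the current set $Z^{(k)} = \{(z^{(k)}_1, y^{(k)}_1), \dots, (z^{(k)}_r, y^{(k)}_r)\}$ satisfies $\vert z^{(k)}_i y^{(k)}_i\vert = n-k$ for all $i$, and $V[Z^{(k)}]$ is true. If all tuples are $(\varepsilon,\varepsilon)$, we stop. Otherwise we compute $X(Z^{(k)})$ by intersecting the at most two-element sets $\{z^{(k)}_i[1], y^{(k)}_i[1]\}$ over $i \in [r]$.

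Since $V[Z^{(k)}]$ is true, \Cref{lem:recurrence_relation} tells us $X(Z^{(k)})\neq\emptyset$. We then choose a letter $x \in X(Z^{(k)})$ together with the corresponding successor set $T$ ($T_1$ or $T_2$) for which $V[T]$ is true, append $x$ to $w$, and set $Z^{(k+1)} = T$. Such a choice exists because a valid $w$ for $Z^{(k)}$ must start with some letter of $X(Z^{(k)})$, and the remainder of that $w$ witnesses the corresponding successor. Each successor entry has total length one less, so it was already computed when the table was filled.

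Correctness follows by induction on $n-k$, using the converse direction shown in the proof of \Cref{lem:recurrence_relation}. If $w'$ admits pairwise disjoint embeddings for every tuple of $T$, then $xw'$ admits them for $Z^{(k)}$: for each $i$, we embed the leading $x$ into whichever of $z^{(k)}_i, y^{(k)}_i$ was shortened in forming $T$, and shift the embeddings of $w'$ by one. After $n$ steps we therefore obtain a word of length $n$ with the required disjoint embeddings.

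For the runtime, each of the $n$ steps computes $X$ in $O(r)$ time, builds at most two successor sets in $O(r)$ time, and performs at most two table lookups. This gives $O(\vert Z\vert n)$ overall, provided each tuple set is represented by the $2r$ suffix lengths, so that a successor is formed by decrementing one index per tuple and the table is indexed in $O(r)$.

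The main obstacle is the two-letter case. Here the recurrence as written uses $V[T_1]\land V[T_2]$, but the argument in \Cref{lem:recurrence_relation} really supports a disjunction. In the traceback we need only \emph{some} successor to be true, so we test both and follow a true one. A second subtlety arises when $z_i[1]=y_i[1]=x$: then it does not matter which component we shorten, since the resulting residual pairs are interchangeable for embedding purposes, so we fix the convention of shortening $z_i$.
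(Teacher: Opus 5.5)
Your traceback is the paper's argument: pick $x\in X(Z)$ whose successor entry is true, append $x$, repeat, and cite \Cref{lem:recurrence_relation}. You are also right that the two-letter case should be a disjunction; the paper's own proof of \Cref{lem:recurrence_relation} argues ``either \dots\ or''. The gap is your resolution of the tie case $z_i[1]=y_i[1]=x$. The claim that shortening $z_i$ or $y_i$ gives interchangeable residual pairs is false. For $(z,y)=(\ta\tb,\ta\tc)$ we have $\tb\ta\tc\in\tb\shuffle\ta\tc$ but $\tb\ta\tc\notin\ta\tb\shuffle\tc$.

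The claim is load-bearing in your existence step. The remainder of a valid $w$ only witnesses the successor in which each tuple loses its first letter in whichever component the embedding of $w[1]$ used, and that component need not be $z_i$. Concretely, take $Z=\{(\ta\tb,\ta\tc),(\ta\tc\ta,\tb)\}$.
\begin{itemize}
\item Its only solution is $w=\ta\tc\ta\tb$, and there $w[1]$ must belong to the embedding of $\ta\tc$ in the first tuple.
\item We have $X(Z)=\{\ta\}$, but your fixed successor $\{(\tb,\ta\tc),(\tc\ta,\tb)\}$ has no solution.
\item Indeed $\tb\shuffle\ta\tc=\{\tb\ta\tc,\ta\tb\tc,\ta\tc\tb\}$ and $\tc\ta\shuffle\tb=\{\tb\tc\ta,\tc\tb\ta,\tc\ta\tb\}$ are disjoint.
\end{itemize}
So if $V[Z]$ means, as the paper defines it, that a solution exists, your procedure gets stuck at the very first step.

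There are two ways to repair this.

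(i) Read $V$ as the table computed by \Cref{eq:recurence_relation}, which has your convention built in. Then derive the existence of a true successor directly from the recurrence, not from a witness word. Only the sound direction you already prove is needed ($w'$ works for $T$ implies $xw'$ works for $Z^{(k)}$), and the convention becomes harmless. The price is that this table is not complete: on the $Z$ above it evaluates to false although a solution exists. That damages \Cref{lem:recurrence_relation} and \Cref{col:V_table_for_completeness} rather than the present lemma.

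(ii) Keep the semantic $V$ and let the successors range over both shortenings at every tie position, amending the recurrence the same way. The paper's own proof, which removes $x$ ``from the front of one of the words in each tuple'', is compatible with this reading. Your witness argument then becomes valid. However, you must also explain how to locate a true successor among up to $2^r$ candidates in $O(r)$ time per step. One way is to resolve the tuples one at a time through intermediate table entries, so each choice involves at most two lookups.
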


%% file: proof_contructing_w.tex
\begin{proof}
	We achieve this in a manner analogous to \Cref{alg:computing_C}.
	Specifically, starting at $V[Z]$ we choose as the first symbol $x \in X(Z)$ such that the set $T$ formed by removing the symbol $x$ from the front of one of the words in each tuple satisfies $V[T]$ as being true. We repeat this process until each tuple is formed of two empty words. Correctness follows from Lemma \ref{lem:recurrence_relation} and Corollary \ref{col:V_table_for_completeness}.\qed
\end{proof}

%% file: compliment_to_w.tex
\begin{theorem}
    \label{thm:compliment_to_w}
    Given a set of words $V = \{v_1, v_2, \dots, v_r\} \subseteq \Sigma^{\ell}$ for some $n,r,\ell \in \N$ and a single word $u \in \Sigma^{n - \ell}$ we can determine if there exists some word $w \in \Sigma^n$ such that $C(w, u) = V$ in $O(r n^{r + 1})$ time and further, if such a word exists, we can construct it in $O(r n^{r + 1})$ time.
\end{theorem}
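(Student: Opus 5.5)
The plan is to reduce the theorem to the pairwise disjoint embedding problem and then apply the machinery for the table $V$ built above. Given $V=\{v_1,\dots,v_r\}\subseteq\Sigma^{\ell}$ and $u\in\Sigma^{n-\ell}$, I form the tuple set $Z=\{(v_1,u),(v_2,u),\dots,(v_r,u)\}$. Every tuple has combined length $|v_iu|=n$, so the hypothesis of \Cref{obs:first_letter_is_nessesary}, \Cref{lem:recurrence_relation}, \Cref{lem:computation_cell_time} and \Cref{lem:contructing_w} is met. By definition of $C(w,u)$ through the shuffle (\Cref{def:shuffle_occs}), $v_i\in C(w,u)$ holds exactly when $w$ admits disjoint, exhaustive embeddings of $v_i$ and $u$. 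Hence a word $w$ of length $n$ is a solution of the pairwise disjoint embedding problem for $Z$ if and only if $V\subseteq C(w,u)$.

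The first algorithmic step is deciding. I fill the table $V$ by the recurrence \Cref{eq:recurence_relation}, processing entries in order of increasing total suffix length. \Cref{lem:recurrence_relation} gives correctness, and \Cref{lem:computation_cell_time} gives the bound $O(rn^{r+1})$: there are at most $n^{r+1}$ cells, each depending on at most two cells one level shorter. Each cell costs $O(r)$, since $X(\cdot)$ is an intersection of $r$ sets of size at most two. The entry $V[Z]$ then decides existence, which is the content of \Cref{col:V_table_for_completeness}. The second step is construction. If $V[Z]$ is true, I apply \Cref{lem:contructing_w}: starting from $Z$, I repeatedly choose a letter $x\in X(\cdot)$ leading to a true cell and strip it, which yields $w$ in $O(rn)$ additional time. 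The total remains $O(rn^{r+1})$.

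The step I expect to be the main obstacle is the gap between $V\subseteq C(w,u)$, which is what the table certifies, and the required equality $C(w,u)=V$. The resulting $w$ might admit additional complement scattered factors outside $V$. I would lean on \Cref{col:V_table_for_completeness} as stated, which identifies the two notions. To make the argument airtight, I would add a verification step: compute $C(w,u)$ for the reconstructed $w$ via \Cref{prop:runtime_complement_set} and compare it with $V$ after sorting.

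If this check fails for the first reconstructed $w$, the natural repair is to explore the other true branches in the reconstruction, namely the case $X=\{x_1,x_2\}$ where both $T_1$ and $T_2$ are true. This searches among all $w$ with $V\subseteq C(w,u)$ for one where equality holds. The difficulty is keeping that search, and the verification cost $O(n|u|\binom{w}{u})$, within the claimed $O(rn^{r+1})$ bound, and this is where I expect the real work to lie.
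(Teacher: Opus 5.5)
\textbf{There is a genuine gap.} Your first two paragraphs reproduce the paper's proof exactly; that proof consists only of citing \Cref{lem:recurrence_relation}, \Cref{lem:contructing_w} and \Cref{col:V_table_for_completeness}. The obstacle you single out is real, and neither your proposal nor the paper closes it.

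The table certifies only $V\subseteq C(w,u)$, and the ``if'' direction of \Cref{col:V_table_for_completeness} is false. Take $u=\ta$, $V=\{\ta\tb\ta\}$, $n=4$. The entry $V[\{(\ta\tb\ta,\ta)\}]$ is true, with witness $\ta\ta\tb\ta$. But $C(w,\ta)$ consists of the words obtained from $w$ by deleting one $\ta$, so $|C(w,\ta)|$ is the number of maximal blocks of $\ta$'s in $w$. Both candidates $\ta\ta\tb\ta$ and $\ta\tb\ta\ta$ have two blocks; for example $C(\ta\ta\tb\ta,\ta)=\{\ta\tb\ta,\ta\ta\tb\}$. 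So ``leaning on the corollary as stated'' means relying on a false statement, and your argument as written proves only the inclusion version.

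Your verify-and-backtrack repair does not rescue the bound either. The verification cost $O(n|u|\binom{w}{u})$ you quote is not polynomial in general. More importantly, the true branches of the table can encode exponentially many words $w$ with $V\subseteq C(w,u)$; for $r=1$ these are all of $v_1\shuffle u$. Nothing in the table tells you which of them, if any, also satisfy $C(w,u)\subseteq V$. What is missing is a way to enforce this universal condition, namely that no embedding of $u$ leaves a complement outside $V$. A state consisting of one pair of suffixes per $v_i$ records nothing about the other embeddings of $u$ in the prefix built so far. This needs a genuinely new idea, or the statement must be weakened to $V\subseteq C(w,u)$.

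\textbf{The inherited recurrence is also unsound, even for the inclusion version.} So you cannot simply take over ``\Cref{lem:recurrence_relation} gives correctness'' and ``each cell depends on at most two cells one level shorter''. In the two-letter case the recurrence must read $V[T_1]\lor V[T_2]$, which is what the proof of the lemma actually argues. More seriously, when $z_i[1]=y_i[1]=x$ it always strips $x$ from $z_i$.

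Take $u=\ta\tc$ and $V=\{\ta\tb\tc,\tc\ta\tb,\ta\tc\tb\}$. The word $w=\ta\tc\ta\tb\tc$ satisfies $C(w,u)=V$ exactly. Yet the recurrence sees $X(Z)=\{\ta\}$ and passes to $T_1=\{(\tb\tc,\ta\tc),(\tc\ta\tb,\tc),(\tc\tb,\ta\tc)\}$. There $X(T_1)=\{\tb,\ta\}\cap\{\tc\}\cap\{\tc,\ta\}=\emptyset$, so $V[Z]$ comes out false.

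A correct recurrence has to branch, for every such $i$, over which component advances. A cell can then depend on up to $2^r$ cells of the next level, and the $O(r)$ cost per cell claimed in \Cref{lem:computation_cell_time} no longer applies.
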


%% file: proof_compliment_to_w.tex
\begin{proof}
	Follows from Lemmas \ref{lem:recurrence_relation} and \ref{lem:contructing_w}, and Corollary \ref{col:V_table_for_completeness}.\qed
\end{proof}

%% file: results.tex
In this section we present combinatorial insights on the complement scattered factor set. Since the runtime of our algorithms depends on $\binom{w}{u}$, i.e., the more embeddings $u$ has in $w$ the longer the algorithms need, we are
interested in insights about $C(w,u)$ independent of the number of embeddings. 
We start with a series of observations. First, we observe that in a word containing a letter square there exists a scattered factor that has one respective complement scattered factor but more than one embedding.

\begin{observation}\label{obslettersquare}
	Let $w \in \Sigma^*$ be some word containing a letter square $\ta\ta$ for $\ta\in\Sigma$, i.e., there exist $x,y\in\Sigma^{\ast}$ with $w=x\ta\ta y$. If $u\in\ScatFact(w)$ is of the form $x_u\ta y_u$ for $x_u\in\ScatFact(x)$ and $y_u\in\ScatFact(y)$,
	we have $|C(w,u)|<|E(w,u)|$.  The reason is that the embeddings $e_1,e_2$ of $u$ in $w$ with $e_1(|x_u|+1)=|x|+1$ and
	$e_2(|x_u|+1)=|x|+2$ result in the same complement scattered factor.
\end{observation}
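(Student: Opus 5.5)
The plan is to exhibit two distinct embeddings of $u$ in $w$ that produce the same complement scattered factor. Then the map $e\mapsto$ (complement word of $e$), from $E(w,u)$ onto $C(w,u)$, is not injective. Surjectivity is immediate from the definitions: every $v\in C(w,u)$ arises as $\sfe(\overline{e})$ for some $e\in E(w,u)$, because $w\in u\shuffle v$ gives disjoint exhaustive embeddings $e$ of $u$ and $\overline{e}$ of $v$. Conversely, each $e\in E(w,u)$ yields the word $\sfe(\overline{e})\in C(w,u)$ read off the unused positions. So $|C(w,u)|\le|E(w,u)|$ always holds, and a single collision gives strict inequality.

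To build the collision, first fix an embedding $e_x$ of $x_u$ into $x$, which exists since $x_u\in\ScatFact(x)$. Likewise fix an embedding $e_y$ of $y_u$ into $y$, with all positions shifted by $|x|+2$. Define $e_1$ by $e_1(k)=e_x(k)$ for $k\le|x_u|$, $e_1(|x_u|+1)=|x|+1$, and $e_1(|x_u|+1+k)=|x|+2+e_y(k)$. Define $e_2$ identically except $e_2(|x_u|+1)=|x|+2$. Both maps are strictly increasing and respect letters, since $w[|x|+1]=w[|x|+2]=\ta=u[|x_u|+1]$. Hence $e_1,e_2\in E(w,u)$, and they are distinct.

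Next, compare the complements. The unused positions of $e_1$ and $e_2$ agree outside $\{|x|+1,|x|+2\}$. Within that pair, $e_1$ leaves $|x|+2$ unused and $e_2$ leaves $|x|+1$ unused, and both carry the letter $\ta$ and sit at the same relative place among the unused positions. Therefore $\sfe(\overline{e_1})=\sfe(\overline{e_2})$, giving the collision and hence $|C(w,u)|<|E(w,u)|$.

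The only point needing care is the surjectivity/counting step, namely that $C(w,u)$ is exactly the image of $E(w,u)$ under $e\mapsto\sfe(\overline{e})$. This is a direct unfolding of the definitions of $u\shuffle v$ and $C(w,u)$, so no real obstacle is expected.
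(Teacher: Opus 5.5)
Your proof is correct and follows the paper's own one-line justification: two embeddings that differ only in whether $u[|x_u|+1]=\ta$ is taken from position $|x|+1$ or from position $|x|+2$ leave the same complement word. You also make explicit two points the paper leaves implicit. First, $e_1$ and $e_2$ must agree on all other indices, since otherwise the complements can differ. Second, $e\mapsto\sfe(\overline{e})$ is a surjection from $E(w,u)$ onto $C(w,u)$, and this is what turns a single collision into the strict inequality.
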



Further, we want to emphasise the simple fact that having a unique embedding does imply a singleton complement set while the converse does not hold.
\begin{observation}
If $u\in\ScatFact(w)$ for some $w\in\Sigma^{\ast}$ has exactly one embedding, then we have $|C(w,u)|=1$.  Notice that the other direction does not hold in general: consider $w=\ta\ta\ta$ and $u=\ta$, then $C(w,u)=\{\ta\ta\}$ but $|E(w,u)|=3$.
\end{observation}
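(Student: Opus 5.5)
The statement has two parts, and I would prove them separately.

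For the forward implication, assume $E(w,u)=\{e\}$ is a singleton. By the definition of $C(w,u)$, a word $v$ lies in $C(w,u)$ iff $w\in u\shuffle v$. By \Cref{def:shuffle_occs}, this means there is some embedding $e'\in E(w,u)$ and some embedding of $v$ that together are disjoint and exhaustive. Since $e$ is the only embedding of $u$, we must have $e'=e$.

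Given $e$, the set of positions left for $v$ is $[|w|]\setminus e([|u|])$. Listed in increasing order, these positions are forced, so the complement embedding $\overline{e}$ is uniquely determined. The word it spells, $\sfe(\overline{e})$, is therefore the only candidate for $v$. Hence $C(w,u)\subseteq\{\sfe(\overline{e})\}$.

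Conversely, $e$ together with $\overline{e}$ is a disjoint, exhaustive pair of embeddings. So $w\in u\shuffle \sfe(\overline{e})$, which gives $\sfe(\overline{e})\in C(w,u)$, and therefore $|C(w,u)|=1$.

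A more general way to say this: the map $E(w,u)\to C(w,u)$ given by $e\mapsto \sfe(\overline{e})$ is well defined and surjective. Any surjection from a singleton onto a nonempty set has a singleton image. The nonemptiness comes from $u\in\ScatFact(w)$.

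For the failure of the converse, the counterexample $w=\ta\ta\ta$, $u=\ta$ suffices, and it is an instance of \Cref{obslettersquare}. The embeddings of $u$ are $(1)$, $(2)$ and $(3)$, so $|E(w,u)|=3$. Deleting any single position of $\ta\ta\ta$ leaves $\ta\ta$, so $C(w,u)=\{\ta\ta\}$.

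There is no real obstacle here. The only point needing care is that the complement embedding $\overline{e}$ is uniquely determined by $e$, because an embedding is strictly increasing on a fixed set of positions.
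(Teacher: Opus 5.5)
Your argument is correct and is the reasoning the paper leaves implicit, since it states this observation without a written proof. A single embedding $e$ of $u$ forces the complementary position set $[|w|]\setminus e([|u|])$ and hence a unique complement word, and the counterexample $w=\ta\ta\ta$, $u=\ta$ is checked directly. Your surjection $E(w,u)\to C(w,u)$, $e\mapsto \sfe(\overline{e})$, also yields the general bound $|C(w,u)|\leq |E(w,u)|$, which the paper relies on elsewhere.
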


In the following, we use the universality index $\iota(w)$ (and thus the arch factorisation) of a given word
$w\in\Sigma^{\ast}$ to investigate the number complement scattered factors of a given scattered factor shorter 
than $\iota(w)$.

\ifpaper
\input{bin_coef}
\else
\input{bin_coef}
\input{proof_bin_coef_on_words_shorter_than_iota}
\fi

Note that the idea of Lemma~\ref{lem:bin_coef_on_words_shorter_than_iota} cannot be used to give any insight on words $|u| > \iota(w)$: the embeddings may belong to the same complement scattered factors. For instance, the complement scattered factors of $u = \mathtt{ab}$ in $w = \mathtt{ababbaba}$ are $\mathtt{ababba}$ (one occurrence), $\mathtt{abbaba}$ (three occurrences), $\mathtt{abbbaa}$ (one occurrence), $\mathtt{bababa}$ (two occurrences) and $\mathtt{babbaa}$ (one occurrence), i.e., $|C(w,u)| = 5 \not \geq 6 = \binom{4}{2}$.

We now specify properties for $w$ and $u$ such that $\vert C(w, u) \vert \geq 2$.

\ifpaper
\input{iota2}
\else
\input{iota2}
\input{proof_iota2}
\fi

\ifpaper
\input{ufirstmodus}
\else
\input{ufirstmodus}
\input{proof_ufirstmodus}
\fi

Note that the other direction of \Cref{ufirstmodus} does not hold (by contraposition): if $w=\ta\tb\tb\ta$ and $u=\tb$ we have $C(w,u)=\{\ta\tb\ta\}$.
Nevertheless, we can generalise the result to words of a universality index at least three: If we have at least three arches and three letters in $\Sigma$, we can present a condition on $w$ and $u$ such that we have at least two complement scattered factors of $u$ in $w$.

\ifpaper
\input{nontrivialCwu}

\else
\input{nontrivialCwu}
\input{proof_nontrivialCwu}

\fi

Note that the condition $|u|<\iota(w)$ in \Cref{nontrivialCwu} is strictly necessary witnessed by any word $w \in \Sigma^*$ that does have an empty rest and $u = \m(w)$ of length $|u| = \iota(w)$ which has only one unique embedding by definition. Thus, we have $|C(w,\m(w))| = 1$ contradicting the claim of \Cref{nontrivialCwu} without the mentioned condition.
Further, \Cref{nontrivialCwu} yields the unique scattered factor of length $\iota(w)-1$ with $|C(w,u)|=1$.

\begin{corollary}\label{fancywordtakingmodus}
	Let $w \in\Sigma^{\ast}$ with $\m(w)[i] = \ar_{i+1}(w)[1]$ for all $i\in[\iota(w)-1]$ and $\r(w)=\varepsilon$. The scattered factor $u = \m(w)[1..|\m(w)|-1]$ is the only scattered factor of length $\iota(w) -1$ such that $|C(w,u)| = 1$. 
\end{corollary}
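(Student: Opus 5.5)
The statement has two halves, and I would prove them separately: first that $u=\m(w)[1..\iota(w)-1]$ satisfies $|C(w,u)|=1$, then that every other $u'\in\ScatFact_{\iota(w)-1}(w)$ has $|C(w,u')|\geq 2$. Write $k=\iota(w)$. The hypothesis $\m(w)[i]=\ar_{i+1}(w)[1]$ means that at every arch boundary the last letter of $\ar_i(w)$ is repeated as the first letter of $\ar_{i+1}(w)$. So $w$ contains the letter square $\m(w)[i]\m(w)[i]$ at each of the $k-1$ boundaries.

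For the first half I would fix the canonical embedding $e_0$, which sends $u[i]=\m(w)[i]$ to the last position of $\ar_i(w)$. Its complement $v_0$ is $w$ with one letter removed from each boundary square. By the argument of \Cref{obslettersquare}, sending $u[i]$ instead to the first position of $\ar_{i+1}(w)$ gives the same complement. I would then show that an arbitrary embedding $e$ of $u$ also yields $v_0$, by induction on $i$ from $1$ to $k-1$.

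The tool for this induction is the elementary fact that, for positions $p<q$ with $w[p]=w[q]=c$, deleting $w[p]$ and deleting $w[q]$ give the same word iff $w[p..q]\in c^{+}$. Since $\m(w)[i]$ does not occur in $\inner_i(w)$, the letter $u[i]$ has only two possible places in arch $i$ and the next arch: the boundary square, or strictly inside $\ar_{i+1}(w)$ beyond its first letter. So it suffices to show that an embedding may always choose the square for $u[i]$, and that any other choice can be exchanged back without changing the complement. For the exchange I would use that the letters of $u$ after position $i$ are again modus letters, which must be embedded in later arches. Shifting $u[i]$ further right therefore forces the following letters to shift right as well, and I would argue that the complement is unchanged by this combined shift.

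For the second half I would invoke \Cref{nontrivialCwu} directly, since $|u'|=k-1<\iota(w)$. Any $u'$ that meets its hypothesis (presumably a difference from the modus at some position, relative to the arch structure) has at least two complement scattered factors. It then remains to check that the only word of length $k-1$ not covered by that hypothesis is exactly $\m(w)[1..k-1]$. I would obtain this by reading off the theorem's condition letter by letter, using $\r(w)=\varepsilon$ and the boundary-square assumption. This step is bookkeeping, but it depends on the precise form of \Cref{nontrivialCwu}, which I would match carefully.

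The main obstacle is the exchange argument in the first half. Moving a single deleted letter preserves the complement only across a block of equal letters. In general one must handle coordinated shifts of several letters of $u$ at once, and an arbitrary embedding need not place $u[i]$ in arch $i$ or $i+1$ at all. My first attempt would be to show that any embedding deviating from the boundary squares would leave too few arches for the remaining letters of $u$, which forces every embedding into the squares. If that fails, I would fall back on comparing prefixes of the complement arch by arch, via \Cref{prop:correctness_of_P}.
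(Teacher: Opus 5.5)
\textbf{The first half is false under the stated hypotheses, so your exchange step cannot work.} The assumption $\m(w)[i]=\ar_{i+1}(w)[1]$ only fixes the first letter of $\ar_{i+1}(w)$. Nothing stops $\m(w)[i]$ from reappearing later in $\ar_{i+1}(w)$ behind another letter, and then your own tool (deleting $w[p]$ or $w[q]$ gives the same word iff $w[p..q]\in c^{+}$) shows that no exchange brings it back.

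Concretely, take $w=\ta\tb\tc\tc\ta\tc\tb$. Its arches are $\ta\tb\tc$ and $\tc\ta\tc\tb$, with $\r(w)=\varepsilon$, $\m(w)=\tc\tb$ and $\ar_2(w)[1]=\tc$, so $u=\tc$. Deleting position $3$ gives $\ta\tb\tc\ta\tc\tb$ and deleting position $6$ gives $\ta\tb\tc\tc\ta\tb$, hence $|C(w,u)|=2$, in accordance with \Cref{iota2}. For $\iota(w)=3$, take $w=\ta\tb\tc\,\tc\ta\tb\,\tb\tc\tb\ta$ and $u=\tc\tb$. The embeddings $(3,6)$ and $(8,9)$ yield $\ta\tb\tc\ta\tb\tc\tb\ta\neq\ta\tb\tc\tc\ta\tb\tb\ta$.

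The embedding $(8,9)$ places both letters of $u$ in the last arch, which also defeats your fallback. Since $|u|=\iota(w)-1$ there is a whole spare arch, and a single arch can absorb several letters of $u$, so nothing forces embeddings into the boundary squares.

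What is provable is the statement under the stronger hypothesis $\ar_{i+1}(w)=(\m(w)[i])^{r_i}x_i$ with $\m(w)[i]\notin\letters(x_i)$, the analogue of \Cref{iota2}. Then the leftmost and the rightmost embedding of $u$ (the latter using $\r(w)=\varepsilon$) both map $u[i]$ into the run of $\m(w)[i]$'s straddling the $i$th arch boundary. Every embedding lies componentwise between them, and deleting one letter from each of these runs always produces the same word.

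\textbf{The second half cannot be read off from \Cref{nontrivialCwu}.} That theorem has no exceptional word. For $|\Sigma|\geq 3$ and $\iota(w)>2$ it asserts $|C(w,u)|>1$ for \emph{every} non-empty $u$ with $|u|<\iota(w)$, the modus prefix included, so it would contradict your first half. For $|\Sigma|=2$ or $\iota(w)\leq 2$ it says nothing. Worse, it fails at $|u|=\iota(w)-1$: for $w=\ta\tb\tc\tc\ta\tb\tb\tc\ta$ and $u=\tc\tb$, all four embeddings give $\ta\tb\tc\ta\tb\tc\ta$. Its proof uses that $\ar_1(w)\ar_2(w)$ is a prefix of $v$, which needs $|u|\leq\iota(w)-2$.

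The paper justifies the corollary only by this citation, so you should replace it with a direct argument, which is short. For $u'\in\ScatFact_{k-1}(w)$ and $j\in[k-1]$, take the two embeddings that use one letter of $u'$ per arch and skip arch $j+1$ resp.\ arch $j$, with identical choices elsewhere. Their complements differ only inside $\ar_j(w)\ar_{j+1}(w)$, so $|C(w,u')|=1$ gives $x\,\ar_{j+1}(w)=\ar_j(w)\,y$, where $x$ is $\ar_j(w)$ with an arbitrary occurrence of $u'[j]$ deleted. Thus $x$ is a prefix of $\ar_j(w)$, i.e.\ the deleted letter starts a run reaching the end of $\ar_j(w)$. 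As $\m(w)[j]$ occurs in $\ar_j(w)$ only at the end, $u'[j]=\m(w)[j]$.

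For $u'=\m(w)[1..k-1]$ the same identity reads $\ar_{j+1}(w)=\m(w)[j]\,y$ for every deleted occurrence of $\m(w)[j]$ in $\ar_{j+1}(w)$. This shows that the stronger hypothesis above is also necessary.
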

 The following result utilises the $\alpha$-$\beta$-factorisation introduced in the context of shortest absent scattered factors (SAS) \cite{DBLP:journals/fuin/KoscheKMS22} in \cite{DBLP:journals/tcs/FleischmannHHHMN23}. We make use of the definition of SAS of a given word here: A word $v$ is a SAS of $w$, if $v \notin \ScatFact(w)$ and, $\ScatFact(v) \setminus \{ v \} \subseteq \ScatFact(w)$.

\input{sas_unique_comp_sf}
 \ifpaper
 \else
 \input{proof_sas_unique_comp_sf}
 \fi
%

The following observation captures the case when $w$ contains a repetition $y^k$ and $u$ contains its base $y$.
Consider $w = \mathtt{bananaananas}, u = \mathtt{ana}$. Here, there are two embeddings with the same complement scattered factor $\mathtt{bnanaanas}$ at indices $(2,8,9)$ and $(2,10,11)$ caused by the square $\mathtt{na}$.

\begin{observation}
	Let $w \in \Sigma^*$ and $u \in \ScatFact(w)$. 
	If $w = xy^k z$ and $u = x' y z'$ for $x,z \in \Sigma^*, y \in \Sigma^+, k \in \N$ with $\al(x') \subseteq \al(x)$ and $\al(z') \subseteq \al(z)$ then $|E_=(w,u)| \geq k$.
\end{observation}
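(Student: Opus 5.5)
The plan is a direct construction. I will exhibit $k$ pairwise distinct embeddings of $u$ in $w$ that all delete the same letters outside the block $y^k$, and that differ only in which copy of $y$ they use. Deleting any one full copy of $y$ from $y^k$ always leaves $y^{k-1}$, so all $k$ embeddings yield the same complement scattered factor. Here I read $E_=(w,u)$ in the intended sense: a set of embeddings of $u$ whose complements $\sfe(\overline{e})$ coincide.

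\textbf{Step 1 (fixing the outer parts).} I will use the hypotheses on $x'$ and $z'$ in the form $x'\in\ScatFact(x)$ and $z'\in\ScatFact(z)$. The alphabet conditions $\al(x')\subseteq\al(x)$, $\al(z')\subseteq\al(z)$ are what the observation states, but they alone do not give embeddability. So I take the statement to mean that $u$ embeds with $x'$ inside $x$ and $z'$ inside $z$, and I flag this as the one point to make precise. Fix an embedding $e_x$ of $x'$ into $x$ and an embedding $e_z$ of $z'$ into $z$, the latter shifted by $|x|+k|y|$.

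\textbf{Step 2 (the $k$ embeddings).} For $j\in[k]$, let $e_j$ map the letters of $x'$ via $e_x$. It maps $y[t]$ to position $|x|+(j-1)|y|+t$ for $t\in[|y|]$, and maps the letters of $z'$ via the shifted $e_z$. Each $e_j$ is strictly increasing and letter-preserving, so it is an embedding of $u$ in $w$. The $e_j$ are pairwise distinct because the images of $y[1]$ differ (here $y\in\Sigma^+$ is used).

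\textbf{Step 3 (same complement).} For each $j$, the complement $\sfe(\overline{e_j})$ is the word formed by $x$ with the positions of $e_x$ removed, followed by $y^{j-1}y^{k-j}=y^{k-1}$, followed by $z$ with the positions of $e_z$ removed. This word is independent of $j$, so $\{e_1,\dots,e_k\}\subseteq E_=(w,u)$ and hence $|E_=(w,u)|\geq k$.

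Nothing here is a real obstacle beyond interpreting the hypotheses and the notation $E_=$ correctly. Once those are fixed, the argument is purely a bookkeeping of positions.
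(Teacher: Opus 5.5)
Your construction is correct and is the argument the paper has in mind. The paper states this observation without proof, motivating it only by the example $w=\mathtt{bananaananas}$, $u=\mathtt{ana}$, where using either copy of $\mathtt{na}$ leaves the same complement. You were also right to replace the alphabet conditions by $x'\in\ScatFact(x)$ and $z'\in\ScatFact(z)$, since the statement fails under the literal hypotheses: take $x=\ta$, $y=\ta\tb$, $k=2$, $z=z'=\varepsilon$ and $x'=\ta\ta$, so that $w=\ta\ta\tb\ta\tb$ and $u=\ta\ta\ta\tb$. Then $u$ has the single embedding $(1,2,4,5)$ in $w$, so $|E_=(w,u)|\leq 1<k$.
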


%% file: bin_coef.tex
\begin{lemma}\label{lem:bin_coef_on_words_shorter_than_iota}
	For $w \in \Sigma^*, u \in \ScatFact(w)$ with $|u| \leq \iota(w)$ we have $|E(w,u)| \geq \binom{\iota(w)}{|u|}$.
\end{lemma}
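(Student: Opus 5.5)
We use the arch factorisation $w = \ar_1(w)\cdots\ar_k(w)\r(w)$ with $k=\iota(w)$, and build many pairwise distinct embeddings of $u$ in $w$ directly. The key fact is that every arch $\ar_i(w)$ has universality index $1$, so $\letters(\ar_i(w))=\Sigma$. In particular, every letter $u[t]$ occurs in every arch.

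Let $m=|u|\le k$ and fix any index set $I=\{i_1<i_2<\cdots<i_m\}\subseteq[k]$. For each $t\in[m]$, let $e_I(t)$ be the position in $w$ of some occurrence of $u[t]$ inside the arch $\ar_{i_t}(w)$; to be definite, take the first such occurrence. Since the arches are consecutive, non-overlapping factors of $w$ and $i_1<\cdots<i_m$, we get $e_I(1)<e_I(2)<\cdots<e_I(m)$, and $w[e_I(t)]=u[t]$ by construction. Hence $e_I\in E(w,u)$. This step is routine; the only care needed is to state precisely that the positions of $\ar_i(w)$ in $w$ form the interval from $|\ar_1(w)\cdots\ar_{i-1}(w)|+1$ to $|\ar_1(w)\cdots\ar_i(w)|$.

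Next, we show the map $I\mapsto e_I$ is injective. If $I\neq I'$, let $t$ be the smallest index with $i_t\neq i'_t$. Then $e_I(t)$ lies in arch $\ar_{i_t}(w)$ while $e_{I'}(t)$ lies in arch $\ar_{i'_t}(w)$. These arches are disjoint position intervals, so $e_I(t)\neq e_{I'}(t)$ and therefore $e_I\neq e_{I'}$. Since there are $\binom{k}{m}$ choices of $I$, this gives $|E(w,u)|\ge\binom{\iota(w)}{|u|}$.

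The argument has no real obstacle. The point to watch is that injectivity must be argued through the arch intervals, not through the letters, because different arches may well contain the same letter. The edge case $u=\varepsilon$ is covered automatically: there is exactly one (empty) embedding and $\binom{k}{0}=1$.
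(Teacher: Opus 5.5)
Your proof is correct and follows the same route as the paper: pick $|u|$ of the $\iota(w)$ arches and embed $u$ into their concatenation, which gives one embedding for each of the $\binom{\iota(w)}{|u|}$ choices. Your explicit rule of placing $u[t]$ in the $t$-th chosen arch, together with the injectivity argument, makes rigorous the distinctness of these embeddings, which the paper only asserts. Without that rule, an embedding of $u$ into $\ar_{i_1}(w)\cdots\ar_{i_{|u|}}(w)$ need not use every chosen arch, so different index sets could give the same embedding.
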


%% file: proof_bin_coef_on_words_shorter_than_iota.tex
\begin{proof}
	By the definition of universality every scattered factor $u$ is contained in a $|u|$-universal word. Thus, every $|u|$-universal word $\ar_{i_1}(w) \ar_{i_2}(w) \cdots \ar_{i_{|u|}}(w) \in \ScatFact(w)$ with $i_j \in [\iota(w)], j \in [|u|]$ and $i_1 < i_2 < \cdots < i_{|u|}$ does contain $u$ as a scattered factor. By combinatorial argument there are $\binom{\iota(w)}{|u|}$ possibilities to form such $|u|$-universal words, and thus at least $\binom{\iota(w)}{|u|}$ different embeddings of $u$ in $w$.\qed
\end{proof}

%% file: iota2.tex
\begin{lemma}\label{iota2}
Given $w\in\Sigma^{\ast}$ with $\iota(w)=2$ and let $u\in\letters(w)$, we have $|C(w,u)|=1$ iff
$u=\m(w)[1]$ and $\ar_2(w)=(\m(w)[1])^r x$ for some $x\in\Sigma^{\ast}$ and $r\in\N$ with $\m(w)[1]\not\in\letters(x)$.
\end{lemma}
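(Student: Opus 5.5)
The plan is to reduce the statement to counting maximal runs of the letter $u=\ta$ in $w$. Since $|u|=1$, every embedding of $u$ selects one position $p$ with $w[p]=\ta$, and the corresponding complement scattered factor is $w[1..p-1]w[p+1..|w|]$. The first step is the standard fact that deleting position $p$ and deleting position $q>p$ give the same word iff $w[p..q]=\ta^{q-p+1}$. Indeed, the two words agree outside $[p,q-1]$, and on that range equality means $w[i+1]=w[i]$ for all $p\le i<q$. Consequently, $|C(w,\ta)|$ equals the number of maximal blocks $\ta^k$ (runs of $\ta$) in $w$. In particular, $|C(w,\ta)|=1$ iff all occurrences of $\ta$ in $w$ form a single factor $\ta^k$. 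This is the quantitative version of \Cref{obslettersquare}.

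Next I use the arch factorisation with $\iota(w)=2$, $w=\ar_1(w)\ar_2(w)\r(w)$. Since each arch contains every letter, $\ta$ occurs in both $\ar_1(w)$ and $\ar_2(w)$.

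For the direction ($\Leftarrow$), assume $\ta=\m(w)[1]$ and $\ar_2(w)=\ta^r x$ with $\ta\notin\letters(x)$. By condition (2) of the arch factorisation, $\ta$ occurs in $\ar_1(w)$ only at its last position. Hence the occurrences of $\ta$ in $\ar_1(w)\ar_2(w)$ are exactly the block $\ta^{r+1}$ straddling the arch boundary. It remains to exclude occurrences of $\ta$ in $\r(w)$; note that $x\neq\varepsilon$, since otherwise $\ar_2(w)=\ta$ would force $\Sigma=\{\ta\}$. I expect the proof must use that the hypothesis rules this out, i.e.\ that the intended reading also places $\ta\notin\letters(\r(w))$. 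Otherwise the word $w=\tb\ta\ta\tb\ta$ is a counterexample, so I would state this explicitly (or read $x$ as covering the remainder of $w$). Under this reading there is exactly one run, and so $|C(w,\ta)|=1$.

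For the direction ($\Rightarrow$), assume there is a single run of $\ta$. Since $\ta$ occurs in both arches, this run must contain the last letter of $\ar_1(w)$ and the first letter of $\ar_2(w)$, so $\ar_1(w)$ ends with $\ta$ and $\ta=\m(w)[1]$. The run then continues as a prefix $\ta^r$ of $\ar_2(w)$ with $r\ge1$, and no further $\ta$ occurs afterwards, giving $\ar_2(w)=\ta^r x$ with $\ta\notin\letters(x)$.

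The main obstacle is the bookkeeping around the rest $\r(w)$ and the degenerate cases: $x=\varepsilon$, and the run starting before the last letter of arch one. The latter is impossible by the uniqueness of the arch's last letter. The run-counting fact itself is routine.
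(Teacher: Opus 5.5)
Your proof is correct, and so is your diagnosis of the statement. For $w=\tb\ta\ta\tb\ta$ one has $\ar_1(w)=\tb\ta$, $\ar_2(w)=\ta\tb$ and $\r(w)=\ta$. The right-hand side therefore holds with $r=1$, $x=\tb$, yet $C(w,\ta)=\{\tb\ta\tb\ta,\tb\ta\ta\tb\}$. So the backward direction needs the extra hypothesis $\m(w)[1]\notin\letters(\r(w))$. Equivalently, read the condition as $\ar_2(w)\r(w)=(\m(w)[1])^r x$ with $\m(w)[1]\notin\letters(x)$. The forward direction holds as stated, and your argument even shows that $|C(w,u)|=1$ forces $\m(w)[1]\notin\letters(\r(w))$.

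Your route also differs from the paper's. For the forward direction, the paper only notes that if $u\neq\m(w)[1]$, then deleting an occurrence of $u$ in the first arch and one in the second arch gives different complements. This establishes $u=\m(w)[1]$, but the paper never derives the shape $\ar_2(w)=(\m(w)[1])^r x$. For the backward direction, the paper simply asserts that $w=y(\m(w)[1])^r x$ with $\m(w)[1]$ in neither $y$ nor $x$. That silently ignores $\r(w)$, which is exactly where your counterexample bites.

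You instead prove one general fact: for a letter $\ta$, the elements of $C(w,\ta)$ correspond bijectively to the maximal runs of $\ta$ in $w$, since deleting positions $p<q$ gives the same word iff $w[p..q]\in\ta^{+}$. This buys several things:
\begin{itemize}
\item both directions follow uniformly;
\item it justifies the paper's unexplained ``two different complements'' claim;
\item it yields the full structural conclusion in the forward direction;
\item it makes the correct form of the lemma transparent: a single $\ta$-run straddling the arch boundary, with no further $\ta$ anywhere in $w$.
\end{itemize}
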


%% file: proof_iota2.tex
\begin{proof}
	First, let $|C(w,u)|=1$. If $u \neq\m(w)[1]$ then the embeddings taking $u$ from the first arch and letting the second untouched and taking $u$ from the second arch and letting the first one untouced respectively yield two different complement scattered factors. The other direction follows by the fact that $w$ is of the form $y(\m(w)[1])^rx$
	for some $r\in\N$ and $\m(w)[1]$ neither occurs in $y$ nor in $x$.\qed
\end{proof}

%% file: ufirstmodus.tex
\begin{lemma}\label{ufirstmodus}
Let $w\in\Sigma^{\ast}$ with $\iota(w)\geq 2$. Let $u\in\ScatFact(w)$ with $|u|<\iota(w)$.
Then $|C(w,u)|>1$ if $u[1]\neq \m(w)[1]$.
\end{lemma}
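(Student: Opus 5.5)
The plan is to exhibit two embeddings $e_1,e_2\in E(w,u)$ whose complement scattered factors differ. The comparison will use the position of the first occurrence of the letter $\m(w)[1]$. Set $\ta:=u[1]$ and $\tb:=\m(w)[1]$, so that $\ta\neq\tb$ by assumption. Two facts from the arch factorisation are used throughout. First, every arch $\ar_i(w)$ contains every letter of $\Sigma$. Second, $\tb\notin\letters(\inner_1(w))$, so the first occurrence of $\tb$ in $w$ is at position $|\ar_1(w)|$. Consequently $\ta$ occurs in $\inner_1(w)$, i.e.\ strictly before position $|\ar_1(w)|$.

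\textbf{The two embeddings.} Since $|u|<\iota(w)$, the word $w$ has at least $|u|+1$ arches.
\begin{itemize}
\item Define $e_1$ by mapping $u[1]$ to some occurrence of $\ta$ inside $\inner_1(w)$, and mapping $u[j]$ to some occurrence of that letter in $\ar_j(w)$ for $j\in[2,|u|]$.
\item Define $e_2$ by mapping $u[j]$ to some occurrence of that letter in $\ar_{j+1}(w)$ for all $j\in[|u|]$. This uses arches $2,\ldots,|u|+1\le\iota(w)$.
\end{itemize}
Both maps are strictly increasing and letter-preserving, so they are embeddings of $u$ in $w$. Let $v_1,v_2$ be the complement scattered factors obtained by deleting the positions of $e_1$ and $e_2$ respectively, so that $v_1,v_2\in C(w,u)$.

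\textbf{Comparing $v_1$ and $v_2$.} The key step is to locate the first $\tb$ in each word.
\begin{itemize}
\item In $v_2$, no position of $\ar_1(w)$ is deleted. Hence $v_2$ has prefix $\ar_1(w)$, and its first $\tb$ is at position $|\ar_1(w)|$.
\item In $v_1$, exactly one position of $\ar_1(w)$ is deleted, namely the chosen $\ta$ in $\inner_1(w)$. The position $|\ar_1(w)|$ carrying $\tb$ is not deleted, because $\ta\neq\tb$ and all other letters of $u$ are embedded in later arches. No $\tb$ occurs earlier in $w$, so the first $\tb$ of $v_1$ is at position $|\ar_1(w)|-1$.
\end{itemize}
Thus $v_1\neq v_2$ and $|C(w,u)|\ge 2$.

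The main point to check carefully is that $e_1$ deletes nothing else in $\ar_1(w)$, in particular not the final letter $\tb$. This holds because only $u[1]$ is placed in the first arch, and $u[1]=\ta\neq\tb$. The hypothesis $|u|<\iota(w)$ is needed exactly so that $e_2$ fits into arches $2,\ldots,|u|+1$. The edge case $|u|=1$ is covered by the same argument, since then $e_1$ touches only $\inner_1(w)$ and $e_2$ only $\ar_2(w)$.
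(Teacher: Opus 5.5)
Your proof is correct and takes essentially the same route as the paper. Both compare an embedding that places $u[1]$ in $\inner_1(w)$ (and the rest in the following arches) with one that leaves $\ar_1(w)$ untouched, and both separate the two complements by noting that $\m(w)[1]$ shifts from position $|\ar_1(w)|$ to $|\ar_1(w)|-1$; the paper states this as a comparison of the letters at position $|\ar_1(w)|-1$. By placing $e_2$ in arches $2,\ldots,|u|+1$, you also avoid the paper's separate cases $\iota(w)=2$ and $\iota(w)>2$.
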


%% file: proof_ufirstmodus.tex
\begin{proof}
	Let $u[1]\neq \m(w)[1]$ and suppose that $C(w,u)=\{v\}$. Since $u$ is strictly shorter than
	the universality index of $w$ there exists an embedding $e_1\in E(w,u)$ such that 
	\[
	e_1(1)\leq|\ar_1(w)|\quad\mbox{and}\quad |\ar_{i-1}(w)|+1\leq e_1(i)\leq |\ar_i(w)|
	\]
	for all $i\in[|u|]$, i.e. we choose the letters for $u$ from the first $|u|$ arches. Let $e_{1i}$ be the index of
	$\ar_i(w)$ that is $e_1(i)$ in $w$, thus the position of $e(i)$ relative to $\ar_i(w)$.
	Set $x_i=\ar_i(w)[1..e_{1i}-1]\ar_i(w)[e_{1i}+1..|\ar_i(w)|]$ for all $i\in[|u|]$, i.e. $w$'s first arch without the letter we picked for $u$. 
	By the same argument we can find an embedding $e_2$ which only uses the last $|u|$ arches of $w$. Define $y_i$ similar to $x_i$ for all $\iota(w)-|u|+1\leq i\leq\iota(w)$.

	If $\iota(w)=2$, we have $|u|=1$. Since $v$ is the only complement scattered factor, we know that
	$x_1\ar_2(w),\ar_1(w)y_2$ are prefixes of $v$. Thus we have $x_1[|x_1|]=\m(w)[1]\neq \ar_1(w)[|x_1|]$ since the second last letter of the first arch cannot be the modus letter - a contradiction.
	
	If $\iota(w)>2$, we have $|u|\geq 1$. 
	Since $v$ is the only complement scattered factor,
	we know that not only $x_1x_2$ but also $\ar_1(w)\ar_2(2)$ are prefixes of $v$. Nevertheless by $u[1]\neq \m(w)[1]$
	we get again $x_1[|x_1|]=\m(w)[1]\neq \ar_1(w)[|x_1|]$ - a contradiction.\qed
\end{proof}

%% file: nontrivialCwu.tex
\begin{theorem}\label{nontrivialCwu}
Let $|\Sigma|\geq 3$. 
If $w\in\Sigma^{\ast}$ with $\iota(w)>2$ and $u\in\ScatFact(w)$ with $|u|<\iota(w)$, $u\neq\varepsilon$ then $|C(w,u)|>1$. 
\end{theorem}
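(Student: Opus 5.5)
The plan is to exhibit two embeddings of $u$ in $w$ whose complements differ, and to compare the two complements on a short prefix. As in the proof of \Cref{ufirstmodus}, write $w=\ar_1(w)\cdots\ar_{\iota(w)}(w)\r(w)$ and use that every arch contains every letter. Consequently $u$, with $|u|\le\iota(w)-1$, can be embedded with its $i$-th letter placed in any chosen arch, provided the chosen arch indices are strictly increasing. We split according to whether $u[1]=\m(w)[1]$.

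\textbf{Case $u[1]\neq\m(w)[1]$.} This is exactly \Cref{ufirstmodus}, so nothing new is needed. Still, the mechanism is worth recording, since it is reused below. Take an embedding $e_1$ with $u[i]$ in $\ar_i(w)$ for $i\in[|u|]$, and an embedding $e_2$ using only the arches $2,\dots,|u|+1$; this is possible because $|u|+1\le\iota(w)$. The complement of $e_2$ starts with $\ar_1(w)$. The complement of $e_1$ starts with $\ar_1(w)$ minus one letter. Equality would force the deleted letter to be the last letter $\m(w)[1]$ of $\ar_1(w)$, which is unique in that arch.

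\textbf{Case $u[1]=\m(w)[1]=:\ta$.} Keep $u[2..|u|]$ embedded at the same positions in both embeddings, one letter per arch in $\ar_3(w),\dots,\ar_{|u|+1}(w)$; this fits since $|u|+1\le\iota(w)$. Only the position of $u[1]$ varies.
\begin{itemize}
\item Put $u[1]$ at the last position of $\ar_1(w)$. The complement then starts with $\inner_1(w)\ar_2(w)$.
\item Alternatively, put $u[1]$ at some occurrence of $\ta$ at position $q$ of $\ar_2(w)$. The complement then starts with $\inner_1(w)\,\ta\,(\ar_2(w)\text{ without position }q)$.
\end{itemize}
These two prefixes coincide iff $\ar_2(w)$ equals $\ta$ followed by $\ar_2(w)$ with position $q$ removed. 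This holds iff $q$ lies in a leading run of $\ta$'s of $\ar_2(w)$. Hence if $\ar_2(w)$ contains an occurrence of $\ta$ outside its leading $\ta$-run, we obtain two distinct complements.

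The remaining subcase $\ar_2(w)=\ta^r x$ with $\ta\notin\letters(x)$ is the main obstacle. For $|u|=1$ and $\iota(w)=2$ this is precisely the exceptional shape of \Cref{iota2}, so the hypotheses $\iota(w)>2$ and $|\Sigma|\ge3$ must be used here. The first attempt is to repeat the same comparison one arch further: fix $u[1]$ in $\ar_1(w)$ and vary where $u[2]$ sits, between $\ar_2(w)$ and $\ar_3(w)$. If $|u|=1$, instead vary $u[1]$ between $\ar_2(w)$ and $\ar_3(w)$, which needs only $\iota(w)\ge3$.

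Since $\ar_2(w)=\ta^r x$ and $x$ contains the at least two letters different from $\ta$ (this is where $|\Sigma|\ge3$ enters), the letter $\m(w)[2]$ and the shape of $\ar_2(w)$ should give a position where deleting from $\ar_2(w)$ versus from $\ar_3(w)$ yields different words. The argument is the same prefix argument as above, using that the last letter of $\ar_2(w)$ is unique in that arch. If that fails too, the fallback is to compare the complement of the embedding using the first $|u|$ arches with that of the embedding using the last $|u|$ arches. I would then argue via the uniqueness of the modus letters at the arch boundaries, tracking which arch first loses a letter in each complement.
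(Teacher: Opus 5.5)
There is a genuine gap, and it cannot be repaired, because the statement itself is false. Three parts of your argument are fine: the case $u[1]\neq\m(w)[1]$, the case $u[1]=\m(w)[1]=\ta$ with an $\ta$ outside the leading $\ta$-run of $\ar_2(w)$, and your remark for $|u|=1$. The problem is the remaining subcase $\ar_2(w)=\ta^r x$ with $|u|\ge 2$. There you only list attempts (``should give'', ``if that fails too, the fallback is''), and in this subcase there are counterexamples.

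Let $\Sigma=\{\ta,\tb,\tc\}$ and $w=\tb\tc\ta\cdot\ta\tc\tb\cdot\tb\ta\tc$. Then $\ar_1(w)=\tb\tc\ta$, $\ar_2(w)=\ta\tc\tb$, $\ar_3(w)=\tb\ta\tc$, $\r(w)=\varepsilon$ and $\iota(w)=3$. Let $u=\ta\tb$. Its embeddings are $(3,6),(3,7),(4,6),(4,7)$. Each removes one letter of the letter square $\ta\ta$ at positions $3,4$ and one letter of the letter square $\tb\tb$ at positions $6,7$. So all four leave $\tb\tc\ta\tc\tb\ta\tc$, and $|C(w,u)|=1$, although $|\Sigma|=3$, $\iota(w)>2$ and $0<|u|<\iota(w)$.

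This $w$ has exactly your exceptional shape, and both of your continuations break:
\begin{itemize}
\item Since $u[2]=\m(w)[2]=\ar_3(w)[1]$, moving $u[2]$ from $\ar_2(w)$ into $\ar_3(w)$ only switches between the two letters of $\tb\tb$.
\item Since $|u|=\iota(w)-1$, an embedding into the last $|u|$ arches touches $\ar_2(w)$. The prefix $\ar_1(w)\ar_2(w)$ is then no longer protected, and nothing distinguishes this embedding from the first-arches one.
\end{itemize}
This is precisely the configuration of \Cref{fancywordtakingmodus}, which already contradicts the theorem as stated.

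What survives is the statement under the extra hypothesis $|u|\le\iota(w)-2$, and your fallback, carried out, proves it. Take $2\le|u|\le\iota(w)-2$. The complement of an embedding into the last $|u|$ arches begins with $\ar_1(w)\ar_2(w)=\inner_1(w)\,\ta\,\ar_2(w)$. Now place $u[1]$ at the end of $\ar_1(w)$ and $u[2]$ at some position $q$ of $\ar_2(w)$. That complement begins with $\inner_1(w)$ followed by $\ar_2(w)$ with position $q$ removed. Compare the first $|\ar_1(w)|+|\ar_2(w)|-2$ letters of the two complements. Equality forces every letter of $\ar_2(w)$ to be $\ta$ or $u[2]$: an $\ta$-run followed by a period-$2$ tail. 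This contradicts $|\letters(\ar_2(w))|=|\Sigma|\ge 3$.

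The paper's proof is this same first-arches/last-arches comparison. It tacitly uses that $\ar_1(w)\ar_2(w)$ is a prefix of the complement of the last-arches embedding, so it too silently needs $|u|\le\iota(w)-2$. The case $|u|=\iota(w)-1$ is covered by neither argument, and it is false in general.
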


%% file: proof_nontrivialCwu.tex
\begin{proof}
First, notice that we have $\iota(w)>1$ since otherwise $|u|$ would be $0$, thus $u=\varepsilon$.
Let $u\in\ScatFact(w)$ with $|u|<\iota(w)$. 
If $u[1]\neq\m(w)[1]$ the claim follows by \Cref{ufirstmodus}. Thus, assume $u[1]=\m(w)[1]$.

Suppose that $C(w,u)=\{v\}$. Since $u$ is strictly shorter than
the universality index there exists an embedding $e_1\in E(w,u)$ such that 
\[
e_1(1)\leq|\ar_1(w)|\quad\mbox{and}\quad |\ar_{i-1}(w)|+1\leq e_1(i)\leq |\ar_i(w)|
\]
for all $i\in[|u|]$, i.e. we pick $u$'s letters from the first $|u|$ arches.  Let $e_{1i}$ be the index of
$\ar_i(w)$ that is $e_1(i)$ in $w$, thus the position of $e(i)$ relative to $\ar_i(w)$.
Set $x_i=\ar_i(w)[1..e_{1i}-1]\ar_i(w)[e_{1i}+1..|\ar_i(w)|]$ for all $i\in[|u|]$, i.e. $w$'s \nth{$i$} arch without the letter we picked for $u$. 
By the same argument we can find an embedding $e_2$ which only uses the last $|u|$ arches of $w$. Define $y_i$ similar to $x_i$ for all $\iota(w)-|u|+1\leq i\leq\iota(w)$.

Since $\iota(w)>2$ and $v$ is the only complement scattered factor,
we know that not only $x_1x_2$ but also $\ar_1(w)\ar_2(2)$ are prefixes of $v$. Notice $|x_i|+1=|\ar_w(i)|$ for all $i\in[|u|]$.
By $u[1]=\m(w)[1]$, we have $x_1=\inner_1(w)$.

Since $\alpha_1\alpha_2$ is a prefix of $\ar_1(w)\ar_2(w)$, we get $x_2[1]=\ar_2(w)[1]$.
Since $x_2$ is obtained from $\ar_2(w)$ by deleting exactly one letter from $\ar_2(w)$, we get the following cases:\\
\textbf{Case 1}: $e_1(2)=|\ar_1(w)|+1$ (we pick the first letter of the second arch for $u$)\\
Then we have $\ar_2(w)[2..|\ar_2(w)|-1]=\ar_2(w)[1..|\ar_2(w)|-2]$. Thus, there exist $\ta,\tb,\tc\in\Sigma$ and $y\in\Sigma^{\ast}$ with $\ar_2(w)=\ta y\tb\tc$ and hence $y\tb=\ta y$.
This implies $\ta=\tb$ and $y=\ta^r$ for some $r\in\N_0$, i.e. $\ar_2(w)=\ta^{r+2}\tc$.  Thus a contradiction since
we have $|\Sigma|\geq 3$.\\
\textbf{Case 2}: $|\ar_1(w)|+1<e_1(2)<|\ar_1(w)\ar_2(w)|$ 
Thus, we pick a later letter of the second arch for $u$ but not the modus letter. This implies that $\m(w)[2]$ is
the last letter of $x_2$ and $\m(w)[2]\in\inner_2(2)$ by $|\Sigma|\geq 3$ which implies $|\ar_2(w)|\geq 3$.\\
\textbf{Case 3}: $e_1(2)=\m(w)[2]$\\
In this case we have again that $|\letters(\ar_2(w))|=2$ - a contradiction.\qed
\end{proof}

%% file: sas_unique_comp_sf.tex
\begin{proposition}
	Let $w \in \Sigma^*$. If $u \in \SAS(w)$, $|\ar_i(w)|_{u[i]} = 1$ for all $i \in [|u|-1]$ and $u[|u|] \notin \al(\r(w))$ then $u[1..|u|-1]$ is a scattered factor of $w$ with $C(w,u[1..|u|-1]) = 1$.
\end{proposition}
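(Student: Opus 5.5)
The plan is to show something stronger: $u' := u[1..|u|-1]$ has exactly one embedding in $w$. Then $|C(w,u')|=1$ follows from the observation that a unique embedding gives a singleton complement set. Write $k = \iota(w)$. I will use the standard fact about shortest absent scattered factors, which is the setting of the $\alpha$-$\beta$-factorisation: every $u\in\SAS(w)$ has length $\iota(w)+1$. Hence $|u'| = k$, and the hypothesis says that for each $i\in[k]$ the letter $u[i]$ occurs exactly once in $\ar_i(w)$. Let $p_i$ denote the position in $w$ of this unique occurrence. The goal is to show that every embedding $e\in E(w,u')$ satisfies $e(i)=p_i$ for all $i$.

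\emph{Upper bound: $e(i)$ lies at or before the end of $\ar_i(w)$.} It suffices to prove this for the greedy (leftmost) embedding $g$ of $u'$, since $g(i)\le e(i)$ for every embedding $e$. I argue by induction on $i$. Since $\ar_1(w)$ contains every letter, $g(1)$ lies in $\ar_1(w)$. If $g(i-1)$ lies in $\ar_{i-1}(w)$ or earlier, then $u[i]$ still occurs at $p_i$ in $\ar_i(w)$ after $g(i-1)$, so $g(i)\le p_i$. The same induction shows that $g$ is a valid embedding, so $u'\in\ScatFact(w)$.

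\emph{Lower bound: $e(i)$ does not lie before the start of $\ar_i(w)$.} Suppose some embedding $e$ had $e(i)$ inside $\ar_1(w)\cdots\ar_{i-1}(w)$. Then $u[1..i]$ embeds into the first $i-1$ arches. The remaining word $u[i+1..k+1]$ has length $k-i+1$, and the arches $\ar_i(w),\dots,\ar_k(w)$ form a $(k-i+1)$-universal word. So $u[i+1..k+1]$ embeds there, and hence $u\in\ScatFact(w)$, contradicting $u\in\SAS(w)$.

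Combining the two bounds, every embedding $e$ places $e(i)$ inside $\ar_i(w)$ at a position carrying the letter $u[i]$. By hypothesis that position is $p_i$, so $E(w,u')=\{(p_1,\dots,p_k)\}$ and $C(w,u')$ is a singleton.

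The main obstacle I expect is making sure the length fact $|u|=\iota(w)+1$ for SAS is correctly imported from \cite{DBLP:journals/tcs/FleischmannHHHMN23}. Once that is in place, the two arch-counting bounds are routine. The hypothesis $u[|u|]\notin\al(\r(w))$ appears not to be needed in this argument, since the lower bound already uses absence of $u$ directly. If it turns out to be required, it would enter only in the lower bound for $i=k$, where it guarantees that $u[k+1]$ cannot be found after an occurrence of $u[k]$ that is too late.
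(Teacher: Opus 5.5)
\textbf{There is a genuine gap in the upper bound, and it cannot be repaired.}

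Your lower bound is correct, but the upper bound fails, and with it the plan of proving that $u'$ has a unique embedding. The reduction to the greedy embedding runs the wrong way. The embedding $g$ is the \emph{leftmost} one, so $g(i)\le e(i)$ only gives lower bounds on an arbitrary $e(i)$. Hence $g(i)\le p_i$ says nothing about how far to the right $e(i)$ can lie; for that you would need the rightmost embedding.

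Worse, the bound is false. Take $w=\tb\ta\ta$ and $u=\ta\tb$. Then $\ar_1(w)=\tb\ta$, $\r(w)=\ta$, $\iota(w)=1$, $u$ is absent, $|\ar_1(w)|_{\ta}=1$ and $\tb\notin\al(\r(w))$. Yet $u'=\ta$ embeds at positions $2$ and $3$, and the second of these lies in the rest. Here $C(w,\ta)=\{\tb\ta\}$ only because the two occurrences of $\ta$ are adjacent. So a correct argument would have to compare complements rather than count embeddings.

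Two of your side remarks are right. Your worry about $|u|=\iota(w)+1$ is harmless: the hypothesis needs $\ar_{|u|-1}(w)$ to exist, and absence forces $|u|\ge\iota(w)+1$. The hypothesis $u[|u|]\notin\al(\r(w))$ is indeed redundant, because $(p_1,\dots,p_k)$ embeds $u'$ with its last letter inside $\ar_k(w)$.

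In fact no argument can work, because the proposition is false as stated. For $w=\tb\tc\ta\tc\ta$ and $u=\ta\tb$ one has:
\begin{itemize}
\item $\ar_1(w)=\tb\tc\ta$ and $\r(w)=\tc\ta$;
\item $u\in\SAS(w)$, since there is no $\tb$ after position $3$;
\item $|\ar_1(w)|_{\ta}=1$ and $\tb\notin\al(\r(w))$;
\item but $C(w,\ta)=\{\tb\tc\tc\ta,\tb\tc\ta\tc\}$.
\end{itemize}
The paper's proof breaks at the corresponding point. It excludes a second occurrence of the last letter of $u[1..|u|-1]$ in the rest by citing $u[|u|]\notin\al(\r(w))$, which is about a different letter.

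Strengthening the hypothesis to $u[|u|-1]\notin\al(\r(w))$ does not help either. For $w=\tb\tc\ta\tc\ta\tb$ and $u=\ta\tb\tc$ we get $\ar_1(w)=\tb\tc\ta$, $\ar_2(w)=\tc\ta\tb$, $\r(w)=\varepsilon$, and $u$ absent. The embeddings $(3,6)$ and $(5,6)$ of $\ta\tb$ give $C(w,\ta\tb)=\{\tb\tc\tc\ta,\tb\tc\ta\tc\}$. The second embedding puts $u[1]$ into $\ar_2(w)$, so it violates your upper bound already for $i=1<\iota(w)$. It also escapes the paper's other case, which assumes that two letters $u[i]u[i+1]$ of one embedding fall into the arch $\ar_i(w)$ of the same index.
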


%% file: proof_sas_unique_comp_sf.tex
\begin{proof}
	Let $w = \alpha_1 \beta_1 \cdots \alpha_{\iota(w)} \beta_{\iota(w)} \alpha_{\iota(w)+1}$ be the $\alpha$-$\beta$-factorisation of $w$.
	From \cite[Proposition 6]{DBLP:journals/tcs/FleischmannHHHMN23} we know that $u[1] \in \al(\beta_1) \setminus \al(\alpha_1), u[i] \in \al(\beta_i), u[i]u[i+1] \notin \ScatFact_2(\beta_i\alpha_{i+1})$ for all $i \in [|u|-1]$ and $u[|u|] \notin \al(\r(w))$.
	By definition of SAS, we first of all obtain that $u[1..|u|-1] \in \ScatFact(w)$. Now suppose that $u[1..|u|-1]$ has two complement scattered factors, i.e., $|C(w,u[1..|u|-1])| > 1$. Thus, there must exist two embeddings of $u[1..|u|-1]$ in $w$. Since $|u[1..|u|-1]| = \iota(w)$ and $|\ar_i(w)|_{u[i]} = 1$ for all $i \in [|u|-1]$ this implies that there are either two letters of  one of the embeddings within one of the arches or the last letter $u[i]$ also occurs in the rest. Since $u[|u|] \notin \al(\r(w))$, the latter cannot happen. Thus, $u[i] u[i+1] \in \ar_i(w)$ for some $i \in [\iota(w)]$. By \cite[Proposition 6]{DBLP:journals/tcs/FleischmannHHHMN23} we have $u[i] \in \al(\beta_i)$. Since $|\ar_i(w)|_{u[i]} = 1$ there is no other occurrence of $u[i]$ in $\ar_i(w)$. So $u[i] u[i+1] \in \ar_i(w)$ implies that $u[i]u[i+1] \in \ScatFact_2(\beta_i)$, a contradiction to the condition on absent scattered factors that $u[i]u[i+1] \notin \ScatFact_2(\beta_i\alpha_{i+1})$. Thus, we can conclude that $C(w,u[1..|u|-1]) =1$. \qed
\end{proof}

%% file: singletonEandCsets.tex
We continue with an investigation on $|C(w,u)|=1$ and $|E(w,u)|=1$
and show that for any word that does not contain any squares, the complement scattered factor set of scattered factors that have two different embeddings is always of size at least 2.

\ifpaper
\input{squarefree-large-inverseset}
\else
\input{squarefree-large-inverseset}
\input{PROOF-squarefree-large-inverseset}
\fi

\ifpaper
\input{characterisationlettersquarefree}
\else
\input{characterisationlettersquarefree}
\input{PROOFcharacterisationlettersquarefree}
\fi

Since we now completely characterised the situation where $C(w,u) = 1$ implies $E(w,u)=1$ for letter square free words $w$ and $u \in \ScatFact(w)$, we now continue with the inspection of $w$ that contain letter squares. Since scattered factors $u$ that contain a letter that uses one part of any letter square immediately have at least two embeddings, we obtain the following result. In the following proposition, we inspect a single letter square of $w$ and referring to this occurrence
if we say that $u$ {\em uses} letters of it.

\ifpaper
\input{skiplettersquares}
\else
\input{skiplettersquares}
\input{PROOFskiplettersquares}
\fi

\begin{remark}
	Note that squares of length at least two are not considered in \Cref{prop:skiplettersquares} since in this specific case the precondition on $|C(w,u)|=1$ is not fulfilled for scattered factors whose embeddings do not have the indices of the square in their image.
	This can be justified as follows: Let $w = x y^2 z$ for some $x,y,z \in \Sigma^*$, $|\al(y) \geq 2$ and $y$ primitive. Now let $u \in \ScatFact(w)$ be of the form $w = x' y' z'$ with $x' \in \ScatFact(x), y' \in \ScatFact(y), z' \in \ScatFact(z)$. Let $y[i]$ for $i \in [|y|]$ be the leftmost occurrence of $y'[1]$ in $y$ and $x'' \in C(x,x'), y'' \in C(y,y'), z'' \in C(z,z')$, we obtain that $x'yy'' z', x' y[1..i-1] y[i+1..|y|] y''[1..i-1] y[i] y''[i+1..|y''|] z' \in C(w,u)$. For the sake of contradiction suppose that $|C(w,u)| = 1$. Thus, $x'yy'' z' = x' y[1..i-1] y[i+1..|y|] y''[1..i-1] y[i] y''[i+1..|y''|] z'$ which implies $yy'' = y[1..i-1] y[i+1..|y|] y''[1..i-1] y[i] y''[i+1..|y''|]$. Further, we can rewrite the left side to $yy'' = y[1..i-1] y[i] y[i+1..|y|] y'' [1..i-1] y''[i..|y''|]$ and obtain
	\begin{align*}
	(y[1..i-1]) (y[i] y[i+1..|y|] y'' [1..i-1]) (y''[i..|y''|])\\ 
	= (y[1..i-1]) (y[i+1..|y|] y''[1..i-1] y[i]) (y''[i+1..|y''|])
	\end{align*}
	which reduces to
	\(
	y[i] y[i+1..|y|] y'' [1..i-1] = y[i+1..|y|] y''[1..i-1] y[i].
	\)
	By the famous Lyndon-Schützenberger Theorem we get $y[i+1..|y|] y'' [1..i-1] \in y[i]^*$ and since $y''[1..i-1] = y[1..i-1]$ (by choice of index $i$) we obtain that $y \in y[i]^*$ - a contradiction to $|\al(y)| \geq 2$.
\end{remark}

%% file: squarefree-large-inverseset.tex
\begin{lemma}\label{squarefree-large-inverseset}
	If $w \in \Sigma^*$ is square-free then for every $u \in \ScatFact(w)$ with $|E(w,u)| \geq 2$  we have $|C(w,u)| >1$.
\end{lemma}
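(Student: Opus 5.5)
The plan is to argue by contraposition: assume $w$ is square-free, $u$ has two distinct embeddings $e_1 \neq e_2$, and both give the same complement $v$, i.e.\ $\sfe(\overline{e_1}) = \sfe(\overline{e_2}) = v$. From this I will produce a square in $w$. It suffices to work with the images $D_1 = e_1([|u|])$ and $D_2 = e_2([|u|])$, which are distinct sets of positions of $w$. Deleting $D_1$ from $w$ yields $v$, and deleting $D_2$ also yields $v$; at the same time the positions in $D_1$, and likewise those in $D_2$, spell $u$ in order.

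\textbf{Step 1 (localisation).} Let $p$ be the smallest position in the symmetric difference of $D_1$ and $D_2$, and without loss of generality $p \in D_1 \setminus D_2$. Let $r > p$ be the smallest position such that $|D_1 \cap [1..r]| = |D_2 \cap [1..r]|$. Such an $r$ exists because both sets have size $|u|$. On $[1..p-1]$ the two embeddings coincide. On $[p..r]$ the set $D_1$ is strictly ahead in counting, i.e.\ $|D_1\cap[1..t]| = |D_2\cap[1..t]|+d_t$ with $d_t \ge 1$ for $p \le t < r$. I will also replace $e_2$ by $e_1$ outside $[p..r]$ (a cut-and-paste argument). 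Because the prefix counts agree at $p-1$ and at $r$, the spliced positions still form embeddings of $u$ with complement $v$, so I may assume $D_1$ and $D_2$ differ only inside $[p..r]$.

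\textbf{Step 2 (two readings of one factor).} Inside $x = w[p..r]$, the two partitions of its positions into ``$u$-part'' and ``$v$-part'' spell the same $u$-factor $u'$ and the same $v$-factor $v'$, but the $D_1$-reading uses its letters earlier. Matching the $k$-th kept position under $D_1$ with the $k$-th kept position under $D_2$ gives a map $\phi$ that shifts positions to the left. Matching the $k$-th deleted positions gives a map $\psi$ that shifts positions to the right. Both maps preserve letters. In the special case $|x \cap D_1| = 1$ this says exactly that deleting position $p$ versus position $r$ gives the same word, which forces $x$ to be unary, $x = \ta^{r-p+1}$, and hence $\ta\ta$ is a factor of $w$. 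This recovers \Cref{obslettersquare} in reverse.

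\textbf{Step 3 (main obstacle: extracting a square in general).} In general I would follow chains $t, \psi(t), \phi(\psi(t)), \dots$ through $x$ to obtain equalities $w[t] = w[t+\delta]$ for shifts $\delta$ controlled by the offsets $d_t$. The aim is to show that some factor $w[t..t+2\delta-1]$ has period $\delta$, i.e.\ is a square $yy$; this would be the generalisation of the unary-run argument with $y$ in place of a single letter. The first attempt is to take the minimal interval $[p..r]$ from Step 1 and show that $\phi$ and $\psi$ combine into a single translation by a fixed $\delta$ on a window of length $2\delta$. Should the offsets $d_t$ vary, I would instead choose $e_2$ among the embeddings with complement $v$ so that $D_2$ is as close as possible to $D_1$, for example lexicographically closest. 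Minimality should then force $d_t = 1$ on $[p..r]$. If that holds, $x$ has period equal to the distance between consecutive corresponding positions, and a Fine--Wilf or Lyndon--Sch\"utzenberger style argument, as used in the remark after \Cref{prop:skiplettersquares}, yields the square. This contradicts square-freeness, so $|C(w,u)|>1$.
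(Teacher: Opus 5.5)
Your Steps 1 and 2 are sound: the splicing preserves both $u$ and $v$ because the prefix counts agree at $p-1$ and at $r$, and the case $|x\cap D_1|=1$ does force $x$ to be unary. The gap is Step 3. It carries the whole proof, it is only sketched as a hope, and its main route cannot work.

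From an \emph{arbitrary} pair of distinct embeddings with the same complement you cannot produce a square in general. Take the square-free word $w=\mathtt{abacbc}\in\mathtt{abc}\shuffle\mathtt{abc}$ and $u=\mathtt{abc}$. The embeddings $(1,2,4)$ and $(3,5,6)$ both have complement $\mathtt{abc}$. Here your interval $[p..r]$ is all of $w$, $d_t$ reaches $2$, the deleted positions move by $2,3,2$, and there is no square. The lemma itself survives, since $C(w,u)=\{\mathtt{abc},\mathtt{acb},\mathtt{bac}\}$. The example also shows why your fallback is incomplete as stated. Those two are the only embeddings with complement $\mathtt{abc}$, so minimising within the family of embeddings sharing $v$ does not force $d_t=1$. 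It can only work if you use that \emph{all} embeddings share $v$, and you never say what ``closest'' means.

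The missing idea is exactly that: under $|C(w,u)|=1$ every pair of embeddings has the same complement, so you may choose a pair differing in a single index.
\begin{itemize}
\item Let $g$ be the leftmost (greedy) embedding, so $g(i)\le e(i)$ for every embedding $e$ and every $i$.
\item If $|E(w,u)|\ge 2$, take $e\neq g$ and let $i$ be minimal with $e(i)\neq g(i)$.
\item Let $e'$ agree with $e$ except that $e'(i)=g(i)$.
\end{itemize}
Since $e(i-1)=g(i-1)<g(i)<e(i)<e(i+1)$ (omitting terms that do not exist), $e'$ is an embedding, and no other embedded position lies in $[g(i)..e(i)]$. This is precisely your special case with $p=g(i)$ and $r=e(i)$. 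Equal complements give $w[p+1..r]=w[p..r-1]$, hence $w[p]=w[p+1]$, a letter square. No Fine--Wilf step is needed, and the argument only uses that $w$ has no letter squares.

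For comparison, the paper also argues from an arbitrary pair. It takes the first $j$ with $v_j\neq v'_j$ and concludes $v_{j+1}[1]=u[j]$. That step silently assumes $v_{j+1}\neq\varepsilon$, and the same example defeats it: there $j=1$, $v_1=\varepsilon$, $v'_1=\mathtt{ab}$ and $v_2=\varepsilon$. So the one-index exchange is what actually makes the lemma go through.
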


%% file: PROOF-squarefree-large-inverseset.tex
\begin{proof}
	Let $w$ be square-free and $u \in \ScatFact(w)$ such that $|E(w,u)| \geq 2$. Now suppose that $|C(w,u)| = 1$. 
	Since there are two embeddings there exist $v_1,\ldots, v_{|u|+1}, v'_1,\ldots, v'_{|u|+1} \in \Sigma^*$ s.t. $w = v_1 u[1] v_2 u[2] v_3 \cdots v_{|u|} u[|u|] v_{|u|+1} = v'_1 u[1] v'_2 u[2] v'_3 \cdots v'_{|u|} u[|u|] v'_{|u|+1}$ with $v_1 v_2 \cdots v_{|u|+1} = v'_1 v'_2 \cdots v'_{|u|+1}$. 
	Since the embeddings are distinct there exits a leftmost index $j \in [|u|+1]$ such that the $v_j \neq v'_j$. W.l.o.g. let $|v_j| < |v'_j|$. Thus, there exists $z \in \Sigma^*$ such that $v_j u[j] z = v'_j$. 
	
	Since $v_1 v_2 \cdots v_{|u|+1} = v'_1 v'_2 \cdots v'_{|u|+1}$ this implies $v_{j+1}[1] = u[j]$ but then $w$ contains the square $u[j]u[j]$ - a contradiction to the precondition on $w$.\qed
\end{proof}

%% file: characterisationlettersquarefree.tex
\begin{proposition}\label{prop:characterisationlettersquarefree}
	For $w \in \Sigma^*$ we have
	$w$ does not contain any letter squares iff every $u \in \ScatFact(w)$ with $C(w,u) =1$ has only one embedding.
\end{proposition}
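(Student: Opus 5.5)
We prove the two directions separately. The direction from right to left goes by contraposition using the mechanism of \Cref{obslettersquare}. The direction from left to right refines the argument of \Cref{squarefree-large-inverseset}, since that proof only ever produces a \emph{letter} square.

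\emph{($\Leftarrow$, contrapositive).} Assume $w=x\ta\ta y$ for some $x,y\in\Sigma^{\ast}$ and $\ta\in\Sigma$, and choose $u=x\ta y$, so $|u|=|w|-1$. The embeddings placing the middle $\ta$ of $u$ at position $|x|+1$ and at position $|x|+2$ are distinct, so $|E(w,u)|\geq 2$. It remains to see that $|C(w,u)|=1$. Every $v\in C(w,u)$ is a single letter $\tb$ with $w\in u\shuffle\tb$. Comparing letter counts gives $|w|_{\tb}=|u|_{\tb}+1$, and the only letter occurring more often in $w$ than in $u$ is $\ta$. Hence $C(w,u)=\{\ta\}$, which contradicts the right-hand side.

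\emph{($\Rightarrow$).} Let $w$ be letter-square-free and suppose $e\neq e'\in E(w,u)$ yield the same complement $v$. We argue by strong induction on $|w|$.
\begin{itemize}
\item If $e(1)$ and $e'(1)$ are either both $1$ or both greater than $1$, we delete $w[1]$ together with the corresponding first letter of $u$ (respectively of $v$). The word $w[2..|w|]$ is still letter-square-free, the two restricted embeddings are still distinct, and they still have equal complements, so the induction hypothesis applies.
\item Otherwise, w.l.o.g.\ $e(1)=1<e'(1)$. Then $v[1]=w[1]=u[1]$, because position $1$ lies in the complement of $e'$. With respect to $e$, let $p$ be the first position not in the image of $e$. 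Then $v[1]=w[p]$, and positions $1,\dots,p-1$ are $e(1),\dots,e(p-1)$, so $w[1..p-1]=u[1..p-1]$ and $w[p]=u[1]$.
\end{itemize}
If $p=2$ we get $w[1]w[2]=u[1]u[1]$, a letter square and hence a contradiction. This is exactly the situation in the proof of \Cref{squarefree-large-inverseset}.

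\emph{Main obstacle.} The difficult case is $p>2$. There $u$ begins with a block $u[1..p-1]=w[1..p-1]$ that $e$ takes as a contiguous prefix, followed by a repeat $w[p]=u[1]$. Here I would compare with $e'$. Since $w[1]$ is in $e'$'s complement, $e'$ must embed $u[1..p-1]$ strictly to the right of position $1$. Equality of the two complements then forces a chain of equalities $w[i]=w[i+1]$ somewhere in the window $w[1..p]$; in effect the deleted block must be ``cyclically shifted'' by one position. I plan to pin this down with a Lyndon--Sch\"utzenberger style argument, as in the remark following \Cref{prop:skiplettersquares}, showing that the window is a power of a single letter, so that $w$ contains a letter square. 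If the leftmost-difference argument does not close the case, my fallback is to apply the same reasoning to the rightmost difference using $w^R$ and combine the two. The letter-square-free assumption has to enter precisely at this step, and this is where I expect most of the work.
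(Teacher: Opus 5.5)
Your ($\Leftarrow$) direction is correct and is the same as the paper's: for $w=x\ta\ta y$ the word $u=x\ta y$ has two embeddings, and your letter-count argument for $C(w,u)=\{\ta\}$ is a clean justification.

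The ($\Rightarrow$) direction has a genuine gap: it rests on a false intermediate claim, so your \emph{main obstacle} cannot be overcome. You try to show that in a letter-square-free word, two distinct embeddings of $u$ never leave the same complement. Take $w=\mathtt{abab}$, $u=\mathtt{ab}$, $e=(1,2)$ and $e'=(3,4)$. Both leave $\mathtt{ab}$, and $w$ has no letter square. This is exactly your case $e(1)=1<e'(1)$ with $p=3$. The window $w[1..3]=\mathtt{aba}$ is not a power of a letter, so no Lyndon--Sch\"utzenberger argument can give a letter square there. Your reversal fallback fails the same way on $\mathtt{baba}$. An ordinary square $yy$ with $|y|\geq 2$ can make one pair of embeddings leave the same complement (this is the phenomenon of the observation on $w=xy^kz$). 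So you must apply $|C(w,u)|=1$ to embeddings other than the two you started with. The paper's proof does this in its square case, comparing the complements from deleting inside the first versus the second copy of $v$ in $w'vvw''$.

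The missing step is a one-coordinate move. Given $e\neq e'$ in $E(w,u)$, let $i$ be the largest index with $e(i)\neq e'(i)$, w.l.o.g.\ $e(i)<e'(i)$. Let $e''$ agree with $e$ except that $e''(i)=e'(i)$. Since $e(i-1)<e(i)<e'(i)<e'(i+1)=e(i+1)$ (read vacuously when $i=1$ or $i=|u|$), $e''$ is an embedding. Moreover, no position among $e(i)+1,\dots,e'(i)$ lies in the image of $e$. Hence the complements of $e$ and $e''$ are $A\,w[e(i)+1..e'(i)]\,B$ and $A\,w[e(i)..e'(i)-1]\,B$ for the same words $A,B$. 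These coincide only if $w[e(i)..e'(i)]$ is a power of a single letter, which forces the letter square $w[e(i)]w[e(i)+1]$. So in a letter-square-free word, every $u$ with $|E(w,u)|\geq 2$ has $|C(w,u)|\geq 2$. In the example, $e''=(1,4)$ leaves $\mathtt{ba}\neq\mathtt{ab}$. This argument needs neither induction nor a case distinction on squares.
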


%% file: PROOFcharacterisationlettersquarefree.tex
\begin{proof}
	First, we show the left-to-right-implication. Let there be no letter squares in $w$. For the sake of contradiction suppose that there exists a $u \in \ScatFact(w)$ with $|C(w,u)| =1$ that has at least two embeddings. 
	
	If the embeddings do not belong to any square, we can apply \Cref{squarefree-large-inverseset} which implies $|C(w,u)| > 1$, a contradiction to the precondition on $|C(w,u)|$.
	
	Thus, assume that the two embeddings belong to a square. So there exist $w',w'' \in \Sigma^*,v \in \Sigma^+$ such that $w = w' vv w''$. Further, there are $j_1, \ldots, j_\ell \in [|v|]$ ascending and $\overline{w'} \in \ScatFact(w'), \overline{w''} \in \ScatFact(w'')$ such that  
	\begin{align*}
		\overline{w'} \cdot v[1..j_1 -1] v[j_1 +1..j_2 -1] \cdots v[j_{\ell-1}+1..j_\ell -1]v[j_{\ell+1}..|v|] \cdot v \cdot \overline{w''},\\ \overline{w'} \cdot v \cdot v[1..j_1 -1] v[j_1 +1..j_2 -1] \cdots v[j_{\ell-1}+1..j_\ell -1]v[j_{\ell+1}..|v|] \cdot \overline{w''}\\ \in C(w,u).
	\end{align*}
	Since $C(w,u) = 1$ both words are equal, i.e., $v[1..j_1 -1] v[j_1 +1..j_2 -1] \cdots v[j_{\ell-1}+1..j_\ell -1]v[j_{\ell+1}..|v|] \cdot v = v \cdot v[1..j_1 -1] v[j_1 +1..j_2 -1] \cdots v[j_{\ell-1}+1..j_\ell -1]v[j_{\ell+1}..|v|]$. 
	We get that $v[j_1] = v[j_1 +1]$ which implies that $v[j_1] v[j_1 +1] = v[j_1 +1]v[j_1 +1]$ and $w$ contains a letter square which is a contradiction.
	
	Now we show the left-to-right-implication. Therefore, assume that every $u \in \ScatFact(w)$ with $C(w,u) = 1$ has only one embedding. For the sake of contradiction suppose that there exists a letter square in $w$, i.e., we can rewrite $w = w' \ta \ta w''$ for $w',w'' \in \Sigma^*, \ta \in \Sigma$. The scattered factor $w' \ta w'' \in \ScatFact(w)$ has $\ta$ as complement scattered factor so $C(w,w'\ta w'') = \{\ta\}$ but two embeddings since any of the two $\ta$ that can be picked. A contradiction fact that any scattered factor with a singleton complement set has only one embedding.\qed
\end{proof}

%% file: skiplettersquares.tex
\begin{proposition}\label{prop:skiplettersquares}
	Let $w \in \Sigma^*$ contain letter squares.
	For every $u \in \ScatFact(w)$ with $|C(w,u)| = 1$ that uses only one letter of some of the letter squares in $w$ we have $|E(w,u)| >1$. Conversely, for every $u' \in \ScatFact(w)$ with $|C(w,u')| = 1$ that uses none or all letters of the resp. letter squares in $w$ we have $|E(w,u')| = 1$.
\end{proposition}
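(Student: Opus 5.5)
We prove the two parts of \Cref{prop:skiplettersquares} separately, working with a fixed letter square $w=x\ta\ta y$ at positions $p=|x|+1$ and $p+1$, and with embeddings $e\in E(w,u)$ viewed as index sets.

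For the first part, let $u$ use exactly one letter of the square. Fix an embedding $e$ with $p\in e(\{1,\dots,|u|\})$ and $p+1\notin e(\{1,\dots,|u|\})$. Define $e'$ by replacing $p$ with $p+1$ and leaving all other indices unchanged. Since $e$ avoids $p+1$, the map $e'$ is still strictly increasing, and $w[p+1]=\ta=w[p]$, so $e'\in E(w,u)$ and $e'\neq e$. This gives $|E(w,u)|\geq 2$. It is essentially \Cref{obslettersquare}, which also shows that $e$ and $e'$ yield the same complement, so the hypothesis $|C(w,u)|=1$ is not contradicted. The symmetric case, where $e$ uses $p+1$ but not $p$, is identical.

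For the second part, take $u'$ with $|C(w,u')|=1$, where every embedding uses none or both letters of each letter square. We argue by contradiction and suppose $e_1\neq e_2$ are two embeddings of $u'$. We follow the proof of \Cref{squarefree-large-inverseset}. Write both embeddings as factorisations $w=v_1u'[1]v_2\cdots u'[|u'|]v_{|u'|+1}=v'_1u'[1]v'_2\cdots$, with common complement $v_1\cdots v_{|u'|+1}=v'_1\cdots v'_{|u'|+1}$. Take the leftmost $j$ with $v_j\neq v'_j$, and assume w.l.o.g.\ $|v_j|<|v'_j|$. Then $v'_j=v_ju'[j]z$, and comparing the complements forces $v_{j+1}[1]=u'[j]$ whenever $v_{j+1}\neq\varepsilon$. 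This places a letter square at positions $e_1(j),e_1(j)+1$ in which $e_1$ uses only the first letter, which contradicts the assumption.

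The remaining case $v_{j+1}=\varepsilon$ is the delicate step. Here we must iterate along the maximal block of consecutive positions of $e_1$ starting at $e_1(j)$. Comparing letters shows that the block $u'[j]u'[j+1]\cdots$ continues until it meets a letter of the complement that equals the last block letter. At that point we again obtain a letter square of which $e_1$ uses exactly one position. This argument needs the factor $e_1(j)\cdots$ to be a run of equal letters. I expect to show this by the same prefix comparison between $v_j u'[j]\cdots$ and $v'_j$.

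The main obstacle is interpreting ``uses none or all letters of the resp.\ letter squares'' precisely enough to exclude runs $\ta^k$ with $k\geq 3$. In such a run, an embedding can use both letters of one square and only one of an overlapping square. I would handle this by reading the hypothesis as applying to every letter square, including overlapping ones. Then the only admissible embeddings use either all or none of each maximal run, and a run used entirely is rigidly fixed. The uniqueness argument then reduces to the letter-square-free situation of \Cref{prop:characterisationlettersquarefree}, applied after contracting each maximal run of $w$ to a single letter.
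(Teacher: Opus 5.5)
Your first part is correct and is the paper's argument: move the used letter of the square to the other position. The second part has a genuine gap exactly where you flag it, and your planned repair cannot work.

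In the case $v_{j+1}=\varepsilon$, your argument uses only that the two fixed embeddings $e_1\neq e_2$ have the same complement. That alone never forces a partially used letter square. Take $w=\ta\tb\ta\tb\tc\tc$, $u'=\ta\tb$, $e_1=(1,2)$, $e_2=(3,4)$:
\begin{itemize}
\item both embeddings have complement $\ta\tb\tc\tc$;
\item neither touches the only letter square $\tc\tc$;
\item in your notation $j=1$ and $v_2=\varepsilon$, but the block of $e_1$ starting at $e_1(1)$ is $\ta\tb$, not a run of equal letters.
\end{itemize}
So the claim you hoped to prove by ``the same prefix comparison'' is false. In this example the failure of $|C(w,u')|=1$ is witnessed only by a third embedding, $(1,4)$, with complement $\tb\ta\tc\tc$. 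Any proof must therefore exploit $|C(w,u')|=1$ for embeddings other than an arbitrarily chosen pair. The proof of \Cref{squarefree-large-inverseset} that you are imitating passes over the case $v_{j+1}=\varepsilon$ in the same unguarded way.

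The missing idea is to choose two embeddings that differ in a single index. Let $e_L$ be the leftmost (greedy) embedding, so $e_L(k)\le e(k)$ for all $e\in E(w,u')$ and all $k$. Take $e\neq e_L$ and the largest $i$ with $e_L(i)\neq e(i)$. Put $p=e_L(i)<q=e(i)$, and let $e'$ agree with $e_L$ except $e'(i)=q$.
\begin{itemize}
\item Since $e(i+1)=e_L(i+1)$, the map $e'$ is an embedding.
\item Positions $p+1,\dots,q-1$ are used by neither $e_L$ nor $e'$.
\item The two complements agree outside $[p,q]$; inside they read $w[p+1..q]$ and $w[p..q-1]$ respectively.
\end{itemize}
Hence $|C(w,u')|=1$ forces $w[p..q]\in\ta^{+}$ with $q>p$. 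So $e_L$ uses position $p$ but not $p+1$ of a letter square. This contradicts the hypothesis in your reading, and also in the paper's split into ``no letter of any square'' and ``all letters of every square''.

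This single step also handles runs of length at least three and makes the contraction to $\cond(w)$ unnecessary. That contraction is the paper's route, via \Cref{prop:characterisationlettersquarefree}. In your proposal it is only a pointer: you give no argument that embeddings and the condition $|C|=1$ transfer between $w$ and $\cond(w)$.
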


%% file: PROOFskiplettersquares.tex
\begin{proof}
	First, let $u \in \ScatFact(w)$ with $|C(w,u)| = 1$ that uses only one letter of some of the letter squares in $w$ we have $|E(w,u)| >1$. We immediately get that there are two embeddings of $u$, one that uses the first letter of the square and a second one using the second letter of the square.
	
	Second, let $u' \in \ScatFact(w)$ with $|C(w,u')| = 1$ that uses none or all letters of the letter squares in $w$. First consider the case that none of the letter squares are used. Thus, every embedding of $u'$ in $w$ is also given in $\cond(w)$ where $\cond(w)$ is the \emph{condensed form} of $w=x_1^{k_1}x_2^{k_2}\cdots x_{\ell}^{k_{\ell}}\in\Sigma^{\ast}$ with 
	$k_i,\ell\in\N$, $i\in[\ell]$ is defined by $\cond(w)=x_1\cdots 
	x_{\ell}$ under the assumption that $x_j\neq x_{j+1}$ for $j\in[\ell-1]$. Now for $\cond(w)$ Proposition~\ref{prop:characterisationlettersquarefree} applies since $|C(w,u')| = 1$ implies that $|C(\cond(w),u')| = 1$ and we can conclude that there exists exactly one embedding of $u'$ in $\cond(w)$ and thus also in $w$.
	
	Now suppose that $u'$ uses all letters of every letter square. Suppose that $|E(w,u')| > 1$. Thus, there must exist two embeddings of $u'$ in $w$. 
	Consider the embeddings of $\cond(u')$ in $\cond(w)$. By the construction of $u'$, we can conclude that $\cond(u')$ must have two embeddings in $\cond(w)$. Since $\cond(w)$ is letter square free, we can apply \Cref{prop:characterisationlettersquarefree} and conclude that $\cond(u')$ only has one embedding in $\cond(w)$. Since now $u'$ exhausts every letter square of $w$ completely, we also get that $u'$ has only one embedding $w$, a contradiction.\qed
\end{proof}

%% file: selfshuffle.tex
We finish our investigation with the self-shuffle $u\shuffle u$ for a given $u\in\Sigma^{\ast}$  from the perspective of the complement scattered factor set, i.e., $u \in C(w,u)$ for some $w \in \Sigma$. The first observation is that $u\in C(w,u)$ does not imply $|C(w,u)|=1$: consider $w=\mathtt{abaabaaa}\in\mathtt{abaa}\shuffle\mathtt{abaa}$, here the embedding $(1,5,6,7)\in E(w,u)$ leads to $\mathtt{baaa}\in C(w,u)$.

We now continue with a characterisation of words contained in a selfshuffle.

\begin{lemma}\label{greedyapproach}
For $w,u\in\Sigma^{\ast}$ we have
 $w\in u\shuffle u$ iff there exist $e_1=(i_1,\ldots,i_{|u|})$, $e_2=(j_1,\ldots,j_{|u|})\in E(w,u)$ with $e_1=\overline{e}_2$ and\\
 - $2\leq j_1\leq|w|$ is minimal in $w$ such that $w[j_1]=u[1]$,\\
 - for all $r\in[|u|]_{>1}$ we have that $j_{r-1}<j_r\leq |w|$ is minimal in $w$ such that $w[j_r]=u[r]$.
\end{lemma}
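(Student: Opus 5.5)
The reverse implication is immediate: if $e_1,e_2\in E(w,u)$ satisfy $e_1=\overline{e}_2$, then $e_1\disjoincup e_2$. Hence $w\in u\shuffle u$ by \Cref{def:shuffle_occs}, since $|w|=2|u|$ is forced by exhaustiveness. All the work is in the forward implication, and there I plan to argue by an exchange argument. I will start from an arbitrary disjoint, exhaustive pair of embeddings and transform it step by step into a pair in which $e_2$ is the greedy (leftmost-after-position-1) embedding.

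For the forward direction, let $w\in u\shuffle u$. By \Cref{lem:fst_snd_occurence} there are $f_1,f_2\in E(w,u)$ with $f_1\disjoincup f_2$ and $f_1(k)<f_2(k)$ for all $k\in[|u|]$. In particular $f_1(1)=1$, since position $1$ must be covered and $f_1(1)<f_2(1)$. Next I define the greedy sequence $j_1<j_2<\cdots$: let $j_1\geq 2$ be minimal with $w[j_1]=u[1]$, and let $j_r>j_{r-1}$ be minimal with $w[j_r]=u[r]$. This sequence exists and satisfies $j_r\leq f_2(r)$ for every $r$, by the standard greedy-embedding argument: induct on $r$, using that $f_2(r)>f_2(r-1)\geq j_{r-1}$ is a candidate. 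Set $e_2=(j_1,\ldots,j_{|u|})$. It remains to show that the complement positions $[|w|]\setminus\{j_1,\ldots,j_{|u|}\}$, read in increasing order, spell $u$; this complement then gives $e_1=\overline{e}_2$.

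To prove this, I would modify $f_2$ one index at a time, from left to right, toward $e_2$. Suppose $f_2$ agrees with $e_2$ on $[r-1]$ and $f_2(r)\neq j_r$, so $j_r<f_2(r)$. Since $f_1\disjoincup f_2$, position $j_r$ is used by $f_1$, say $j_r=f_1(s)$. Then $w[j_r]=u[s]=u[r]=w[f_2(r)]$. I swap the roles of $j_r$ and $f_2(r)$: $f_2$ now takes $j_r$ at index $r$, and $f_1$ takes $f_2(r)$ at index $s$ in place of $j_r$. The new $f_2$ is still increasing, because $j_{r-1}<j_r<f_2(r)<f_2(r+1)$. Disjointness and exhaustiveness are preserved, and the letters match.

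The main obstacle is checking that the modified $f_1$ stays increasing, i.e. that $f_2(r)<f_1(s+1)$ when $s<|u|$. I would handle this with the invariant $f_1(k)<f_2(k)$ together with the minimality of $j_r$. Every position strictly between $j_r$ and $f_2(r)$ lies in $f_1$ (positions in $f_2$ there are excluded, since $f_2(r-1)=j_{r-1}<j_r$). If some $f_1(s+1)<f_2(r)$, then I would relabel instead: take the largest $s'$ with $f_1(s')<f_2(r)$ and perform a cyclic shift of the positions of $f_1$ between $j_r$ and $f_2(r)$. This needs all letters in that window on $f_1$'s side to equal $u[r]$, which I expect to follow from comparing $f_1(k)<f_2(k)$ with $s\geq r$. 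If this local argument fails, the fallback is to choose, among all disjoint exhaustive pairs, one maximizing the length of the common prefix of $f_2$ with $e_2$, and to derive a contradiction by a minimal exchange. Finally, after $|u|$ steps we have $f_2=e_2$, and $e_1:=f_1=\overline{e}_2$, as required.
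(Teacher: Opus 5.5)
\textbf{The forward implication is false, so the step you flag as the main obstacle cannot be closed.} Take $u=\mathtt{aab}$ and $w=\mathtt{aabaab}=uu$. This $w$ lies in $u\shuffle u$ via $(1,2,3)$ and $(4,5,6)$. The two conditions force $e_2=(j_1,j_2,j_3)=(2,4,6)$. Its complement $(1,3,5)$ spells $\mathtt{aba}\neq u$, so no $e_1\in E(w,u)$ with $e_1=\overline{e}_2$ exists.

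In fact, no valid pair can separate positions $1$ and $2$. Position $3$ carries $\mathtt{b}$, which occurs in $u$ only as $u[3]$. So position $3$ is the third letter of one copy, and positions $1,2$ must both belong to that copy. The greedy choice $j_1=2$ is therefore never extendable. Your fallback fails for the same reason: every valid pair has common prefix of length $0$ with $e_2$, and no valid pair with $f_2(1)=2$ exists to exchange into.

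\textbf{Where your argument breaks.} With $f_1=(1,2,3)$ and $f_2=(4,5,6)$ you get $j_1=2=f_1(s)$ with $s=2>r=1$. The swap produces $f_1=(1,4,3)$, which is not increasing, since $f_1(s+1)=3<f_2(1)=4$. The cyclic shift would need every $f_1$-position between $j_1$ and $f_2(1)$ to carry $u[1]=\mathtt{a}$, but position $3$ carries $\mathtt{b}$. The inequalities $f_1(k)<f_2(k)$ and $s\geq r$ give no control over those letters.

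The paper's own proof has the same hole. In its inductive step it asserts, merely ``since $w\in u\shuffle u$'', that the positions skipped between $j_{r-1}$ and $j_r$ continue to spell $u$. For $r=2$ in this example, $w[1]w[3]=\mathtt{ab}\neq u[1..2]$. Correspondingly, the one-pass greedy check the paper builds on this lemma rejects this $w$.

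Only the reverse implication, which you handle correctly, survives. Deciding $w\in u\shuffle u$ for a given $u$ needs something like a dynamic programme over pairs (position in $w$, number of letters matched by the lagging copy), not a one-pass greedy.
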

\ifpaper
\else
\input{proof_greedyapproach}
\fi

Using the previous characterisation we are now able to decide the membership of $u$ in $C(w,u)$ in linear time.

\begin{proposition}\label{uinCwu}
Given $w\in\Sigma^{\ast}$ and $u\in\ScatFact(w)$ one can decide in time $O(|w|)$ whether $u\in C(w,u)$.
\end{proposition}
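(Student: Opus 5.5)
Deciding $u\in C(w,u)$ amounts to deciding $w\in u\shuffle u$, which in general is NP-hard. Linear time is therefore plausible only through the greedy characterisation of \Cref{greedyapproach}, which is the statement we take as given. The first step is the trivial check: $u\in C(w,u)$ requires $|w|=2|u|$, since $C(w,u)\subseteq\Sigma^{|w|-|u|}$. If the lengths do not match we reject immediately in $O(1)$ time. From now on let $|w|=2|u|$. By definition $u\in C(w,u)$ iff $w\in u\shuffle u$, so by \Cref{greedyapproach} it suffices to test the existence of the embeddings $e_1,e_2$ described there.

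The algorithm follows the lemma directly. Build $e_2=(j_1,\ldots,j_{|u|})$ greedily in one left-to-right scan of $w$. We start at position $2$ (position $1$ must go to $e_1$) and always take the leftmost unused position matching the next letter of $u$: $j_1$ is the first index $\geq2$ with $w[j_1]=u[1]$, and each $j_r$ is the first index $>j_{r-1}$ with $w[j_r]=u[r]$. This is a single pointer sweep costing $O(|w|)$. If the sweep fails to embed all of $u$, we reject. Otherwise set $e_1$ to be the complement of the image of $e_2$; it has exactly $|w|-|u|=|u|$ positions. Mark the used positions in a Boolean array, read off $w$ restricted to the unmarked positions in increasing order, and compare the result letter by letter with $u$. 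This is another $O(|w|)$ pass. We accept iff the two words agree, since that says precisely that $e_1\in E(w,u)$ with $e_1=\overline{e}_2$.

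Correctness follows from the two directions of \Cref{greedyapproach}. If the check succeeds, then $e_1,e_2$ are disjoint, exhaustive embeddings of $u$, so $w\in u\shuffle u$. Conversely, if $w\in u\shuffle u$, the lemma guarantees that the greedy $e_2$ exists with complement $e_1\in E(w,u)$. Our construction is deterministic and produces exactly that greedy $e_2$, so the test accepts.

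The main point to be careful about is matching the algorithm exactly to the lemma's conditions: start at index $2$, use minimality at every step, and allow the greedy $e_2$ to be uniquely determined with no backtracking. The lemma supplies the real combinatorial content, namely that greedily choosing the second occurrence is without loss of generality. Given the lemma, the remaining runtime bookkeeping is routine: two linear scans and one array of size $|w|$ give $O(|w|)$ overall.
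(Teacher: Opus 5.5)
Your reduction to $|w|=2|u|$ and $w\in u\shuffle u$ is fine, and your algorithm is essentially the paper's. Both rest entirely on Lemma~\ref{greedyapproach}. The difficulty is that this lemma is false in the direction you need for completeness. So the step ``if $w\in u\shuffle u$, the greedy $e_2$ has complement $e_1\in E(w,u)$'' fails.

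\textbf{Counterexample.} Take $u=\mathtt{aab}$ and $w=uu=\mathtt{aabaab}$, which lies in $u\shuffle u$ via the embeddings $(1,2,3)$ and $(4,5,6)$.
Your sweep from position $2$ gives $j_1=2$, $j_2=4$, $j_3=6$. The unmarked positions $1,3,5$ spell $\mathtt{aba}\neq u$, so you reject a yes-instance.
The lemma's proof breaks at its inductive claim that the skipped letters spell $u[1..j_r-r]$. This already fails at $r=2$: $w[1]w[3]=\mathtt{ab}\neq\mathtt{aa}$.
Checking the four embeddings of $u$ that start at position $1$ shows that only $(1,2,3)$ has complement $u$. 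So position $2$ must go to the copy containing position $1$. Greedily advancing the second copy is therefore not without loss of generality.
The paper's Algorithm~\ref{algogreedy}, which gives a letter to the trailing copy whenever it fits, also returns \texttt{False} on this input. The gap is inherited from the paper rather than introduced by you. Only the soundness direction of your argument (acceptance implies $w\in u\shuffle u$) is valid.

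\textbf{What is missing.} You need a way to decide, when $w[i]$ could extend either copy, which copy takes the letter. A single committed left-to-right choice does not do this.
Tracking all feasible pairs $(c_1,c_2)$ of consumed prefix lengths with $c_1+c_2=i$ (the standard shuffle dynamic programme) decides $w\in u\shuffle u$ correctly. It runs in $O(|u|\cdot|w|)=O(|w|^2)$ time.
An $O(|w|)$ bound would need a genuinely new structural argument, which neither your proposal nor the paper provides.

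\textbf{A side correction.} Your opening remark about NP-hardness is inaccurate. The known NP-hardness concerns deciding whether $w$ is a shuffle square for \emph{some} $u$. With $u$ given, membership in $u\shuffle u$ is polynomial by the dynamic programme above.
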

\ifpaper
\else
\input{proof_uinCwu}
\fi

%

For $w\in\Sigma^{\ast}$, $u\in\ScatFact(w)$, $v\in C(w,u)$ with $|v|<|u|$, we have $u\not\in C(w,u)$. In the case $|v|>|u|$ it may be that $u$ is a scattered factor of such a $v\in C(w,u)$. Here, we can alter the idea from \Cref{greedyapproach}.

\begin{proposition}\label{greedyapproach2}
Given $w\in\Sigma^{\ast}$ and $u\in\ScatFact(w)$ one can decide in time $O(|w|)$ whether there exists $v\in C(w,u)$ with
$u\in\ScatFact(v)$.
\end{proposition}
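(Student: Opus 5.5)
Deciding whether some $v\in C(w,u)$ has $u\in\ScatFact(v)$ is the same as deciding whether $w$ admits two disjoint embeddings of $u$: one embedding $e_1$ is the deleted copy of $u$, and $u$ is embedded into $\sfe(\overline{e_1})$, which gives a second embedding $e_2$ of $u$ in $w$ with image inside the complement of $e_1$. Conversely, two embeddings with disjoint images let us delete the first and leave a complement containing the second. The first step is therefore the reformulation: there exists $v\in C(w,u)$ with $u\in\ScatFact(v)$ iff there are $e_1,e_2\in E(w,u)$ with $e_1(\{1,\ldots,|u|\})\cap e_2(\{1,\ldots,|u|\})=\emptyset$. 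Equivalently, $uu$ interleaves into $w$ in the sense that some word of $u\shuffle u$ is a scattered factor of $w$, i.e.\ a scattered factor $w'$ of $w$ with $w'\in u\shuffle u$ exists. This relaxes \Cref{greedyapproach}: the disjoint embeddings need no longer be exhaustive.

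Next I would normalise the pair. Exactly as in \Cref{lem:fst_snd_occurence}, after swapping letters position by position we may assume $e_1(k)<e_2(k)$ for all $k$: if $e_1(k)>e_2(k)$ at some $k$, exchange the tails of the two embeddings at the first such crossing. Both remain increasing and disjoint since the sets of used positions are unchanged. Then I would prove a greedy exchange claim. If disjoint $e_1<e_2$ exist, they exist with $e_1$ the leftmost greedy embedding of $u$ in $w$ and $e_2$ the greedy embedding of $u$ that skips the positions of $e_1$, i.e.\ each $j_r$ is the minimal position $>j_{r-1}$ not in $\operatorname{im}(e_1)$ with $w[j_r]=u[r]$. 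The argument is the standard shifting one. Replacing $e_1(k)$ by the greedy position $g(k)\le e_1(k)$ can only collide with $e_2$ at some index $m$ where $e_2(m)=g(k)$. But $e_2(m)>e_1(m)$, and $g(k)\le e_1(k)<e_2(k)$ forces $m<k$, so $e_1(m)<g(k)$. In that case we swap roles and move $e_2(m)$ to the freed position $e_1(k)$, preserving order via induction on $k$. This exchange step is the main obstacle, and I would carry it out as an induction over $k$ from left to right, mirroring the proof of \Cref{greedyapproach}.

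Finally comes the algorithm. Compute $e_1$ greedily in one left-to-right scan and mark its positions. Then scan $w$ once more, matching $u$ greedily while skipping marked positions; both passes can also be interleaved in a single pass with two pointers into $u$. Answer yes iff the second pass completes $u$. Each pass touches every position of $w$ a constant number of times, giving $O(|w|)$ time. When the answer is yes, the witness is $v=\sfe(\overline{e_1})$, which contains $u$ via $e_2$.
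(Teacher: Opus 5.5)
Your reformulation (some $v\in C(w,u)$ contains $u$ iff $w$ has two embeddings of $u$ with disjoint images) is correct. So is the normalisation $e_1(k)<e_2(k)$ by tail swapping. The gap is the greedy exchange claim your algorithm rests on, which is false.

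Take $w=\mathtt{ababaa}$ and $u=\mathtt{aba}$. The leftmost embedding is $g=(1,2,3)$, and its complement spells $\mathtt{baa}$, which does not contain $\mathtt{aba}$, so your procedure answers no. Yet $(1,2,5)$ and $(3,4,6)$ are disjoint embeddings of $u$; indeed $w\in u\shuffle u$, so even $u\in C(w,u)$.

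Your exchange step fails exactly here. With $e_1=(1,2,5)$, $e_2=(3,4,6)$ and $k=3$ you get $g(3)=3=e_2(1)$, i.e.\ $m=1$. Moving $e_2(1)$ to the freed position $e_1(3)=5$ yields $(5,4,6)$, which is not increasing. Nothing bounds $e_1(k)$ by $e_2(m+1)$. Since the complement of $g$ contains no copy of $u$ at all, no refinement of the induction can rescue the claim. The underlying problem is that a position carrying a repeated letter ($w[3]$ here) may serve the other copy under a different index of $u$ ($u[1]$ rather than $u[3]$), so committing to the first copy in advance is unsound.

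The paper takes a different route. It adapts \Cref{greedyapproach} to a single left-to-right pass: the copy that is behind takes the current letter whenever it matches, otherwise the copy ahead takes it, otherwise the letter is skipped. This handles $\mathtt{ababaa}$, but switching to that priority does not repair your argument. Consider $w=\mathtt{ccdccd}=uu$ with $u=\mathtt{ccd}$. The paper's pass skips the $\mathtt{d}$ at position $3$ (both copies then need $\mathtt{c}$) and ends with one copy incomplete. Equivalently, the leftmost embedding of $u$ in $w[2..6]$ is $(2,4,6)$, whose complement spells $\mathtt{cdc}$. Your rule succeeds on this input, so no fixed priority between the two copies is correct.

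A sound procedure keeps, after each prefix of $w$, the Pareto frontier of reachable progress pairs $(i,j)$ of the two copies: for each $i$, the largest reachable $j$. This is justified because extra progress on either copy never hurts. It costs $O(|w||u|)$, and getting down to $O(|w|)$ would need a further idea that neither your proposal nor the paper's greedy argument supplies.
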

\ifpaper
\else
\input{proof_greedyapproach2}
\fi

The following lemma shows that $u$ is the only complement scattered factor of itself in a word $w$ iff $w$ is the perfect shuffle of $u$.

\begin{theorem}\label{thm:perfectshuffle}
Let $w\in\Sigma^{\ast}$ and $u\in\ScatFact(w)$. Then we have
$C(w,u)=\{u\}$  iff $w\in u\perfectshuffle u$.
\end{theorem}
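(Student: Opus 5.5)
The plan is to prove the two implications separately, working throughout with the run decomposition (condensed form) of $u$. Write $u=\ta_1^{k_1}\ta_2^{k_2}\cdots\ta_s^{k_s}$ with $\ta_j\neq\ta_{j+1}$ and $K_j=k_1+\cdots+k_j$. Then $u\perfectshuffle u=u[1]u[1]\cdots u[|u|]u[|u|]=\ta_1^{2k_1}\cdots\ta_s^{2k_s}$. So $w\in u\perfectshuffle u$ means exactly that $w$ is the "run-doubled" word, whose $j$-th block is $\ta_j^{2k_j}$ and occupies positions $2K_{j-1}+1,\dots,2K_j$.

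\emph{($\Leftarrow$)} Let $w=\ta_1^{2k_1}\cdots\ta_s^{2k_s}$ and take any $e\in E(w,u)$. It suffices to show that $e$ uses exactly $k_j$ positions of block $j$ for every $j$. Then $\overline{e}$ also takes $k_j$ letters $\ta_j$ from block $j$, in order, so the complement is $\ta_1^{k_1}\cdots\ta_s^{k_s}=u$.

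I would prove this claim by induction on $j$, using the invariant that $e(K_j)\le 2K_j$ and $e(K_j+1)>2K_j$. The upper bound uses the fact that the word $u[K_j+1..|u|]$ must still embed into the suffix of $w$ after $e(K_j)$. The key observation is that this suffix has to contain, for each later run $j'$, $k_{j'}$ copies of $\ta_{j'}$ in the right order, and consecutive runs use different letters. Hence a run of $u$ cannot "borrow" from a later block unless it skips that block entirely. A counting argument on $\sum_{j'>j}$ (remaining blocks versus remaining runs) then forces each run into its own block. The main obstacle is this step: a naive pigeonhole fails when the same letter reappears in a later run (e.g. $u=\ta\tb\ta$). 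I would handle it by running the argument right-to-left as well, so that $e(K_{j-1}+1)\ge 2K_{j-1}+1$ and $e(K_j)\le 2K_j$ hold simultaneously. The left-to-right direction gives that run $j$ cannot start in an earlier block, since runs $1,\dots,j-1$ already need blocks $1,\dots,j-1$ by induction; the symmetric direction gives that it cannot end in a later block. From these two bounds, run $j$ of $u$ lies entirely in block $j$.

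\emph{($\Rightarrow$)} Assume $C(w,u)=\{u\}$. Then $w\in u\shuffle u$, so \Cref{greedyapproach} gives a first/second-occurrence pair $e_1=\overline{e}_2$ with $e_2$ greedy. Suppose $w$ is not run-doubled. Take the leftmost block where the run structure of $w$ deviates from $\ta_1^{2k_1}\cdots$, i.e. the first index $p$ at which $w$ differs from $u\perfectshuffle u$. At that point one copy of $u$ is "ahead" of the other within a run. I would then modify one of the two embeddings locally, for instance by choosing for $u$ the other copy's letter at position $p$, which is possible because $w[p]$ is forced to be a letter that the other occurrence also needs next. This yields an embedding whose complement has a different letter at the position corresponding to $p$, so $|C(w,u)|\ge2$, a contradiction. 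This mirrors the swapping arguments used in \Cref{prop:characterisationlettersquarefree}. It remains to verify that the swapped positions still give a valid embedding of $u$ and that the resulting complement really differs from $u$; this is a short case check on whether $w[p]$ equals $u$'s next letter or that of the other copy.
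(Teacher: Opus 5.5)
Your ($\Leftarrow$) direction is sound and takes a different, cleaner route than the paper's sequence-chasing argument. Let $f(i)$ be the block of $w$ containing $e(i)$ and $r(i)$ the run of $u$ containing $i$. Then $u[i]\neq u[i+1]$ forces $f(i+1)>f(i)$, so $f-r$ is non-decreasing, is $\geq 0$ at $i=1$ and is $\leq 0$ at $i=|u|$; hence $f=r$. Your two-sided induction is exactly this argument.

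The gap is in ($\Rightarrow$). You justify the local modification by the claim that $w[p]$ is a letter the other copy also needs next, so that position $p$ can be handed over. That claim is false. Take $u=\ta\tb$ and $w=\ta\tb\ta\tb$, so $p=2$. The only disjoint pair of embeddings of $u$ is $(1,2),(3,4)$. Before position $2$ one copy has read $\ta$ and the other nothing, so the other copy needs $\ta$, but $w[2]=\tb$. No reassignment of position $2$ yields an embedding of $u$. The witness $(1,4)$, with complement $\tb\ta$, comes from rerouting the whole tail. In general, whether a modified embedding can be completed is a global question, and your deferred ``short case check'' on the letter $w[p]$ cannot settle it.

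The missing ingredient is a completion argument. Put $n=|u|$ and $c=\lceil p/2\rceil$, so that $w[1..p-1]=(u\perfectshuffle u)[1..p-1]$ and $w[p]\neq u[c]$.

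\emph{Construction.} Let $e$ take the odd positions $1,3,\ldots,2c-3$, and also $2c-1$ when $p=2c$, for the corresponding prefix of $u$. Let $e$ avoid position $p$, and embed the remaining suffix of $u$ in $w[p+1..2n]$. That suffix is $u[c..n]$ if $p$ is odd and $u[c+1..n]$ if $p$ is even.

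\emph{Why the suffix embeds.} Take a disjoint pair $e_1,e_2$, which exists since $u\in C(w,u)$. After reading $w[1..p]$ the two copies have consumed $p$ letters in total. So one of them has consumed $m\leq c-1$ letters if $p$ is odd, or $m\leq c$ if $p$ is even. Its remaining part $u[m+1..n]$ embeds in $w[p+1..2n]$, and it has the required suffix as a suffix.

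\emph{Conclusion.} The complement of $e$ begins with $u[1..c-1]\,w[p]$, which differs from $u[1..c]$. This contradicts $C(w,u)=\{u\}$. This is the idea behind the paper's one-line argument, which exhibits a complement with prefix $u[1..\lceil i/2\rceil-1]\,w[i]$, though the paper also leaves the completion unjustified.
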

\ifpaper
\else
\input{proof_perfectshuffle}
\fi

%% file: proof_greedyapproach.tex
\begin{proof}
	Firstly, let $w\in u\shuffle u$. We have to prove that $e_1$ and $e_2$ with the conditions of the right hand side are two disjoint and exhaustive embeddings of $u$ in $w$. Since $w\in u\shuffle u$, we have $w[1]=u[1]$ and we can set $i_1=1$. By the choice of
	$j_1$ we have that
	\[
	|w[2..j_1-1]|_{u[1]}=0.
	\]
	Set $i_s=s$ for all $s\leq j_1-1$. Then $w[i_s]=u[s]$ by $w\in u\shuffle u$. Moreover, we have $u[1]=w[j_1]$. Assume now that for a fixed but arbitrary $r-1\in[|u|]_{>1}$ we have
	\begin{itemize}
		\item $w[j_1]w[j_2]\cdots w[j_{r-1}]=u[1..r-1]$ and
		\item $w[1..(j_1-1)]w[(j_1+1)..(j_2-1)]\cdots w[(j_{r-2}+1)..(j_{r-1}-1)]=u[1..(j_{r-1}-r+1)]$.
	\end{itemize}
	By the choice of $j_r$, we have $u[r]\not\in\letters(w[(j_{r-1}+1)..(j_r-1)]$. Since $w\in u\shuffle u$, we have 
	$w[[1..(j_1-1)]w[(j_1+1)..(j_2-1)]\cdots w[(j_{r-1}+1)..(j_{r}-1)]=u[1..j_r-r]$ and we can set $i_s=j_{r-1}+s$ for all $j_{r-1}<s<j_r$. This concludes this direction of the proof.
	
	\medskip
	
	The other direction follows immediately since $e_1,e_2$ are disjoint, exhaustive embeddings of $u$ in $w$.\qed
\end{proof}

%% file: proof_uinCwu.tex
\begin{proof}
	If $u\not\in C(w,u)$, we have $w\not\in u\shuffle u$. This implies either $|w|\neq 2|u|$ which is checkable in $O(|w|)$ time
	or all $v\in C(w,u)$ with $|v|=|u|$ are different from $u$.
	By \Cref{greedyapproach} we can traverse $w$ once and build the embeddings of $u$ and some $v\in C(w,u)$ letter by letter of $w$.
	Therefore, we have three counter: $i$ for $w$, $c1$ for one occurrence of $u$ and $c2$ for the second. For every $i$, where we cannot set some $j_r$, we set the $c_1$ to $i$ and increment $c_1$ and $i$. If the conditions for $j_r$ are fulfilled, we set $c2$ to $i$ and increment both counters. Once we cannot assign one of $w$'s letters to $u$, we know that $u$ is not in $C(w,u)$ (cf. \Cref{algogreedy}).\qed
\end{proof}

\begin{algorithm}
\textbf{Input:} $w \in \Sigma^*$ and $u \in \ScatFact(w)$\\
\textbf{Output:} \texttt{True} if $u\in C(w,u)$ and \texttt{False} otherwise
\begin{lstlisting}[language=Python]
def findEmbeddings(w,u):
    c1=0
    c2=0
    i=0

    while i < len(w):
        if c1 < c2 and u[c1]==w[i]:
            c1 = c1+1
        elif c1 < c2 and u[c1]!=w[i]:
            if u[c2]==w[i]:
                c2 = c2+1
            else:
                return False
        elif u[c2]==w[i]:
            c2 = c2+1
        else:
            return False
        i = i+1
    return True
\end{lstlisting}
\caption{Check whether $u\in C(w,u)$}\label{algogreedy}
\end{algorithm}

%% file: proof_greedyapproach2.tex
\begin{proof}
	We alter \Cref{greedyapproach} in the following way:
	For $w,u\in\Sigma^{\ast}$ we have
	$w\in u\shuffle v$ and $u\in\ScatFact(v)$ iff $e_1=(i_1,\ldots,i_{|u|}),e_2=(j_1,\ldots,j_{|u|})\in E(w,u)$ with 
	\begin{itemize}
		\item $2\leq j_1\leq|w|$ is minimal in $w$ such that $w[j_1]=u[1]$,
		\item for all $r\in[|u|]_{>1}$ we have that $j_{r-1}<j_r\leq |w|$ is minimal in $w$ such that $w[j_r]=u[r]$.
	\end{itemize}
	The only algorithmic difference to \Cref{greedyapproach} is that we do not require $e_2$ to be the complement embedding to $e_1$. Thus, we assign the actual letter of $w$ to the first occurrence of $u$ if $c1<c2$ and $w[i]=u[c1]$, to the second occurrence if $c1>c2$
	and $w[i]=u[c2]$ and we skip this letter if $u[c1]\neq w[i]\neq u[c2]$.\qed
\end{proof}

%% file: proof_perfectshuffle.tex
\begin{proof}
	Let first $1=|C(w,u)|$ and $u\in C(w,u)$. By the definition of $C(w,u)$, we get $w\in u\perfectshuffle u$. 
	Suppose $w\not\in u\perfectshuffle u$. Then there exists $i\in[|w|]$ such that $w[i]\neq u[\lceil \frac{i}{2}\rceil]$, choose $i$ minimal. Let $j\in[|u|]$ with $w[i]=u[j]$. Then $u[1..\lceil \frac{i}{2}\rceil-1]u[j]\neq u[1..\lceil \frac{i}{2}\rceil]$
	is a prefix of a word $v\in C(w,u)$ - a contradiction.
	
	For the second direction, let $w\in u\shuffle u$ and $|u|=n$. Suppose there exists $u\neq v\in C(w,u)$.
	Let $e_1\in E(w,u)$ and $e_2\in E(w,v)$ such that $e_1\dot\cup e_2$. By $w\in u\perfectshuffle u$ and $e_2\not\in E(w,u)$
	we get
	$e_1\not\in\{(e_{11},e_{12},\ldots,e_{1n})|\,\forall i\in[n]:\, e_{1i}\in\{2(i-1)+1,2i\}\}
	$ where the function values are written as a tuple. Thus, there exists $j\in[n]$ such that $e_1(j)\not\in \{2(j-1)+1,2j\}$.
	Choose $j$ minimal. Thus, we have
	{\tiny
		\begin{align*}
			e_1(1)&=1, &e_1(2)&=3, &\ldots,\quad e_1(j-1)&=2(j-2)+1, \\
			e_2(1)&=2, &e_2(2)&=4, &\ldots,\quad e_2(j-1)&=2(j-1), &e_2(j)&=2j-1 & e_2(j+1)&=2j.
		\end{align*}
	}
	Depicted w.r.t.~$w$ we get
	{\tiny
	\begin{center}
		\begin{tabular}{c||c|c|c|c|c|c|c|c|c|c|c|c|c}
			$w$   & $1$ & $2$ & $3$ & $4$ & $\cdots$ & $2j-3$ & $2j-2$ & $2j-1$ & $2j$  & $2j+1$ & $\cdots$ & $2n-1$ & $2n$\\ \hline
			$u$   & $1$ & $1$ & $2$ & $2$ & $\cdots$ & $j-1$  & $j-1$  & $j$    & $j$   & $j+1$  & $\cdots$ & $n$    & $n$\\ \hline
			$e_1$ & $1$ &     & $2$ &     & $\cdots$ & $j-1$  &        &        &       &        &          &        & \\ 
			$e_2$ &     & $1$  &    & $2$ &          &        & $j-1$  & $j$    & $j+1$ &        &          &        &
		\end{tabular}
	\end{center}
	}
	Thus, we have $v[j]=v[j+1]=u[j]$ and hence $u[1..j]u[j]=u[1..j-1]u[j]^2$ is a prefix of $v$. By $e_1(j)\geq 2j+1$, there exist
	$
	2j+1\leq k_1 < k_2 < \ldots < k_{n-j+1}\leq 2n 
	$
	such that $e_1(j+r)=k_{r+1}$ for $r\in [n-j]_0$. Thus, we have $u[j+r]=w[k_{r+1}]$ for $r\in[n-j]_0$.
	By $w[k_1]=u[\lceil\frac{k_1}{2}\rceil]$ and $\lceil\frac{k_1}{2}\rceil>j$, we have
	$u[\lceil\frac{k_1}{2}\rceil]=u[j]$. Note that there exists $r\in [n-j+1]_0$ such that $k_r+1=k_{r+1}$ by the Pigeon Hole Principle.
	
	Choose $t$ maximal such that $(k_i')_{i\in[t]}$ is a strictly monotone, increasing subsequence of $(k_i)_{i\in[n-j+1]}$ such that
	$k_1=k_1'\quad\mbox{and}\quad k_{i+1}'=e_1\left( \left\lceil \frac{k_i'}{2} \right\rceil \right).
	$ This gives us the sequence $s'=\left(\left\lceil\frac{k_i'}{2}\right\rceil\right)_{i\in[t]}$ which inherits the strictly increasing monotonicity. If for all $r\in[n-j]_0]$ there exists $i\in[t]$ with $k_i'=k_r$ then there exists
	$\ta\in\Sigma$ with $u[j..n]=\ta^{n-j+1}$ and thus $u=v$ which is a contradiction. Suppose there exists $p\in[n-j+1]$
	such that for all $i\in[t]$, we have $k_i'\neq k_p$. This leads to a second sequence $\hat{s}$
	$\hat{k}_1=k_p\quad\mbox{and}\quad \hat{k}_{i+1}=e_1\left( \left\lceil \frac{\hat{k}_i}{2} \right\rceil \right)
	$
	for $i\in[\hat{t}]$ for some $\hat{t}\leq 2n$. Since both sequences contain indices of $u$ and $w=u\perfectshuffle u$,
	there exists $\ell,\ell'\leq 2n$ such that $k_\ell'=\hat{k}_{\ell'}$. This argument via $\hat{s}$ can be applied to
	all indices not occurring in the sequence $s'$, we again obtain $u[j..n]=\ta^{n-j+1}$ and thus $u=v$ - again a contradiction.
	\qed
\end{proof}

%% file: conclusion.tex
In this work, we combined the knowledge about scattered factors and the shuffle operator in order to investigate the
complement scattered factor set. Since a scattered factor may have several embeddings in a word, we obtain for each
such scattered factor a (not necessarily new) complement scattered factor. This approach raised three natural problems taking $w\in\Sigma^{\ast}$, $u\in \ScatFact(w)$, and $C(w,u)$ into account: given two of them, can we compute the third.
For all three problems we presented algorithms and since (as expected) the runtime is not desirable for real-life problems,
we investigated $C(w,u)$ also from a combinatorial approach. Using the knowledge about the arch factorisation and the $\alpha$-$\beta$-factorisation, we investigated in which cases we have one or more complement scattered factor. Moreover,
we solved the problem, when $u$ is a complement scattered factor of $u$, and when it is the only complement scattered factor. It remains an open problem to investigate scattered factors that are longer than the universality index of the given word. Additionally, an upper bound for $|C(w,u)|$ or even a closed formula in specific cases remains open.

%% file: appendixproofs.tex
\setcounter{theorem}{5}
\ifpaper

\input{correctness_of_P}
\input{proof_correctness_of_P}
\fi

\setcounter{theorem}{8}
\ifpaper

\input{runtime_complement_set}
\input{proof_runtime_complement_set}
\else
\fi

\setcounter{theorem}{9}

\input{intersection}
\input{proofintersection}
\setcounter{theorem}{10}
\ifpaper

\input{Swuproblem}
\input{proofSwuproblem}
\else
\fi

\setcounter{theorem}{11}
\ifpaper

\input{first_letter_is_necessary}
\input{proof_first_letter_is_nessesary}
\else
\fi

\setcounter{theorem}{12}
\ifpaper

\input{recurrence_relation}
\input{proof_recurrence_relation}
\else
\fi

\setcounter{theorem}{14}
\ifpaper

\input{computation_cell_time}
\input{proof_computation_cell_time}
\else
\fi

\setcounter{theorem}{15}
\ifpaper

\input{constructing_w}
\input{proof_contructing_w}
\else
\fi

\setcounter{theorem}{16}
\ifpaper

\input{compliment_to_w}
\input{proof_compliment_to_w}
\else
\fi

\setcounter{theorem}{19}
\ifpaper
\input{upper_bound_C}
\input{proof_upper_bound_C}
\fi

\setcounter{theorem}{21}
\ifpaper

\input{bin_coef}
\input{proof_bin_coef_on_words_shorter_than_iota}
\fi

\setcounter{theorem}{22}
\ifpaper

\input{iota2}
\input{proof_iota2}
\fi

\setcounter{theorem}{23}
\ifpaper

\input{ufirstmodus}
\input{proof_ufirstmodus}
\fi

\setcounter{theorem}{24}
\ifpaper

\input{nontrivialCwu}
\input{proof_nontrivialCwu}
\else
\fi

\setcounter{theorem}{26}
\ifpaper

\input{sas_unique_comp_sf}
\input{proof_sas_unique_comp_sf}
\fi

\setcounter{theorem}{28}
\ifpaper

\input{squarefree-large-inverseset}
\input{PROOF-squarefree-large-inverseset}
\else
\fi

\setcounter{theorem}{29}
\ifpaper

\input{characterisationlettersquarefree}
\input{PROOFcharacterisationlettersquarefree}
\else
\fi

\setcounter{theorem}{30}
\ifpaper

\input{skiplettersquares}
\input{PROOFskiplettersquares}
\else
\fi

\setcounter{theorem}{32}
\ifpaper
\input{greedyapproach}

\input{proof_greedyapproach}
\fi

\setcounter{theorem}{33}
\input{uinCwu}

\input{proof_uinCwu}
\setcounter{theorem}{34}
\ifpaper
\input{greedyapproach2}

\input{proof_greedyapproach2}
\fi

 \setcounter{theorem}{35}
 \ifpaper
 \input{perfectshuffle}

\input{proof_perfectshuffle}
 \fi

%% file: upper_bound_C.tex
\begin{lemma}
	Given $w \in \Sigma^*$ and $u \in \ScatFact(w)$, we have $|C(w,u)|\leq \lceil \frac{|w|}{2} \rceil$.
\end{lemma}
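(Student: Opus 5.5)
The plan is not to prove this bound. As stated it appears to be false, and the paper's own earlier example already refutes it, so I would first verify a counterexample and then look for a correct replacement.

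\textbf{The counterexample.} Earlier in this section the paper computes the complement scattered factors of $u=\mathtt{ab}$ in $w=\mathtt{ababbaba}$ and lists five distinct words:
\[
\mathtt{ababba},\ \mathtt{abbaba},\ \mathtt{abbbaa},\ \mathtt{bababa},\ \mathtt{babbaa}.
\]
Here $|w|=8$, so the bound would give $\lceil |w|/2\rceil = 4 < 5 = |C(w,u)|$. I would check each word directly by exhibiting an embedding of $\mathtt{ab}$ whose deletion yields it:
\begin{itemize}
\item positions $(1,2)$ give $\mathtt{abbaba}$;
\item positions $(6,7)$ give $\mathtt{ababba}$;
\item positions $(1,4)$ give $\mathtt{bababa}$;
\item positions $(3,5)$ give $\mathtt{abbbaa}$;
\item positions $(1,7)$ give $\mathtt{babbaa}$.
\end{itemize}
Since these five words are pairwise distinct, the inequality fails.

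\textbf{Larger violations.} I would then show the failure is not an isolated small case. For $w=(\mathtt{ab})^k$ and $u=\mathtt{ab}$, deleting an $\mathtt{a}$ at block $i$ and a $\mathtt{b}$ at a later block $j\ge i$ gives a word determined by the pair $(i,j)$. Distinct pairs with $i<j$ should give distinct words, because the positions of the two factors $\mathtt{bb}$ and $\mathtt{aa}$ created by the deletions differ. This yields $\Theta(k^2)=\Theta(|w|^2)$ complements. So no bound linear in $|w|$ can hold, and the statement cannot be repaired by changing constants.

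\textbf{What I would prove instead.} The true upper bounds are the trivial ones:
\[
|C(w,u)| \le |E(w,u)| = \binom{w}{u} \le \binom{|w|}{|u|}.
\]
The first inequality holds because every complement is $\sfe(\overline{e})$ for some $e\in E(w,u)$. The second holds because an embedding is determined by its image, which is a $|u|$-subset of $[|w|]$.

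If the authors intended a restricted setting, I would try to identify it. Candidates are $|u|=1$, or $u$ a unary word, where deleting occurrences inside a block of equal letters gives the same result. For $|u|=1$ with $u=\ta$, distinct complements correspond to distinct maximal $\ta$-blocks of $w$. Blocks of the same letter are separated by at least one other letter, so there are at most $\lceil |w|/2\rceil$ of them, which proves the stated bound in that case.

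The main obstacle is therefore not technical. It is deciding which hypothesis the authors intended, because under the stated hypotheses the claim is refuted by their own example.
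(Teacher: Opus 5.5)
Your refutation is correct: the statement is false as written, and the paper's proof does not establish it. That proof only combines $|C(w,u)|\le|E(w,u)|$ with the extremal letter-square-free words $\ta\tb_1\ta\tb_2\ta\cdots\ta\tb_\ell\ta$. What it actually concludes is $|C(w,\ta)|\le\lceil|w|/2\rceil$, i.e.\ only the case of a single letter. Nothing in it controls longer $u$, and the paper's own computation of $C(\mathtt{ababbaba},\mathtt{ab})$ (five elements against $\lceil 8/2\rceil=4$) contradicts the general claim, as you observe.

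Your maximal-block argument for $|u|=1$ is the rigorous form of what the paper intends. Deleting any $\ta$ from the same maximal $\ta$-block gives the same word, and $r$ blocks force $|w|\ge 2r-1$. It handles letter squares directly, whereas the paper only appeals to the letter-square observation to set them aside. Your replacement bound $|C(w,u)|\le|E(w,u)|\le\binom{|w|}{|u|}$ is also fine.

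Three small repairs are needed.

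\emph{The witness.} Deleting positions $(3,5)$ from $\mathtt{ababbaba}$ gives $\mathtt{abbaba}$, not $\mathtt{abbbaa}$. The witness for $\mathtt{abbbaa}$ is $(3,7)$, so the count of five stands.

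\emph{Injectivity for $(\ta\tb)^k$.} Your reason fails literally at the ends: $i=1$ creates no factor $\tb\tb$, and $j=k$ creates no factor $\ta\ta$. For example, $(i,j)=(1,3)$ in $(\ta\tb)^3$ yields $\tb\ta\tb\ta$. Argue instead that for $i<j$ the complement is $(\ta\tb)^{i-1}\tb(\ta\tb)^{j-i-1}\ta(\ta\tb)^{k-j}$. This word differs from $(\ta\tb)^{k-1}$ exactly at positions $2i-1,\dots,2j-2$, which determines $(i,j)$. Hence $|C((\ta\tb)^k,\ta\tb)|=1+\binom{k}{2}$. Already $k=3$ gives a smaller counterexample than the paper's: $C(\ta\tb\ta\tb\ta\tb,\ta\tb)=\{\ta\tb\ta\tb,\tb\ta\ta\tb,\tb\ta\tb\ta,\ta\tb\tb\ta\}$, with $4>3$.

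\emph{Unary $u$.} This is not a valid repair. For $w=(\ta\tb)^3\ta$ and $u=\ta\ta$, the six choices of two $\ta$'s give six distinct complements, and $6>4=\lceil 7/2\rceil$. So $|u|=1$ is the setting in which the stated bound actually holds.
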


%% file: proof_upper_bound_C.tex
\begin{proof}
	Observation~\ref{obslettersquare} gives us immediately that the combination of $w$ having a letter square and $u$ {\em using} it,
	cannot lead to an upper bound for $|C(w,u)|$. By $|C(w,u)|\leq |E(w,u)|$, we get the upper bound by letter-square-free words of the
	form $\ta\tb_1\ta\tb_2\ta\cdots\ta\tb_{\ell}\ta$ for $\ell\in\N$ and $\tb_i\in\Sigma$ for all $i\in[\ell]$. Thus, we have $|C(w,\ta)|\leq \lceil \frac{|w|}{2} \rceil$. \qed
\end{proof}

%% file: greedyapproach.tex
\begin{lemma}\label{greedyapproach}
For $w,u\in\Sigma^{\ast}$ we have
 $w\in u\shuffle u$ iff $e_1=(i_1,\ldots,i_{|u|}),e_2=(j_1,\ldots,j_{|u|})\in E(w,u)$ with $e_1=\overline{e}_2$ and
 \begin{itemize}
 \item $2\leq j_1\leq|w|$ is minimal in $w$ such that $w[j_1]=u[1]$,
 \item for all $r\in[|u|]_{>1}$ we have that $j_{r-1}<j_r\leq |w|$ is minimal in $w$ such that $w[j_r]=u[r]$.
 \end{itemize}
\end{lemma}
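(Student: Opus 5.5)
The backward direction is immediate: if $e_1=\overline{e}_2$ and both are embeddings of $u$, then $e_1\disjoincup e_2$ and $|w|=2|u|$, so $w\in u\shuffle u$ by \Cref{def:shuffle_occs}. All the work is in the forward direction. There I plan to start from an arbitrary disjoint exhaustive pair and transform it, step by step, into the pair described in the statement.

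\textbf{Normalisation.} Let $w\in u\shuffle u$. By \Cref{lem:fst_snd_occurence} there are $f_1,f_2\in E(w,u)$ with $f_1\disjoincup f_2$ and $f_1(k)<f_2(k)$ for all $k$. In particular $f_1(1)=1$, so $w[1]=u[1]$. Let $g=(j_1,\ldots,j_{|u|})$ be the greedy leftmost embedding of $u$ into $w[2..|w|]$, as in the statement. It exists because $f_2$ is an embedding of $u$ avoiding position $1$. The standard leftmost-embedding argument gives $j_k\le f_2(k)$ for all $k$. It remains to show that $f_2$ can be replaced by $g$ while keeping a disjoint exhaustive partner that spells $u$. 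I will then take $e_2=g$ and $e_1=\overline{g}$.

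\textbf{Exchange step.} I would argue by induction on the smallest $k$ with $j_k\neq f_2(k)$, maintaining the invariants that $(f_1,f_2)$ is disjoint and exhaustive, spells $u$ twice, and satisfies $f_1(i)<f_2(i)$ for all $i$. At such a $k$ we have $p:=j_k<q:=f_2(k)$, and $p$ lies in the image of $f_1$, say $p=f_1(m)$. From $p>f_2(k-1)>f_1(k-1)$ we get $m\ge k$, and $w[p]=u[k]=w[q]$, hence $u[m]=u[k]$. Reading $w$ from left to right, let $a$ and $b$ be the numbers of letters consumed by the first and second copy before position $p$; here $a=m-1\ge b=k-1$.
\begin{itemize}
\item If $a=b$, both copies are in the same state. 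I swap their entire futures from position $p$ on, so that the second copy takes $p$. The order invariant $f_1(i)<f_2(i)$ then has to be re-established, by swapping again at the next position where the copies meet in equal states.
\item If $a>b$, I use $u[a+1]=u[b+1]=w[p]=w[q]$. I reassign $p$ to the second copy and $q$ to the first copy. Then I repair the order of the first copy by exchanging tails: the remaining first-copy positions in $(p,q)$ are swapped with the second copy's positions that come after $q$. Since each copy still only has to spell a suffix of $u$, the two suffixes $u[a+2..]$ and $u[b+2..]$ must be matched against the segments after $p$ accordingly.
\end{itemize}
After either case the agreement between the second copy and $g$ extends to position $k$, so the induction proceeds.

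\textbf{Main obstacle.} The case $a>b$ is where I expect difficulty: a naive single swap breaks monotonicity of the first copy. What I would try first is a configuration argument. Consider states $(a,b)$ with $a\ge b$ and $a+b$ equal to the prefix length read so far. I would aim to prove that if both $(a+1,b)$ and $(a,b+1)$ are legal moves, i.e.\ $u[a+1]=u[b+1]=w[p]$, then completability of $w$ from $(a+1,b)$ implies completability from $(a,b+1)$. This is an exchange/simulation lemma: the run from $(a+1,b)$ is simulated by letting the second copy lag until the first copy's extra letter is matched. If that lemma holds, greedily advancing the second copy whenever possible never loses a solution, which is exactly the statement.
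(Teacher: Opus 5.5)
\textbf{The statement is false as written, so your plan cannot be completed. It breaks exactly at your exchange/simulation lemma.}

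Take $u=\mathtt{aab}$ and $w=\mathtt{aabaab}=uu\in u\shuffle u$, via the embeddings $(1,2,3)$ and $(4,5,6)$. The conditions force $j_1=2$, then $j_2=4$ because $w[3]=\mathtt{b}$, then $j_3=6$. So $e_2=(2,4,6)$, and the only candidate $e_1=\overline{e}_2=(1,3,5)$ spells $\mathtt{aba}\neq u$. The right-hand side therefore fails although $w\in u\shuffle u$.

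In your terms: after $w[1]$ you are in state $(a,b)=(1,0)$, and $w[2]=\mathtt{a}=u[a+1]=u[b+1]$, so both moves are legal. The state $(2,0)$ is completable, since the remainder $\mathtt{baab}$ lies in $\mathtt{b}\shuffle\mathtt{aab}$. The state $(1,1)$ is not, since $\mathtt{baab}\notin\mathtt{ab}\shuffle\mathtt{ab}=\{\mathtt{abab},\mathtt{aabb}\}$. So the claim that \emph{advancing the lagging copy never loses a solution} is false. No tail exchange in your case $a>b$ can repair it: this is precisely the configuration $k=1$, $p=2=f_1(2)$, $q=4$ of your exchange step, starting from $f_1=(1,2,3)$, $f_2=(4,5,6)$.

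The opposite priority fails as well. We have $w=\mathtt{cdcdcc}\in\mathtt{cdc}\shuffle\mathtt{cdc}$ via $(1,2,5)$ and $(3,4,6)$. Yet the leftmost embedding $(1,2,3)$ of $\mathtt{cdc}$ in $w$ leaves $\mathtt{dcc}$. So no fixed tie-breaking rule between the two copies decides membership; one needs a search or dynamic programme over the states $(a,b)$.

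Your reduction from the lemma to the statement also has an independent gap: the greedy copy $g$ need not stay behind its partner. For $u=\mathtt{abba}$ and $w=\mathtt{aabbbaba}$ you get $g=(2,3,4,6)$ and $\overline{g}=(1,5,7,8)$. Then $\overline{g}(2)>g(2)$, so your invariant $f_1(i)<f_2(i)$ is incompatible with ending at $f_2=g$.

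The paper's own proof fails at the corresponding point. It asserts, merely \emph{since $w\in u\shuffle u$}, that $w[1..j_1-1]w[j_1+1..j_2-1]\cdots w[j_{r-1}+1..j_r-1]=u[1..j_r-r]$. For the first example with $r=2$ this reads $w[1]w[3]=u[1..2]$, i.e.\ $\mathtt{ab}=\mathtt{aa}$. Only the backward direction, which you handle correctly, is valid.
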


%% file: uinCwu.tex
\begin{proposition}\label{uinCwu}
Given $w\in\Sigma^{\ast}$ and $u\in\ScatFact(w)$ one can decide in time $O(|w|)$ whether $u\in C(w,u)$.
\end{proposition}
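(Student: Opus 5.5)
First I reduce the question to deciding whether $w\in u\shuffle u$. By definition, $u\in C(w,u)$ iff $|w|=2|u|$ and $w$ has two disjoint and exhaustive embeddings of $u$. The length test $|w|=2|u|$ takes $O(|w|)$ time, and if it fails we answer negatively. For the remaining case I rely on \Cref{greedyapproach}. It says that whenever $w\in u\shuffle u$, we may take the second occurrence $e_2$ to be the greedy one: each $j_r$ is the leftmost admissible position after $j_{r-1}$ carrying the letter $u[r]$ (and $j_1\geq 2$, since $w[1]$ must go to the first occurrence). The first occurrence $e_1$ is then forced to be the complement $\overline{e}_2$. The decision therefore needs no search: it is enough to build one candidate pair of embeddings in a single left-to-right scan and verify it.

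Concretely, I run \Cref{algogreedy} with two pointers $c_1,c_2$ into $u$. They record how much of the first and of the second copy of $u$ has been matched so far, and the invariant is $c_2\leq c_1$; the second copy can never overtake the first, which is exactly the condition $e_1(k)<e_2(k)$ from \Cref{lem:fst_snd_occurence}. When reading $w[i]$, I give it to the second copy whenever that is allowed, namely when $c_2<c_1$ and $w[i]=u[c_2+1]$; this is the minimality of $j_r$ in \Cref{greedyapproach}. Otherwise I give it to the first copy if $w[i]=u[c_1+1]$, and otherwise I reject. At the end I accept iff $c_1=c_2=|u|$. Each position is handled in $O(1)$ time, so the scan takes $O(|w|)$ time in total.

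Correctness has two directions. Soundness is immediate: on acceptance, the positions assigned to the two copies form disjoint, exhaustive embeddings of $u$, so $w\in u\shuffle u$ and hence $u\in C(w,u)$. For completeness, suppose $w\in u\shuffle u$. I show by induction on $i$ that the pointer pair produced by the scan agrees with the embeddings $(e_1,e_2)$ guaranteed by \Cref{greedyapproach}. When the greedy rule sends a letter to the second copy, this coincides with the choice of $j_r$ as minimal. When it sends a letter to the first copy, it does so because the second copy cannot take it at that point. Since $e_1=\overline{e}_2$, that position has to belong to $e_1$, and so $w[i]=u[c_1+1]$. Consequently the scan never rejects.

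The delicate step, and the one I expect to be the main obstacle, is the exchange argument behind \Cref{greedyapproach}. It must show that choosing the earliest possible match for the second copy never destroys the existence of a complementary embedding for the first copy. Since \Cref{greedyapproach} is stated earlier in the paper, I would invoke it directly. I would only need to check carefully that the scan realises exactly its minimality conditions, including the boundary case $j_1\geq 2$, which is why the first letter always goes to the first copy.
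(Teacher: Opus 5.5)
Your proposal follows the paper's own route: reduce to $w\in u\shuffle u$, invoke \Cref{greedyapproach}, and run the two-pointer scan of \Cref{algogreedy}, which hands a letter to the lagging copy whenever it can. Soundness of acceptance is fine. The gap is completeness. The step you flagged yourself, the exchange argument behind \Cref{greedyapproach}, cannot be carried out, because the lemma is false and the scan rejects yes-instances.

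Take $u=\ta\tb\ta\tc$ and $w=uu=\ta\tb\ta\tc\ta\tb\ta\tc$, which lies in $u\shuffle u$ via positions $1$--$4$ and $5$--$8$. Your scan runs as follows:
\begin{itemize}
\item It assigns $w[1],w[2]$ to the first copy, giving $c_1=2$, $c_2=0$.
\item At $i=3$ we have $w[3]=\ta=u[c_2+1]$, so the second copy takes it and $c_2=1$.
\item At $i=4$ it reads $\tc$, while $u[c_2+1]=\tb$ and $u[c_1+1]=\ta$, so it rejects.
\end{itemize}
Correspondingly, the lemma prescribes for $e_2$ the leftmost embedding of $u$ starting at a position $\geq 2$. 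That embedding is $(3,6,7,8)$, and its complement $(1,2,4,5)$ spells $\ta\tb\tc\ta\neq u$. So the left-to-right direction of \Cref{greedyapproach} fails. The culprit is a tie $w[i]=u[c_1+1]=u[c_2+1]$ with $c_2<c_1$: handing the letter to the lagging copy can be fatal.

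There is also a separate problem: your scan does not even realise the lemma's $e_2$. The invariant $c_2\leq c_1$ stops the second copy from taking a letter once it has caught up. For $u=\ta\ta$ and $w=\ta^4$ the scan produces $(2,4)$ rather than $(2,3)$. So your induction claim that ``the scan agrees with the lemma's pair'' is wrong as stated.

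Flipping the tie-break does not rescue the argument. Preferring the leading copy handles the example above. But take $u=\ta\tb\ta\tb$ and $w=\ta\tb\ta\tb\ta\ta\tb\tb$, which lies in $u\shuffle u$ via positions $\{1,2,5,7\}$ and $\{3,4,6,8\}$. Preferring the leading copy rejects it: the leading copy swallows $w[1..4]$, and the lagging copy then cannot take $w[6]=\ta$. So neither fixed tie-breaking rule works, and ties need lookahead.

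The straightforward correct procedure tracks all reachable pairs $(c_1,c_2)$ with $c_1+c_2=i$. This is the standard shuffle dynamic program and takes $O(|u|^2)$ time. Reaching $O(|w|)$ would require a genuinely new idea for resolving ties. Neither your proposal nor the paper's proof supplies one; the paper relies on the same lemma and the same greedy scan, so the same counterexample applies to it.
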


%% file: greedyapproach2.tex
\begin{proposition}\label{greedyapproach2}
Given $w\in\Sigma^{\ast}$ and $u\in\ScatFact(w)$ one can decide in time $O(|w|)$ whether there exists $v\in C(w,u)$ with
$u\in\ScatFact(v)$.
\end{proposition}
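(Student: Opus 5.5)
The plan is to reduce the question to a purely combinatorial one and then decide that with a single left-to-right scan, as in \Cref{greedyapproach} and \Cref{uinCwu}. First, there exists $v\in C(w,u)$ with $u\in\ScatFact(v)$ iff $w$ admits two \emph{disjoint} (not necessarily exhaustive) embeddings $e_1,e_2\in E(w,u)$. For the forward direction, take $e_1$ with $\sfe(\overline{e_1})=v$ and compose an embedding of $u$ in $v$ with $\overline{e_1}$. For the backward direction, let $v$ be the complement of $e_1$. The letters used by $e_2$ lie in $v$'s positions, so $u\in\ScatFact(v)$.

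Second, I will normalise such a pair as in \Cref{lem:fst_snd_occurence}. Whenever $e_1(k)>e_2(k)$ for some $k$, I swap the tails $e_1(k..|u|)$ and $e_2(k..|u|)$ at the first such $k$. Because $u[k]$ is the same letter for both, both maps stay strictly increasing, and the images remain disjoint. Iterating this, I may assume $e_1(k)<e_2(k)$ for all $k$, so the first copy always ``leads'' and the second copy ``trails''.

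Third, I will use the greedy scan of \Cref{algogreedy}, modified so that unusable letters are skipped rather than causing rejection. Keep counters $c_1$ (letters of the leading copy matched) and $c_2$ (letters of the trailing copy matched), with $c_2\le c_1$. At position $i$:
\begin{itemize}
\item if $c_2<c_1$ and $w[i]=u[c_2+1]$, assign $w[i]$ to the trailing copy;
\item else, if $c_1<|u|$ and $w[i]=u[c_1+1]$, assign $w[i]$ to the leading copy;
\item otherwise skip $w[i]$.
\end{itemize}
Accept iff $c_2=|u|$ at the end. Each position costs $O(1)$, so the running time is $O(|w|)$. Soundness is immediate, because the two assignments are disjoint embeddings.

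Completeness is the main obstacle. The statement to prove is that, for every normalised pair $(e_1,e_2)$, the greedy state dominates it after each prefix $w[1..i]$: the greedy pair $(c_1,c_2)$ should be componentwise at least the pair of numbers of letters of $e_1$ and of $e_2$ lying in $w[1..i]$. I would prove this by induction on $i$, using an exchange argument. Suppose the optimal solution uses $w[i]$ for its leading copy while greedy gives it to its trailing copy. Then I re-route: the letter greedy's trailing copy gains can later be replaced in the optimal solution, because any later occurrence of the same letter used by the optimal leading copy can be handed over; this is the same tail-swap idea as in the normalisation step.

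The delicate case is when preferring the trailing copy blocks the leading copy from advancing. I would try to dispose of it by noting that the leading copy is never strictly needed earlier than the trailing copy at the same letter. If this domination argument fails, the fallback is to keep both choices, using a bounded lookahead or a small DP over the difference $c_1-c_2$. That fallback would lose linearity, so I expect the real work of the proof to lie in establishing the domination invariant.
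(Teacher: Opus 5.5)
\textbf{Steps 1 and 2 are fine; step 3 fails.} Your first two steps are sound. Two disjoint (not necessarily exhaustive) embeddings of $u$ are exactly what is needed, and swapping tails at the first index $k$ with $e_1(k)>e_2(k)$ does normalise the pair. The gap is the third step, and it is fatal: the trailing-first greedy is not complete, so the domination invariant you hope to prove is false.

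Take $u=\mathtt{aab}$ and $w=\mathtt{aabaab}=uu$, which has the disjoint embeddings $(1,2,3)$ and $(4,5,6)$. Your scan runs as follows:
\begin{itemize}
\item it gives $w[1]$ to the leading copy;
\item it gives $w[2]$ to the trailing copy, since it matches $u[1]$;
\item it skips $w[3]=\mathtt{b}$, because both copies now wait for $u[2]=\mathtt{a}$;
\item it finishes with $(c_1,c_2)=(3,2)$, i.e.\ it rejects.
\end{itemize}
After $w[1..2]$ the only valid pair has counts $(2,0)$ while the greedy has $(1,1)$. These are incomparable, and no completion from $(1,1)$ exists in $\mathtt{baab}$. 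This is precisely the case your exchange argument must handle. It breaks because the two copies want $w[2]$ under different indices ($u[2]$ for the leading copy, $u[1]$ for the trailing one), whereas the tail swap only applies when both embeddings use the same index $k$.

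Reversing the priority does not rescue it. Take $u=\mathtt{aba}$ and $w=\mathtt{ababaa}$, with disjoint embeddings $(1,2,5)$ and $(3,4,6)$. A leading-first scan spends positions $1,2,3$ on the leading copy, and the remainder $\mathtt{baa}$ does not contain $\mathtt{aba}$. So neither fixed priority is correct, and you give no argument that a bounded lookahead would suffice.

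\textbf{The paper does not fill the gap.} Its proof is the same lagging-copy-first greedy, obtained from \Cref{greedyapproach} and \Cref{algogreedy} by skipping unusable letters. That lemma already fails on $w=\mathtt{aabaab}$, $u=\mathtt{aab}$: the forced $e_2=(2,4,6)$ leaves $\mathtt{aba}$, and \Cref{algogreedy} returns \texttt{False} at the third letter.

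\textbf{What your argument does support.} Steps 1 and 2 justify the DP you mention as a fallback, made precise. If $(a,b)\ge(a',b')$ componentwise, any completion from $(a',b')$ in the remaining suffix restricts to one from $(a,b)$. Hence, for each prefix of $w$, it suffices to keep the Pareto-maximal pairs of matched lengths, of which there are at most $|u|+1$. This gives a correct $O(|w|\,|u|)$ decision procedure. Reaching $O(|w|)$ requires an idea that neither your proposal nor the paper supplies.
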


%% file: perfectshuffle.tex
\begin{theorem}\label{thm:perfectshuffle}
Let $w\in\Sigma^{\ast}$ and $u\in\ScatFact(w)$. Then we have
$1=|C(w,u)|$ and $u\in C(w,u)$ iff $w\in u\perfectshuffle u$.
\end{theorem}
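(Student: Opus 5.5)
The plan is to prove the two implications separately, with the forward direction short and the backward direction carrying almost all the work. Throughout let $n=|u|$, so $|w|=2n$ in both directions: on the left because $u\in C(w,u)$ forces $|w|=2|u|$, on the right by the definition of $\perfectshuffle$.

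\emph{($\Rightarrow$).} Since $u\in C(w,u)$, we have $w\in u\shuffle u$, so by \Cref{lem:fst_snd_occurence} there are disjoint exhaustive embeddings of $u$. I will argue by contraposition. Suppose $w\notin u\perfectshuffle u$ and choose $i$ minimal with $w[i]\neq u[\lceil i/2\rceil]$. Minimality gives $w[1..i-1]=u[1]u[1]\cdots$ up to position $i-1$. I then build an embedding $e$ of $u$ greedily, with the aim that its complement $v$ has a prefix of the form $u[1..\lceil i/2\rceil-1]\,w[i]$: $e$ takes one letter of each of the first $\lceil i/2\rceil-1$ pairs, and for the letter at position $i$ I choose whichever of $u$ or $v$ makes $v[\lceil i/2\rceil]=w[i]$, completing $e$ using $u\in\ScatFact(w)$. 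Then $v\neq u$, contradicting $C(w,u)=\{u\}$. The delicate point is that this completion really exists. I would secure it by taking an arbitrary disjoint pair for $u\shuffle u$ from \Cref{lem:fst_snd_occurence} and, if necessary, swapping which copy is called $e$ at the first $i$ positions.

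\emph{($\Leftarrow$).} Here $w=u[1]u[1]u[2]u[2]\cdots u[n]u[n]$. Taking $e(k)=2k-1$ shows $u\in C(w,u)$, so it remains to show that every $v\in C(w,u)$ equals $u$. The naive claim that every embedding satisfies $e(k)\in\{2k-1,2k\}$ is false: for $u=\ta\ta\tb$ and $w=\ta^4\tb^2$, the embedding $(1,2,5)$ violates it. I will therefore work with complement words rather than embeddings. Fix disjoint exhaustive embeddings $e$ of $u$ and $\overline e$ of $v$, and suppose $v\neq u$. Let $j$ be minimal with $v[j]\neq u[j]$, and put $\ta:=u[j]$.

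The tool is the letter-count identity $|w[1..2k]|_{\tb}=2\,|u[1..k]|_{\tb}$, valid for every letter $\tb$ and every $k$, together with the fact that $|v|_{\tb}=|u|_{\tb}$ for all $\tb$, which follows from exhaustiveness. I will consider the position $p=\overline e(j)$ and count occurrences of $\ta$ and of $v[j]$ in $w[1..p]$. On one side are the contributions from $u[1..e^{-1}\text{-prefix}]$ and $v[1..j]$; on the other is the identity above applied at $k=\lceil p/2\rceil$. The goal is to show that $v[j]$ cannot occur before the $j$-th $\ta$-slot unless $u$ already carries an extra copy there, and to derive a contradiction by comparing these counts with the orders $e(1)<e(2)<\cdots$ and $\overline e(1)<\overline e(2)<\cdots$.

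I expect this counting and exchange step to be the main obstacle, because the interleaving of $e$ and $\overline e$ can be irregular inside runs of equal letters, as the example above shows. The fallback is an induction on the run-length factorisation $u=\ta^{r}u'$ with $u'[1]\neq\ta$. In that case $w=\ta^{2r}w'$ with $w'=u'\perfectshuffle u'$. I will show that one can modify $e$ and $\overline e$, by exchanging equal letters, so that together they use exactly the first $2r$ positions for the $\ta$-prefixes of $u$ and $v$. This uses $v[1..r]=\ta^r$, which follows from the prefix counts, and it does not change $v$. The induction hypothesis applied to $w'$ and $u'$ then gives $v=\ta^r u'=u$.
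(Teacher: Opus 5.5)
\textbf{There is a genuine gap in ($\Leftarrow$).} The counting paragraph only states a goal. The fallback's key step does not work: you propose to exchange equal letters so that $e(1..r)\cup\overline e(1..r)=[1..2r]$ without changing $v$. Since $u[r+1]\neq\ta$, the embedding $e$ uses at most $r$ of the positions $1,\dots,2r$. So the only case needing repair is that it uses $p<r$ of them. Then $e(p+1)$ is an $\ta$ beyond position $2r$, hence beyond the whole first $u'[1]$-block of $w'$, and that block belongs entirely to $\overline e$. In this case $v$ begins with $\ta^{2r-p}$, so $v[r+1]=\ta\neq u[r+1]$ already. Swapping $e(p+1)$ with one of $\overline e$'s early $\ta$-positions moves an $\ta$ of $v$ behind that block, so $v$ \emph{does} change. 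No exchange can dispose of this case; you have to show it never occurs, and nothing in your proposal does.

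\textbf{The missing idea is a run argument.} Write $u=x_1^{k_1}\cdots x_\ell^{k_\ell}$ with $x_j\neq x_{j+1}$, so $w=x_1^{2k_1}\cdots x_\ell^{2k_\ell}$ and $\cond(u)=\cond(w)$. For any embedding $e$ of $u$, consider the index of the run of $w$ containing $e(k)$. It is weakly increasing in $k$, strictly increases at each of the $\ell-1$ letter changes of $u$, and stays in $[\ell]$. Hence it starts at $1$ and rises by exactly one at each change. So every embedding takes exactly $k_j$ letters from $x_j^{2k_j}$, its complement is $x_1^{k_1}\cdots x_\ell^{k_\ell}=u$, and $C(w,u)=\{u\}$ follows directly. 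No induction or exchange is needed, and this is also far shorter than the paper's index-chasing argument.

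\textbf{A smaller point on ($\Rightarrow$).} Your plan there is the paper's, but the completion you flag as delicate needs an explicit count, not just $u\in\ScatFact(w)$ or a relabelling of copies. Let $m=\lceil i/2\rceil$ and take a disjoint pair $e_1,e_2$.
If $i=2m-1$, the copies cannot both use exactly $m-1$ of the positions $1,\dots,2m-2$, since otherwise $w[2m-1]=u[m]$. So one uses at most $m-2$, and its tail embeds $u[m..n]$ into $w[2m..2n]$.
If $i=2m$, one copy uses at most $m-1$ of the positions $1,\dots,2m-1$, and its tail embeds $u[m+1..n]$ into $w[2m+1..2n]$.
Gluing that tail to $e(k)=2k-1$ for $k<m$ (resp.\ $k\le m$) gives an embedding whose complement starts with $u[1..m-1]\,w[i]\neq u[1..m]$.
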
